\newcommand*{\RR}{\mathbb{R}}
\newcommand{\HorD}{\mathcal{H}}
\newcommand{\VertD}{\mathcal{V}}
\newcommand{\pHor}{\pi_\HorD}
\newcommand{\pVert}{\pi_\VertD}
\newcommand{\Reeb}{\mathcal{R}}
\newcommand{\lsharp}{\sharp_\Lambda}
\newcommand{\dhbot}{\bot_{d\eta}}
\numberwithin{equation}{section}
\newtheorem{theorem}{Theorem}
\newtheorem{proposition}{Proposition}
\newtheorem{corollary}{Corollary}
\newtheorem{lemma}{Lemma}
\theoremstyle{definition}
\newtheorem{definition}{Definition}
\theoremstyle{remark}
\newtheorem{remark}{Remark}
\newtheorem{example}{Example}
\newcommand{\im}{\mathrm{im}}
\newcommand{\dv}[2]{\frac{d#1}{d#2}}
\newcommand{\pdv}[2]{\frac{\partial #1}{\partial #2}}
\newcommand{\ann}[1]{{#1}^\circ}
\newcommand{\pr}{\mathrm{pr}}
\newcommand*{\orth}[1]{{#1}^{\bot}}
\newcommand*{\orthL}[1]{{}^{\bot}{#1}}
\newcommand*{\lieD}[1]{\mathcal{L}_{#1}}
\newcommand*{\NHBr}[1]{\{#1\}_{L,\Delta}}
\title[A review on contact systems]
      {A review on contact Hamiltonian and Lagrangian systems}
\author[M. de Le\'on and M. Lainz]{}
\email{mdeleon@icmat.es,  manuel.lainz@icmat.es}
\subjclass{37J55, 70H20, 37J60, 70H45, 70H33, 53D20}
 \keywords{contact systems, Herglotz principle, contact reduction. Hamilton-Jacobi theory, non-holonomic systems, Noether theorem}
\thanks{\noindent We acknowledge the financial support from the MINECO Grant MTM2016-76-072-P, and the ICMAT Severo Ochoa projects SEV-2011-0087 and SEV-2015-0554. Manuel Laínz wishes to thank MICINN and ICMAT for a FPI-Severo Ochoa predoctoral contract PRE2018-083203.}
\begin{document}

\maketitle

\centerline{\scshape Manuel de Le\'on}
\medskip
{\footnotesize
 \centerline{C/ Nicol\'as Cabrera, 13--15, 28049, Madrid. SPAIN }
   \centerline{Instituto de Ciencias Matemáticas and Real Academia Espa\text{\~{n}}ola de Ciencias}
 } %

\medskip

\centerline{\scshape Manuel Lainz}
\medskip
{\footnotesize
 \centerline{C/ Nicol\'as Cabrera, 13--15, 28049, Madrid. SPAIN }
   \centerline{Instituto de Ciencias Matemáticas}
 } %

\begin{abstract}
Contact Hamiltonian dynamics is a subject that
has still a short history, but with relevant applications in many areas: thermodynamics, cosmology, 
control theory, and neurogeometry, among others. In recent years there has been a great effort to study this type of dynamics both in theoretical 
aspects and in its potential applications in geometric mechanics and mathematical physics. 
This paper is intended to be a review of some of the results that the authors and their collaborators have recently obtained on the subject.
\end{abstract}

\tableofcontents

\section{Introduction}

Contact geometry is a topic of great interest in Differential Geometry, but it has been revealed
relevant in the last years due to its applications to describe mechanical dissipative systems, both in the Hamiltonian and Lagrangian descriptions. Some of these applications are in thermodynamics (both reversible~\cite{Mrugala1991}, and, more recently, irreversible~\cite{Grmela2014,Eberard2007,Simoes2020b}), statistical mechanics~\cite{Bravetti2016},
control theory~\cite{deLeon2020a}, neurogeometry, economics, cosmology, among others.

Indeed, contact Hamiltonian mechanics is related with the work of G. Herglotz~\cite{Herglotz1930} almost 90 years ago, who used a generalization of the well-known Hamilton principle (that includes to solve an implicit differential equation before to define the action) that, almost miraculously, provides the same equations that we can obtain using contact geometry~\cite{Georgieva2011,deLeon2019a}. 

In the Hamiltonian picture, the setting is the extended cotangent bundle $T^*Q \times {\mathbb R}$ equipped with
its canonical contact form $\eta = dz - p_i dq^i$, where $(q^i, p_i, z)$ are bundle coordinates. Given
a Hamiltonian function $H : T^*Q \times {\mathbb R} \to {\mathbb R} $ the contact Hamilton equations are

\begin{eqnarray}
\frac{dq^i}{dt} & = & \frac{\partial H}{\partial p_i}, \\
\frac{dp_i}{dt} & = & - (\frac{\partial H}{\partial q^i} +  p_i \frac{\partial H}{\partial z}),\\
\frac{dz}{dt} & = & (p_i \frac{\partial H}{\partial p_i} - H).
\end{eqnarray}

On the other hand, given a Lagrangian function $L : TQ \times {\mathbb R} \to {\mathbb R}$
on the extended tangent bundle $T^*Q \times {\mathbb R}$ we obtain (using the Herglotz principle) the
contact Lagrangian equations
\begin{equation}
\frac{d}{dt} (\frac{\partial L}{\partial \dot{q}^i}) - \frac{\partial L}{\partial q^i} =
\frac{\partial L}{\partial \dot{q}^i} \frac{\partial L}{\partial z},
\end{equation}
where $(q^i, \dot{q}^i, z)$ are bundle coordinates. Of course, both equations are related
through the Legendre transform (we will assume that $L$ is regular).

In this paper we present a survey on some of the recent developments on
contact Hamiltonian and Lagrangian mechanics of our group of research. It is not an exhaustive
account of all these results, but only some of them. So, after introducing the main aspects of contact
Hamiltonian systems (Section 2), and contact Lagrangian systems (Section 3), and the common description
of symplectic and contact structures
under the framework of Jacobi structures (Section 4), we
discuss the following subjects:

\begin{enumerate}

\item The role of submanifolds in contact Hamiltonian systems and the interpretation of its
dynamics as Legendrian submanifolds. A coisotropic reduction theorem is also introduced.

\item The extension of the notion of momentum map to this scenario as well as the corresponding reduction theorem,
in the same vein as in the symplectic case.

\item A relevant subject in dynamics is the relation between symmetries and conserved quantities via the different
generalizations of Noether theorem. We extend the well-known results in symplectic mechanics to contact dynamics,
but in the latter case we obtain dissipated quantities instead of conserved ones.

\item The Hamilton-Jacobi theory is also explored.

\item We also consider the case of singular Lagrangian systems, and obtain
a constraint algorithm that provides a Jacobi bracket on the final constraint
submanifold, that we call Dirac-Jacobi bracket.

\item A new subject is the contact description of nonholonomic mechanical systems,
that allows us to consider such systems when some kind of dissipation is considered.
The corresponding nonholonomic bracket is constructed (indeed, it is an almost Jacobi bracket).

\end{enumerate}

Finally, we list a series of subjects that have been also studied in these last two years as well as others
thar are being now investigated.

\section{Contact Hamiltonian systems}

In this section we will recall the three main geometric structures~\cite{deLeon2017} involved in the description of Hamiltonian dynamics.

\subsection{Symplectic Hamiltonian systems}

As it is well known, Hamiltonian dynamics are developed using symplectic geometry~\cite{Arnold1997,deLeon2011,Abraham1978}. Indeed, let $(M, \omega)$ be a symplectic manifold, that is, 
$\omega$ is a non--degenerate closed 2-form, say $d \omega = 0$ and $\omega^n \not= 0$, where $M$ has even dimension $2n$.
Then, if $H : M \to \RR$ is a Hamiltonian function, the Hamiltonian vector field $X_H$
is obtained using the equation
\begin{equation}\label{hsymp}
\flat (X_H) = dH,
\end{equation}
where $\flat$ is the vector bundle isomorphism
$$
\flat : TM \to T^* M \; , \; \flat(v) = i_v \, \omega.
$$
In Darboux coordinates $(q^i, p_i)$ we have
$\omega = dq^i \wedge dp_i$ and
$$
X_ H = \frac{\partial H}{\partial p_i} \frac{\partial}{\partial q^i} - 
\frac{\partial H}{\partial q^i} \frac{\partial}{\partial p_i}.
$$
In such a way that an integral curve $(q^i(t), p_i(t))$ satisfies the Hamilton equations
\begin{equation}\label{hsympl2}
\frac{dq^i}{dt} = \frac{\partial H}{\partial p_i} \; , \; 
\frac{dp_i}{dt} = - \frac{\partial H}{\partial q^i}.
\end{equation}

\subsection{Cosymplectic Hamiltonian systems}\label{sec:cosymplectic_hamiltonian_systems}

A cosymplectic structure on an odd-dimensional manifold~\cite{deLeon2017,Cantrijn1992} $M$ is a pair $(\Omega, \eta)$ where
$\Omega$ is a closed 2-form, $\eta$ is a closed 1-form, and
$\eta \wedge \Omega^n \not= 0$; here, $M$ has dimension $2n+1$.
$(M, \Omega, \eta)$ will be called a cosymplectic manifold.

There is a Darboux theorem for a cosymplectic manifold, that is,
there are local coordinates (called Darboux coordinates) $(q^i, p_i, z)$ around any point of $M$
such that
$$
\Omega = dq^i \wedge dp_i \; , \; \eta = dz.
$$
There also exists a unique vector field (called Reeb vector field) $\mathcal R$ such that
$$
i_{\mathcal R} \, \Omega = 0 \; , \; i_{\mathcal R}\, \eta = 1.
$$
In Darboux coordinates we have
$$
\mathcal R = \frac{\partial}{\partial z}
$$

Let $H : M \to \RR$ be a Hamiltonian function, say $H = H(q^i, p_i, z)$.

Consider the vector bundle isomorphism
$$
\tilde{\flat} : TM \to T^* M \; , \; \flat(v) = i_v \, \Omega + \eta (v) \, \eta
$$
and define the gradient of $H$ by
$$
\tilde{\flat}({\rm grad} \; H) = dH.
$$
Then
\begin{equation}\label{hcosymp}
{\rm grad} \; H = \frac{\partial H}{\partial p_i} \frac{\partial}{\partial q^i} - 
\frac{\partial H}{\partial q^i} \frac{\partial}{\partial p_i} + \frac{\partial H}{\partial z} \, \frac{\partial}{\partial z}.
\end{equation}

Next we can define two more vector fields:

\begin{itemize}

\item The Hamiltonian vector field 
$$
X_H = {\rm grad} \; H - \mathcal R (H) \mathcal R \; ,
$$

\item and the evolution vector field
$$
{\mathcal E}_H = X_H + {\mathcal R}.
$$
\end{itemize}

From (\ref{hcosymp}) we obtain the local expression

\begin{equation}\label{hcosymp2}
{\mathcal E}_H = \frac{\partial H}{\partial p_i} \frac{\partial}{\partial q^i} - 
\frac{\partial H}{\partial q^i} \frac{\partial}{\partial p_i} +  \frac{\partial}{\partial z}.
\end{equation}
Therefore, an integral curve $(q^i(t), p_i(t), z(t))$ of ${\mathcal E}_H$ satisfies the 
time-dependent Hamilton equations

\begin{eqnarray}\label{hsympl3}
\frac{dq^i}{dt} & = & \frac{\partial H}{\partial p_i}, \\
\frac{dp_i}{dt} & = & - \frac{\partial H}{\partial q^i},\\
\frac{dz}{dt} & = & 1,
\end{eqnarray}
and then $z=t+const$ so that both coordinates can be identified.

\subsection{Contact Hamiltonian systems}\label{sec:contact_hamiltonian_systems}

Consider now a contact manifold~\cite{deLeon2017,deLeon2019a,Bravetti2017} $(M, \eta)$ with contact form $\eta$; this means that
$\eta \wedge d\eta^n \not= 0$ and $M$ has odd dimension $2n+1$.
Again there exists a unique vector field $\mathcal R$ (also called Reeb vector field) such that
$$
i_{\mathcal R} \, d\eta = 0 \; , \; i_{\mathcal R}\, \eta = 1.
$$

There is a Darboux theorem for contact manifolds so that around each point in $M$ one can find local coordinates 
(called Darboux coordinates) $(q^i, p_i, z)$ 
such that
\begin{equation}
    \eta = dz - p_i \, dq^i
\end{equation}
and we have
\begin{equation}
    \mathcal R = \frac{\partial}{\partial z}.
\end{equation}

Define now the vector bundle isomorphism
\begin{equation}\label{eq:contact_iso}
    \bar{\flat} : TM \to T^* M \; , \; \bar{\flat}(v) = i_v \, d\eta + \eta (v) \, \eta
\end{equation}

For a Hamiltonian function $H$ on $M$ we define the Hamiltonian vector field by
\begin{equation}\
    \bar{\flat} (X_H) = dH - (\mathcal R (H) + H) \, \eta
\end{equation}

In Darboux coordinates we get this local expression

\begin{equation}\label{hcont2}
X_H = \frac{\partial H}{\partial p_i} \frac{\partial}{\partial q^i} - 
(\frac{\partial H}{\partial q^i} + p_i \frac{\partial H}{\partial z} \,  \frac{\partial}{\partial p_i} + 
(p_i \frac{\partial H}{\partial p_i} - H) \, \frac{\partial}{\partial z}.
\end{equation}
Therefore, an integral curve $(q^i(t), p_i(t), z(t))$ of $X_H$ satisfies the 
dissipative Hamilton equations

\begin{eqnarray}\label{hcont3}
\frac{dq^i}{dt} & = & \frac{\partial H}{\partial p_i}, \\
\frac{dp_i}{dt} & = & - (\frac{\partial H}{\partial q^i} +  p_i \frac{\partial H}{\partial z}),\\
\frac{dz}{dt} & = & (p_i \frac{\partial H}{\partial p_i} - H).
\end{eqnarray}

\begin{remark}{
Let us say some words to the term dissipative used in this paper. Consider a Hamiltonian system given by the Hamiltonian

$$
H(q,p,z) = \frac{p^2}{2m} \, + V(q) + \gamma \, z
$$

\noindent where $\gamma$ is a constant. This Hamiltonian corresponds to a system with a friction force that depends linearly on the velocity (in our case, on the momenta).

If we apply the contact Hamiltonian mechanism, we obtain the following dynamical equations
\begin{eqnarray*}
&&\dot{q} = \frac{p}{m},\\
&&\dot{p} = - \frac{\partial V}{\partial q} - \gamma \, z,\\
&&\dot{z} = \frac{p^2}{2m} \, - V(q) - \gamma \ z,
\end{eqnarray*}
that are just the damped Newtonian equations. In this sense, dissipation is described by contact Hamiltonian systems, but the theory is even more general.
}
\end{remark}

\begin{remark}{ 
As one can easily see, the contact Hamilton equations are so far to be considered as a simple odd-dimensional counterpart of the symplectic ones. }
\end{remark}

\section{Contact Lagrangian systems}

\subsection{The geometric setting}

Let $L : TQ \times \mathbb{R}\to \RR$ be a Lagrangian function, 
where $Q$ is a configuration $n$-dimensional manifold. Then, $L = L(q^i, \dot{q}^i, z)$, where
$(q^i)$ are coordinates in $Q$, $(q^i, \dot{q}^i)$ are the induced bundle coordinates in $TQ$
and $z$ is a global coordinate in $\RR$.

We will assume that $L$ is regular, that is, the Hessian matrix
$$
\left( \frac{\partial^2 L}{\partial \dot{q}^i \partial \dot{q}^j} \right)
$$
is regular.

From $L$, and using the canonical endomorphism $S$ on $TQ$ locally defined by
$$
S = d q^i \otimes \frac{\partial}{\partial \dot{q}^i}
$$
one can construct a 1-form $\lambda_L$ given by
$$
\lambda_L = S^* (dL),
$$
where now $S$ and $S^*$ are the natural extension of $S$ and its adjoint operator $S^*$ to $TQ \times \RR$.

Therefore, we have
$$
\lambda_L = \frac{\partial L}{\partial \dot{q}^i} \, dq^i
$$
Now, the 1-form
$$
\eta_L = dz -  \frac{\partial L}{\partial \dot{q}^i} \, dq^i .
$$
is a contact form on $TQ \times \RR$ if and only if $L$ is regular; indeed,
if $L$ is regular, then
$$
\eta_L \wedge (d\eta_L)^n \not= 0,
$$ 
and conversely. From now on, we always assume that it is the case. The corresponding Reeb vector field is
$$
{\mathcal R}_L = \frac{\partial}{\partial z} - W^{ij} \frac{\partial^2 L}{\partial \dot{q}^j \partial z} \, \frac{\partial}{\partial \dot{q}^i} ,
$$
where $(W^{ij})$ is the inverse matrix of the Hessian $(W_{ij})$.
The energy of the system is defined by 
$$
E_L = \Delta (L) - L,
$$
where $\Delta = \dot{q}^i \, \frac{\partial}{\partial \dot{q}^i}$ is the Liouville vector field on $TQ$ extended in the usual way to $TQ \times \RR$. Therefore,
$$
E_L = \dot{q}^i \, \frac{\partial L}{\partial \dot{q}^i} - L.
$$

Denote by
$$
\flat_L : T(TQ \times \RR ) \to T^* (TQ \times \RR )
$$
the vector bundle isomorphism
$$
\flat_L (v) = i_v (d\eta_L) + (i_v \eta_L) \, \eta_L
$$
given by the contact form $\eta_L$ on $TQ \times \RR$.
We shall denote its inverse by $\sharp_L = (\flat_L)^{-1}$.

Let $\bar{\xi}_L$ be the unique vector field defined by the equation
\begin{equation}\label{clagrangian1}
\flat_L (\bar{\xi}_L) = dE_L - (\mathcal R_L E_L) + E_L) \, \eta_L.
\end{equation}

A direct computation from eq. (\ref{clagrangian1}) shows that $\bar{\xi}_L$ is locally given by
\begin{equation}\label{clagrangian2}
\bar{\xi}_L = \dot{q}^i \, \frac{\partial}{\partial q^i} + {\mathcal B}^i \,\frac{\partial}{\partial \dot{q}^i} 
+ (L - \dot{q}^i\frac{\partial}{\partial z}(\frac{\partial L}{\partial \dot{q}^i})) \, \frac{\partial}{\partial z},
\end{equation}
where the components ${\mathcal B^i}$ satisfy the equation
\begin{equation}\label{clagrangian3}
{\mathcal B}^i \, \frac{\partial}{\partial \dot{q}^i}(\frac{\partial L}{\partial \dot{q}^j}) 
+ \dot{q}^i \, \frac{\partial}{\partial q^i}(\frac{\partial L}{\partial \dot{q}^j}) 
+ (L - \dot{q}^i\frac{\partial}{\partial z}(\frac{\partial L}{\partial \dot{q}^i})) - \frac{\partial L}{\partial q^i} =
\frac{\partial L}{\partial \dot{q}^i} \frac{\partial L}{\partial z}
\end{equation}

Then, if $(q^i(t), \dot{q}^i(t), z(t))$ is an integral curve of $\bar{\xi}_L$, and substituting its values in eq. (\ref{clagrangian3}) we obtain
$$
{\ddot{q}}^i \, \frac{\partial}{\partial \dot{q}^i}(\frac{\partial L}{\partial \dot{q}^j}) 
+ \dot{q}^i \, \frac{\partial}{\partial q^i}(\frac{\partial L}{\partial \dot{q}^j}) 
+  \dot{z} \frac{\partial}{\partial z}(\frac{\partial L}{\partial \dot{q}^i})) - \frac{\partial L}{\partial q^i} =
\frac{\partial L}{\partial \dot{q}^i} \frac{\partial L}{\partial z},
$$
which corresponds to the generalized Euler-Lagrange equations considered by G. Herglotz in 1930.
\begin{equation}\label{clagrangian4}
\frac{d}{dt} (\frac{\partial L}{\partial \dot{q}^i}) - \frac{\partial L}{\partial q^i} =
\frac{\partial L}{\partial \dot{q}^i} \frac{\partial L}{\partial z}.
\end{equation}

\subsection{Variational formulation of contact Lagrangian mechanics: Herglotz principle}\label{sec:Herglotz_principle}

Let $L:TQ \times \mathbb{R}\to \RR$ be a Lagrangian function. In this subsection we will recall the so-called Herglotz's principle~\cite{Georgieva2003,Herglotz1930,deLeon2019a}, a modification of Hamilton's principle that allows us to obtain Herglotz's equations, sometimes called generalized Euler-Lagrange equations. 

Fix $q_1,q_2 \in Q$ and an interval $[a,b] \subset \RR$. We denote by $\Omega(q_1,q_2, [a,b]) \subseteq({\mathcal{C}}^\infty([a,b]\to Q))$ the space of smooth curves $\xi$ such that $\xi(a)=q_1$ and $\xi(b)=q_2$. This space has the structure of an infinite dimensional smooth manifold whose tangent space at $\xi$ is given by the set of vector fields over $\xi$ that vanish at the endpoints, that is,
\begin{equation}
\begin{aligned}
        T_\xi \Omega(q_1,q_2, [a,b]) =  \{&
            v_\xi \in {\mathcal{C}}^\infty([a,b] \to TQ) \mid \\& 
            \tau_Q \circ v_\xi = \xi, \,v_\xi(a)=0, \, v_\xi(b)=0 
            \}.
\end{aligned}
\end{equation}

We will consider the following maps. Fix $c \in \RR$. Let 
\begin{equation}
    \mathcal{Z}:\Omega(q_1,q_2, [a,b])  \to {\mathcal{C}}^\infty ([a,b] \to \RR )
\end{equation}
 be the operator that assigns to each curve $\xi$ the curve $\mathcal{Z}(\xi)$ that solves the following ODE:
\begin{equation}\label{contact_var_ode}
    \frac{d \mathcal{Z}(\xi)(t)}{dt} = L(\xi(t), \dot \xi(t), \mathcal{Z}(\xi)(t)), \quad \mathcal{Z}(\xi)(a)= c.
\end{equation}

Now we define the \emph{action functional} as the map which assigns to each curve the solution to the previous ODE evaluated at the endpoint:
\begin{equation}\label{contact_action}
    \begin{aligned}
        \mathcal{A}: \Omega(q_1,q_2, [a,b]) &\to \RR ,\\
        \xi &\mapsto \mathcal{Z}(\xi)(b),
    \end{aligned}
\end{equation}
that is, $\mathcal{A} = ev_b \circ \mathcal{Z}$, where $ev_b: \zeta \mapsto \zeta(b)$ is the evaluation map at $b$.

\begin{theorem}(Contact variational principle)
    Let $L: TQ \times \mathbb{R}\to \RR$ be a Lagrangian function and let $\xi\in  \Omega(q_1,q_2, [a,b])$ be a curve in $Q$. Then, $(\xi,\dot\xi, \mathcal{Z}(\xi))$ satisfies Herglotz's equations  if and only if $\xi$ is a critical point of $\mathcal{A}$.
\end{theorem}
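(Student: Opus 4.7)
The plan is to compute the first variation of $\mathcal{A}$ along a one-parameter family of curves in $\Omega(q_1,q_2,[a,b])$ and compare the resulting Euler-Lagrange-type expression with Herglotz's equation \eqref{clagrangian4}. Let $\xi_s$ be a smooth variation with $\xi_0=\xi$ and variational field $v\in T_\xi\Omega(q_1,q_2,[a,b])$, so that $v(a)=v(b)=0$. Write $Z_s(t)=\mathcal{Z}(\xi_s)(t)$ and set $\psi(t):=\partial_s Z_s(t)|_{s=0}$. Differentiating \eqref{contact_var_ode} in $s$ at $s=0$ and using the initial condition $Z_s(a)=c$ gives the linear inhomogeneous ODE
\begin{equation*}
\dot\psi(t)=\frac{\partial L}{\partial z}\psi(t)+\frac{\partial L}{\partial q^i}v^i(t)+\frac{\partial L}{\partial \dot q^i}\dot v^i(t),\qquad \psi(a)=0,
\end{equation*}
where all partial derivatives of $L$ are evaluated along $(\xi(t),\dot\xi(t),\mathcal{Z}(\xi)(t))$.

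The key idea will be to introduce the integrating factor $\mu(t):=\exp\!\bigl(-\int_a^t \tfrac{\partial L}{\partial z}\,d\tau\bigr)$, which satisfies $\dot\mu=-\mu\,\tfrac{\partial L}{\partial z}$ and $\mu(a)=1$. Multiplying the ODE by $\mu$ collapses the left-hand side to $\tfrac{d}{dt}(\mu\psi)$, and integrating from $a$ to $b$ yields
\begin{equation*}
\mu(b)\psi(b)=\int_a^b \mu(t)\left(\frac{\partial L}{\partial q^i}v^i+\frac{\partial L}{\partial \dot q^i}\dot v^i\right)\,dt.
\end{equation*}

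Next I would integrate by parts in the $\dot v^i$ term; since $v$ vanishes at the endpoints the boundary contribution disappears, leaving
\begin{equation*}
\mu(b)\psi(b)=\int_a^b \mu(t)\left(\frac{\partial L}{\partial q^i}-\mu^{-1}\frac{d}{dt}\Bigl(\mu\,\frac{\partial L}{\partial \dot q^i}\Bigr)\right)v^i\,dt.
\end{equation*}
Expanding the inner derivative with $\dot\mu/\mu=-\partial L/\partial z$ turns this into
\begin{equation*}
\mu(b)\psi(b)=\int_a^b \mu(t)\left(\frac{\partial L}{\partial q^i}+\frac{\partial L}{\partial z}\frac{\partial L}{\partial \dot q^i}-\frac{d}{dt}\frac{\partial L}{\partial \dot q^i}\right)v^i\,dt.
\end{equation*}
Since $\delta\mathcal{A}(\xi)\cdot v=\psi(b)$ and $\mu(b)>0$, the curve $\xi$ is critical precisely when the integral vanishes for every admissible $v$. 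By the fundamental lemma of the calculus of variations this happens if and only if the coefficient of $v^i$ is identically zero, which is exactly Herglotz's equation \eqref{clagrangian4} along $(\xi,\dot\xi,\mathcal{Z}(\xi))$.

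The only subtle step is the handling of the $\tfrac{\partial L}{\partial z}\,\psi$ term in the variational ODE: without the integrating factor one cannot cleanly isolate $\psi(b)$ as an integral linear in $v$, so the Herglotz correction $\tfrac{\partial L}{\partial z}\tfrac{\partial L}{\partial \dot q^i}$ would not appear in the expected form. Everything else is a routine variation argument mirroring the proof of the classical Hamilton principle.
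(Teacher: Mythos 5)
Your proof is correct and is essentially the standard argument: the paper states this theorem without proof, deferring to the cited references, where it is established by exactly this computation — differentiate the defining ODE \eqref{contact_var_ode} along the variation to get the linear inhomogeneous equation for $\psi$, solve it with the integrating factor $\mu(t)=\exp\bigl(-\int_a^t \partial L/\partial z\,d\tau\bigr)$, integrate by parts, and invoke the fundamental lemma. The only points worth making explicit are that $T_\xi\mathcal{A}(v)=\psi(b)$ rests on the (standard) smooth dependence of solutions of \eqref{contact_var_ode} on parameters, and that the strict positivity of $\mu$ on $[a,b]$ is what allows the fundamental lemma to be applied to the weighted integrand.
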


\begin{remark}{\rm
    This theorem generalizes Hamilton's Variational Principle. In the case that the Lagrangian is independent of the $\RR$ coordinate, i.e., $L(x,y,z) \linebreak = \hat{L}(x,y)$), and then  
the contact Lagrange equations reduce to the usual Euler-Lagrange equations. In this situation, we can integrate the ODE of (\ref{contact_action}) and we get
    \begin{equation}
        \mathcal{A}(\xi) = \int_a^b \hat L(\xi(t),\dot\xi(t))dt + \frac{c}{b-a},
    \end{equation}
    that is, the usual Euler-Lagrange action up to a constant.}
\end{remark}

\begin{remark}{ 

We will recall here the geometric formalism for time-dependent Lagrangian systems, just to show the differences with the previous contact formalism.
In this case, we also have a regular Lagrangian $L : TQ \times \mathbb{R}\to \RR$,
but instead to consider the contact 1-form $\eta_L$ we will consider the cosymplectic structure given by the pair
$(\Omega_L, dz)$, where
$$
\Omega_L = - d \lambda_L .
$$
It is easy to check that, indeed, if $L$ is regular then
$$
dz \wedge \Omega_L^n \not= 0,
$$
and conversely. Again, we have a Reeb vector field 
$$
\mathcal R_L = \frac{\partial}{\partial z} - W^{ij} \frac{\partial^2 L}{\partial \dot{q}^j \partial z} \, \frac{\partial}{\partial \dot{q}^i}.
$$

Consider now the following vector fields determined by means of the vector bundle isomorphism
\begin{eqnarray*}
&&\widetilde{\flat_L}  :  T(TQ \times \RR ) \to T^*(TQ \times \RR )\\
&&\widetilde{\flat_L} (v) = i_v \, \Omega_L + dz (v) \, dz
\end{eqnarray*}
say,
\begin{enumerate}
\item the gradient vector field
$$
{\rm grad} \; (E_L) = \widetilde{\sharp_L} (dE_L),
$$
\item the Hamiltonian vector field
$$
X_{E_L} = {\mathcal E}_L - \mathcal R (E_L) \, \mathcal R_L,
$$
\item and the evolution vector field
$$
{\mathcal E}_L = X_{E_L} + \mathcal R_L,
$$
\end{enumerate}
where $\widetilde{\sharp_L} = (\widetilde{\flat_L})^{-1}$ is the inverse of $\widetilde{\flat_L}$.

The evolution vector field ${\mathcal E}_L$ is locally given by
\begin{equation}\label{cosylagr1}
{\mathcal E}_ L = \dot{q}^i \, \frac{\partial}{\partial q^i} + B^i \, \frac{\partial}{\partial \dot{q}^i} + \frac{\partial}{\partial z},
\end{equation}
where
\begin{equation}\label{cosylagr2}
B^i \, \frac{\partial}{\partial \dot{q}^i}(\frac{\partial L}{\partial \dot{q}^j}) + \dot{q}^i \, 
\frac{\partial}{\partial q^i}(\frac{\partial L}{\partial \dot{q}^j}) - \frac{\partial L}{\partial q^j} = 0.
\end{equation}
Now, if $(q^i(t), \dot{q}^i(t), z(t))$ is an integral curve of ${\mathcal E}_L$ then it satisfies the 
usual Euler-Lagrange equations
\begin{equation}\label{cosylagr3}
\frac{d}{dt} \left(\frac{\partial L}{\partial \dot{q}^i} \right) - \frac{\partial L}{\partial q^i} = 0,
\end{equation}
since $z = t + constant$. 
}
\end{remark}

\subsection{The Legendre transformation and the Hamiltonian counterpart}

\subsubsection{The classical Hamiltonian geometric setting}

Let $H : T^*Q \times \RR \to \RR$ be a Hamiltonian function, say
$H = H(q^i, p_i, z)$ where $(q^i, p_i, z)$  are bundle coordinates in $T^*Q \times \RR$.
Consider the 1-form
$$
\eta = dz - \theta_Q,
$$
where $\theta_Q$ is the canonical Liouville form on $T^*Q$ and we are considering
the usual identifications for a form on $T^*Q$ or $\RR$ and its pull-back to $T^*Q \times \RR$.
In local coordinates, we have
$$
\eta = dz - p_i \, dq^i.
$$
So, $\eta$ is a contact form on $T^*Q \times \RR$ and $(q^i, p_i, z)$ are Darboux coordinates.
Therefore, we can obtain a Hamiltonian vector field $X_H$ which locally takes the same form that above.

\subsubsection{The Legendre transformation}

Given a Lagrangian function $L : TQ \times \mathbb{R}\to \RR$ we can define the Legendre transformation 
$$
FL : TQ \times \mathbb{R}\to T^*Q \times \RR,
$$
given by
$$
FL (q^i, \dot{q}^i, z) = (q^i, \hat{p}_i, z),
$$
where
$$
\hat{p}_ i = \frac{\partial L}{\partial \dot{q}^i}.
$$
A direct computation shows that
$$
FL ^* \eta = \eta_L,
$$
and then we have
$$
T(FL)(\bar{\xi}_L) = X_H,
$$
and consequently the generalized or contact Euler-Lagrange equations are transformed into the contact Hamilton equations.

\section{Contact manifolds as Jacobi structures}

    Let  $(M,\eta)$ be a $2n+1$ dimensional contact manifold and $\eta\in \Omega^{1}(M)$. We define the Reeb vector field $\Reeb$ and the vector bundle isomorphism $\bar{\flat}$ as in Section~\ref{sec:contact_hamiltonian_systems}. $\sharp$ will denote the inverse of $\bar{\flat}$.

    Given a contact $2n+1$ dimensional manifold $(M, \eta)$, we can consider the following distributions on $M$, that we will call \emph{vertical} and \emph{horizontal} distribution, respectively:
    \begin{eqnarray*}
        \HorD &= &\ker \eta, \\
        \VertD &= &\ker d \eta.
    \end{eqnarray*}

We have a Whitney sum decomposition

    $$
        TM = \HorD \oplus \VertD,
  $$
   and, at each point $x\in M$:
    $$
        T_x M = \HorD_x \oplus \VertD_x.
    $$
    We will denote by $\pHor$ and $\pVert$ the projections onto these subspaces. 
We notice that $\dim \HorD = 2 n$ and $\dim \VertD = 1$, and that $(d\eta)_{|_{\HorD}}$ is non-degenerate.  Moreover, $\VertD$ is generated by $\Reeb$.

\begin{definition}
  \begin{enumerate}
\item A diffeomorphism between two contact manifolds $F:(M,\eta)\to (N, \xi)$ is a \emph{contactomorphism} if
    $$
        F^*\xi = \eta.
   $$

  \item A diffeomorphism $F:(M,\eta)\to (N, \xi)$ is a \emph{conformal contactomorphism} if there exist a nowhere zero function $f\in C^\infty(M)$ such that 
   $$
            F^*\xi = f \eta.
   $$
    
\item A vector field $X \in \mathfrak{X} M$ is an \emph{infinitesimal contactomorphism} (respectively \emph{infinitesimal conformal contactomorphism}) if its flow $\phi_t$ consists of contactomorphisms (resp. \emph{conformal contactomorphisms}).
\end{enumerate}
\end{definition}

Therefore, we have

\begin{proposition}

\begin{enumerate}
  \item A vector field $X$ is an infinitesimal contactomorphism if and only if
$$
        {\mathcal{L}}_{X} \eta = 0.
  $$

\item    $X$ is an infinitesimal conformal contactomorphism if and only if there exists $g \in C^{\infty}(M)$ such that
$$
        {\mathcal{L}}_{X} \eta = g \eta.
$$

    In this case, we say that $(g, X)$ is an \emph{infinitesimal conformal contactomorphism}.
\end{enumerate}
\end{proposition}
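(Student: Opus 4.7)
The plan is to translate each condition on the flow $\phi_t$ of $X$ into a differential condition on $\eta$ by using the standard identity
\[
\frac{d}{dt}\,\phi_t^* \eta = \phi_t^*\,\mathcal{L}_X \eta,
\]
which holds because $X$ is the infinitesimal generator of $\phi_t$.

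For part (1), I would first assume $X$ is an infinitesimal contactomorphism, so $\phi_t^*\eta = \eta$ for all $t$ in the domain of the flow. Differentiating at $t=0$ gives $\mathcal{L}_X \eta = 0$ immediately. For the converse, if $\mathcal{L}_X\eta = 0$, the identity above yields $\frac{d}{dt}\phi_t^*\eta = \phi_t^*(\mathcal{L}_X\eta) = 0$, so $\phi_t^*\eta$ is constant in $t$ and therefore equal to its value $\eta$ at $t=0$; hence each $\phi_t$ is a contactomorphism.

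For part (2), one direction is again obtained by differentiation: if $\phi_t^*\eta = f_t\,\eta$ with $f_0 = 1$ and each $f_t$ nowhere zero, then differentiating at $t=0$ gives $\mathcal{L}_X\eta = g\,\eta$ with $g := \dot{f}_0 \in C^\infty(M)$. For the converse, assume $\mathcal{L}_X\eta = g\,\eta$ and set $\alpha_t := \phi_t^*\eta$ and $h_t := \phi_t^* g$. The same identity gives the pointwise linear ODE $\dot{\alpha}_t = h_t\,\alpha_t$ with $\alpha_0 = \eta$, whose explicit solution is
\[
\alpha_t = f_t\,\eta, \qquad f_t := \exp\!\left(\int_0^t \phi_s^* g\,ds\right).
\]
Since $f_t$ is nowhere zero and $f_0 = 1$, each $\phi_t$ is a conformal contactomorphism, proving the equivalence.

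I do not expect a serious obstacle: both parts are routine applications of the Lie derivative formula, and the only slightly delicate point in (2) is recognizing that the ODE for $\alpha_t$ has the factorized solution $f_t\eta$ (as opposed to a more general $t$-dependent form), which follows because the equation is a scalar linear ODE on each fibre with the common coefficient $h_t$. The same explicit formula for $f_t$ also certifies that it is smooth and nowhere vanishing, closing the argument.
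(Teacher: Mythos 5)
Your proof is correct and is the standard argument; the paper states this proposition without proof (as an immediate consequence of the definitions), and your use of the identity $\frac{d}{dt}\phi_t^*\eta = \phi_t^*\mathcal{L}_X\eta$ together with the explicit exponential solution of the resulting linear ODE is exactly the expected route. The only point worth making explicit is that in the forward direction of (2) the conformal factors $f_t$ depend smoothly on $t$, which follows since $f_t = (\phi_t^*\eta)(\mathcal{R})$ is determined by evaluating against the Reeb vector field.
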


    Let $(M, \eta)$ be a $(2n+1)$-dimensional contact manifold. Around any point $x \in M$ there are coordinates $(q^1,\ldots,q^n,p_1,\ldots,p_n,z)$ such that:
$$
        \eta = d z -  p_i d q^i.
$$

    In these coordinates we have
   $$
        d\eta =  d q^i \wedge d p_i \; , \;
        \Reeb = \frac{\partial}{\partial z},
$$
    and
$$
            \VertD = \langle  \frac{\partial}{\partial z} \rangle  \; , \;
            \HorD  = \langle A_i, B^i\rangle 
$$
  where
\begin{eqnarray*}
        A_i &= &\frac{\partial}{\partial q^i} - p_i \frac{\partial}{\partial z},\\
        B^i &=& \frac{\partial}{\partial p_i}.
    \end{eqnarray*}

 $\{A_1, B^1, \ldots, A_n, B^n, \Reeb\}$ and $\{d q^1, d p_1, \ldots, d q^n, d p_n, \eta\}$ are dual basis.

   We also have
$$
        [A_i, B^i ] = -\Reeb
$$

\begin{definition}
    A Jacobi manifold~\cite{Kirillov1976,Lichnerowicz1978} is a triple $(M,\Lambda,E)$, where $\Lambda$ is a bivector field (a skew-symmetric contravariant 2-tensor field) and $E \in \mathfrak{X} (M)$ is a vector field, so that the following identities are satisfied:
   $$
        [\Lambda,\Lambda] = 2 E \wedge \Lambda \; , \;
        {\mathcal{L}}_{E} \Lambda = [E,\Lambda] = 0,
$$
    where $[\cdot,\cdot ]$ is the Schouten–Nijenhuis bracket.
\end{definition}

    Given a Jacobi manifold $(M,\Lambda,E)$, we define the \emph{Jacobi bracket}:
    
        \begin{eqnarray*}
           \{\cdot, \cdot\} : C^\infty(M) \times C^{\infty}(M) & \mapsto \RR , \\
            (f,g) &\mapsto \{f,g\},
        \end{eqnarray*}
    
   \noindent where
    $$
        \{f,g\} = \Lambda(d f, dg) + f E(g) - g E (f).
    $$

This bracket is bilinear, antisymmetric, and satisfies the Jacobi identity. Furthermore it fulfills the weak Leibniz rule:
$$
        supp(\{f,g\}) \subseteq supp (f) \cap supp (g).
 $$
    That is, $(C^\infty(M), \{\cdot,\cdot\})$ is a local Lie algebra in the sense of Kirillov. 

Conversely, given a local Lie algebra $(C^\infty(M), \{\cdot,\cdot\})$, we can find a Jacobi structure on $M$ such that the Jacobi bracket coincides with the algebra bracket.

\begin{remark}{\rm The weak Leibniz rule is equivalent to this identity:

$$
\{f, gh\} = g \{f, h\} + h \{f, g\} + gh E(h)
$$
}
\end{remark}

Given a contact manifold $(M,\eta)$ we can define a Jacobi structure $(M, \Lambda, E)$ by 
$$
    \Lambda(\alpha,\beta) = - d \eta (\sharp\alpha, \sharp\beta), \quad
    E = - \Reeb,
$$
where $\sharp = \bar{\flat}^{-1}$. 

\begin{example}{\rm ({\bf Examples of Jacobi manifolds})

One important particular case of Jacobi manifolds are Poisson manifolds (when $E=0$). The corresponding
Poisson bracket satisfies the following Leibniz rule
$$
        \{f, gh\} = \{f, g\} h + g \{f, h\}.
$$

Examples of Poisson manifolds are symplectic and cosymplectic manifolds, as we show in the following lines.

    Let $(M, \Omega, \eta)$ be a cosymplectic manifold and $\flat: T M \to  T^* M$ be the vector bundle isomorphism defined in Section~\ref{sec:cosymplectic_hamiltonian_systems}

    If we denote its inverse by $\sharp = \flat^{-1}$, then
    $$
        \Lambda(\alpha, \beta) = 
        \Omega(\sharp\alpha,\sharp\beta),
    $$
    is a Poisson tensor on $M$.

An almost symplectic manifold
is said to be {\it locally conformally symplectic} if for each point $x\in M$
there is an open neighborhood $U$ such that $d(e^{\sigma}\Omega)=0,$ for $\sigma:U\rightarrow \mathbb{R}$, so $(U,e^{\sigma}\Omega)$
is a symplectic manifold. If $U=M$, then it is said to be globally conformally symplectic.

One can see that these local 1-forms $d\sigma$ defines a closed 1-form $\theta$ such that
\begin{equation*}
 d\Omega=\theta \wedge \Omega.
\end{equation*}
The one-form $\theta$ is called the {\it Lee one-form}. Locally conformally symplectic manifolds (L.C.S.) with Lee form $\theta=0$ are symplectic manifolds.
We define a bivector $\Lambda$ on $M$ and a vector field $E$ given by
\begin{equation*}
 \Lambda(\alpha,\beta)=\Omega(\flat^{-1}(\alpha),\flat^{-1}(\beta))=\Omega(\sharp(\alpha),\sharp(\beta)),\quad E=\flat^{-1}(\theta)
\end{equation*}
with $\alpha,\beta\in \Omega^{1}(M)$ and $\flat: \mathfrak{X}(M)\rightarrow \Omega^{1}(M)$ is the isomorphism of $C^{\infty}(M)$
modules defined by $\flat(X)=\iota_X\Omega$. Here $\sharp=\flat^{-1}$. In this case, we also have $\sharp_{\Lambda}=\sharp$. The vector field $E$ satisfies $\iota_{E}\theta=0$ and $\mathcal{L}_{E}\Omega=0, \mathcal{L}_{E}\theta=0$.
Then, $(M,\Lambda,E)$ is an even dimensional Jacobi manifold. 

}

\end{example}

    Let $(M,\Lambda,E)$ be a Jacobi manifold. We define the following morphism of vector bundles:
   
        \begin{eqnarray*}
            \lsharp: T M^* &\to     T M\\
            \alpha &\mapsto \Lambda(\alpha, \cdot ),
        \end{eqnarray*}
    which also induces a morphism of $C^{\infty}(M)$-modules between 1-forms and vector fields.

    In the case of a contact manifold, this is given by
 $$
        \lsharp \alpha =
   \sharp \alpha - \alpha(\Reeb) \Reeb,
$$
since
$$
\eta (\sharp_\Lambda \alpha) = \alpha (\Reeb)
$$
for any 1-form $\alpha$.

For a contact manifold, $\lsharp$ is not an isomorphism. In fact, $\ker\lsharp= \langle \eta\rangle $ and ${\rm Im} \; \lsharp = \HorD$.

Vector fields associated with functions $f$ on the algebra of smooth functions $C^{\infty}(M)$ are defined as
\begin{equation*}
 X_f=\sharp_{\Lambda}(df)+fE,
\end{equation*}

The {\it characteristic distribution} $\mathcal{C}$ of $(M,\Lambda,E)$ is generated by the
values of all the vector fields $X_f$. 
This characteristic distribution $\mathcal{C}$ is defined in terms of $\Lambda$ and $E$ as follows
\begin{equation*}
 \mathcal{C}_p=\sharp_{\Lambda_p}(T^{*}_pM)+\langle E_p\rangle ,\quad \forall p\in M
\end{equation*}
where $\sharp_p:T_p^{*}M\rightarrow T_pM$ is the restriction of $\sharp_{\Lambda}$ to $T^{*}_pM$ for every $p\in M$.
Then, $\mathcal{C}_p=\mathcal{C}\cap T_{p}M$ is the vector subspace of $T_pM$ generated by
$E_p$ and the image of the linear mapping $\sharp_p$. 

The distribution is said to be {\it transitive} if the characteristic distribution
is the whole tangent bundle $TM$. The local structure of Jacobi manifolds is described by the following theorem~\cite{Weinstein1983,Sussmann1973}.

\begin{theorem}
The characteristic distribution of a Jacobi manifold $(M,\Lambda,E)$ is completely integrable in the sense of Stefan--Sussmann, thus
$M$ defines a foliation whose leaves are not necessarily of the same dimension, and it is called the {\it characteristic foliation}. Each leaf has a unique
transitive Jacobi structure such that its canonical injection into $M$ is a Jacobi map (that is, it preserves the Jacobi brackets).
Each can be
\begin{enumerate}
 \item A locally conformally symplectic (or a symplectic) manifold if the dimension is even.
\item A manifold equipped with a contact one-form if its dimension is odd.
\end{enumerate}

\end{theorem}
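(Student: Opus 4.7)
The plan is to proceed in three stages: first establish integrability via Stefan--Sussmann, then descend the Jacobi structure to a leaf, and finally classify the resulting transitive Jacobi manifolds by dimension parity.

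For integrability, I would observe that $\charD$ is pointwise generated by the Hamiltonian vector fields $X_f = \lsharp(df) + fE$, since the definition $\charD_p = \lsharp(T_p^*M) + \langle E_p\rangle$ coincides with the span of the $X_f$ (note that $E = X_1$). Using the compatibility conditions $[\Lambda,\Lambda] = 2E\wedge\Lambda$ and $[E,\Lambda] = 0$, a direct calculation shows that $[X_f, X_g]$ again lies in the $C^\infty(M)$-module generated by the $X_h$, so this module is involutive. The Stefan--Sussmann theorem then produces a singular foliation whose leaves have tangent spaces exactly $\charD_p$.

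For the descent step, tangency of every $X_f$ to each leaf $L$ forces $\lsharp(T^*M)|_L \subseteq TL$ and $E|_L \in TL$, so one obtains restricted tensors $\Lambda_L$ and $E_L = E|_L$ defined intrinsically on $L$. The Jacobi compatibility conditions descend, making $(L, \Lambda_L, E_L)$ a transitive Jacobi manifold, and the inclusion $L \hookrightarrow M$ is a Jacobi map by construction. For the classification, assume $(L, \Lambda_L, E_L)$ is transitive. If $\dim L = 2n$ is even, transitivity forces $\lsharp : T^*L \to TL$ to be surjective and hence bijective by equality of dimensions together with the skew-symmetry of $\Lambda$, so it induces an almost symplectic form $\Omega_L$ on $L$; setting $\theta_L$ to be the pre-image of $E_L$ under $\lsharp$, the identities $[\Lambda,\Lambda] = 2E\wedge\Lambda$ and $[E,\Lambda] = 0$ translate to $d\Omega_L = \theta_L \wedge \Omega_L$ and $d\theta_L = 0$, yielding the l.c.s.\ (or symplectic if $E_L = 0$) structure. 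If $\dim L = 2n+1$ is odd, then $\ker\lsharp$ is one-dimensional and transverse to $\langle E_L\rangle$, so one can pick a generator $\eta_L$ with $\eta_L(E_L) = -1$; the Jacobi compatibility then forces $\eta_L \wedge (d\eta_L)^n \neq 0$, yielding a contact form whose induced Jacobi bracket matches the one restricted from $M$.

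The main obstacle is the classification step: one must carefully unpack the tensorial identity $[\Lambda, \Lambda] = 2 E \wedge \Lambda$ in terms of the dual objects $\Omega_L$ or $\eta_L$, verifying that it produces exactly the l.c.s.\ integrability condition $d\Omega_L = \theta_L \wedge \Omega_L$ in the even case and the contact non-degeneracy in the odd case. A secondary technical point is smoothness of the descended tensors on a foliation whose leaf dimensions may vary, which is precisely the content provided by Stefan--Sussmann but should be checked along the way to ensure the constructions of $\Omega_L, \theta_L, \eta_L$ are genuinely leafwise smooth.
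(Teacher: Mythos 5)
The paper states this theorem without proof, citing the classical literature (the result goes back to Kirillov and Lichnerowicz, with the integrability part resting on Stefan--Sussmann), so there is no in-text argument to compare against; your outline is essentially the standard proof of this structure theorem and the strategy is sound. Two points deserve care. First, in the integrability step, involutivity of the $C^\infty(M)$-module spanned by the $X_f$ is not by itself sufficient for a \emph{singular} distribution to integrate: you should either record that this module is locally finitely generated (in a chart it coincides with the module generated by $E$ and the finitely many fields $\sharp_\Lambda(dx^i)$, so Hermann's criterion applies) or, equivalently, that $\mathcal{C}$ is invariant under the flows of the $X_f$ because these are conformal Jacobi automorphisms; either route rests on the identity $[X_f,X_g]=X_{\{f,g\}}$, which is the precise form of the involutivity you should actually prove from $[\Lambda,\Lambda]=2E\wedge\Lambda$ and $[E,\Lambda]=0$. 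Second, in the odd case the phrase ``$\ker\sharp_\Lambda$ is one-dimensional and transverse to $\langle E_L\rangle$'' is not meaningful as stated, since $\ker\sharp_\Lambda$ lives in $T^*L$; what you need, and what is in fact automatic, is that a nonzero $\alpha\in\ker\sharp_\Lambda$ cannot annihilate $E_L$ (by skew-symmetry it already annihilates $\mathrm{im}\,\sharp_\Lambda$, so it would otherwise kill all of $TL=\mathrm{im}\,\sharp_\Lambda+\langle E_L\rangle$), which legitimizes the normalization $\eta_L(E_L)=-1$ and gives $\ker\eta_L=\mathrm{im}\,\sharp_\Lambda$. The computation you defer (translating the Schouten identities into $d\Omega_L=\theta_L\wedge\Omega_L$, $d\theta_L=0$ in the even case and into $\eta_L\wedge(d\eta_L)^n\neq 0$ in the odd case) is where the real content lies, but the plan you describe for it is the correct one and is carried out in the references the paper cites.
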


\section{Submanifolds and the Coisotropic Reduction Theorem}

\subsection{Submanifolds}

As in the case of symplectic manifolds, we can consider several interesting types of submanifolds of a contact manifold $(M,\eta)$. To define them, we will use the following notion of \emph{complement} for contact structures:

    Let $(M,\eta)$ be a contact manifold and $x\in M$. Let $\Delta_x\subset T_x M$ be a linear subspace. We define the \emph{contact complement} of $\Delta_x$
    $$
        {\Delta_x}^{\perp_{\Lambda}} = \lsharp({\Delta_x}^o),
    $$
    where 
    ${\Delta_x}^o = \{\alpha_{x}\in T_x^*M \mid \alpha_x(\Delta_x)=0\}$ is the annihilator.

    We extend this definition for distributions $\Delta\subseteq TM$ by taking the complement pointwise in each tangent space.

\bigskip

 \begin{definition}
 Let $N\subseteq M$ be a submanifold. We say that $N$ is:
    \begin{itemize}
        \item \emph{Isotropic} if $TN\subseteq {TN}^{\perp_{\Lambda}}$.
        \item \emph{Coisotropic} if $TN\supseteq {TN}^{\perp_{\Lambda}}$.
        \item \emph{Legendrian} if $TN= {TN}^{\perp_{\Lambda}}$.
    \end{itemize}
\end{definition}

The coisotropic condition can be written in local coordinates as follows.
  
    Let $N\subseteq M$ be a $k$-dimensional manifold given locally by the zero set of functions 
$\phi_a:U\to \mathbb{R}$, with $a\in \{1, \ldots, k\}$. 

    We have that
   $$
        {TN}^{\perp_{\lambda}} = \langle Z_a \; | \; a=1, \dots, k \rangle 
   $$
    where
 $$
        Z_a = \sharp_\Lambda (d \phi_a)
$$

Therefore, $N$ is coisotropic if and only if, $Z_a(\phi_b)=0$ for all  $a,b$. 

Notice that
\begin{equation}\label{za}
Z_a = (\frac{\partial \phi_a}{\partial q^i} + p_i \frac{\partial \phi_a}{\partial z}) \frac{\partial}{\partial p_i}
+ \frac{\partial \phi_a}{\partial p_i} (\frac{\partial}{\partial q^i} - p_i \frac{\partial}{\partial z}).
\end{equation}

According to (\ref{za}), we conclude that $N$ is coisotropic if and only if

\begin{equation}\label{za2}
(\frac{\partial \phi_a}{\partial q^i} + p_i \frac{\partial \phi_a}{\partial z}) \frac{\partial \phi_b}{\partial p_i}
+ \frac{\partial \phi_a}{\partial p_i} (\frac{\partial \phi_b}{\partial q^i} - p_i \frac{\partial \phi_b}{\partial z}) = 0.
\end{equation}

Using the above results, one can easily prove the following characterization of a Legendrian submanifold.

\begin{proposition}
Let $(M, \eta)$ be a contact manifold of dimension $2n+1$. A submanifold $N$ of $M$ is Legendrian if and only if it is a maximal integral manifold of $\ker \eta$
(and then it has dimension $n$).
\end{proposition}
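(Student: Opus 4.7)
The plan is to prove the equivalence by combining two observations about $\sharp_\Lambda$ that have already been recorded in the excerpt: namely, $\ker \sharp_\Lambda = \langle \eta\rangle$ and $\mathrm{Im}\,\sharp_\Lambda = \HorD = \ker\eta$. Almost everything else follows from linear algebra and a dimension count.

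First I would handle the ``only if'' direction. Assume $N$ is Legendrian, so $TN = (TN)^{\perp_\Lambda} = \lsharp(\ann{TN})$. Because the right-hand side is contained in $\mathrm{Im}\,\lsharp = \HorD$, I immediately obtain $TN\subseteq \ker\eta$, so $N$ is an integral manifold of the contact distribution. In particular $\eta\in \ann{TN}$, which lies in $\ker\lsharp$. Therefore $\dim\lsharp(\ann{TN}) = \dim\ann{TN} - 1 = (2n+1 - \dim N) - 1 = 2n - \dim N$. Equating with $\dim TN = \dim N$ yields $\dim N = n$. Since the pullback of $\eta$ to any integral submanifold is zero, the pullback of $d\eta$ is zero as well; thus every integral submanifold is isotropic for the symplectic form $d\eta|_{\HorD}$ on the $2n$-dimensional bundle $\HorD$, and its dimension cannot exceed $n$. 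Consequently $N$ is a maximal integral manifold of $\ker\eta$.

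For the ``if'' direction, let $N$ be a maximal integral manifold of $\ker\eta$, so $\dim N = n$, $\eta|_{TN}=0$ and therefore $d\eta|_{TN\times TN}=0$. The same dimension count as above gives $\dim (TN)^{\perp_\Lambda} = 2n - n = n = \dim TN$, so it suffices to show $TN\subseteq (TN)^{\perp_\Lambda}$. Given $v\in TN$, set $\alpha := \bar{\flat}(v) = \contr{v} d\eta + \eta(v)\eta = \contr{v} d\eta$, using $\eta(v)=0$. For any $w\in TN$ one has $\alpha(w) = d\eta(v,w) = 0$, so $\alpha\in \ann{TN}$. Moreover $\lsharp \alpha = \sharp\alpha - \alpha(\Reeb)\Reeb$, and $\alpha(\Reeb) = d\eta(v,\Reeb) = -(\contr{\Reeb} d\eta)(v) = 0$; hence $\lsharp \alpha = \sharp\alpha = \sharp\bar{\flat}(v) = v$. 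This shows $v\in \lsharp(\ann{TN}) = (TN)^{\perp_\Lambda}$, completing the proof.

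The argument is essentially elementary once the structural facts $\ker\lsharp=\langle\eta\rangle$ and $\mathrm{Im}\,\lsharp=\HorD$ are in hand; the only place where I must be a little careful is to justify that $d\eta|_{TN}=0$ follows automatically from $\eta|_{TN}=0$ (so that an integral manifold is automatically $d\eta$-isotropic), and to correctly bookkeep the drop of one dimension coming from $\eta\in\ann{TN}\cap\ker\lsharp$. I do not foresee any real obstacle; the subtlest point is the dimension count, which has to use that $\eta$ lies in the annihilator of $TN$ in both directions of the equivalence.
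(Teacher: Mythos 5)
Your argument is correct and complete. The paper itself does not actually write out a proof of this proposition --- it only remarks that it ``can easily be proved'' from the preceding local computation of the vector fields $Z_a = \sharp_\Lambda(d\phi_a)$ for a submanifold cut out by constraints $\phi_a$ in Darboux coordinates; the implicit route is therefore a coordinate calculation. Your proof is a clean intrinsic alternative: it isolates exactly the two structural inputs ($\ker\lsharp = \langle\eta\rangle$ and $\mathrm{Im}\,\lsharp = \HorD = \ker\eta$, both recorded in the paper) and reduces everything to the rank--nullity count $\dim\lsharp(\ann{TN}) = \dim\ann{TN} - 1$, valid precisely because $\eta\in\ann{TN}$ in both directions of the equivalence. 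The key computation in the ``if'' direction --- that for $v\in TN$ with $\eta(v)=0$ one has $\bar\flat(v) = \iota_v d\eta \in \ann{TN}$ and $\lsharp\bar\flat(v) = v$ since $\iota_v d\eta$ annihilates $\Reeb$ --- is exactly right, and the bound $\dim N\le n$ for integral manifolds via $d\eta$-isotropy inside the symplectic bundle $(\HorD, d\eta|_\HorD)$ correctly justifies the word ``maximal.'' What your approach buys is independence from Darboux coordinates and a transparent accounting of where the drop from $2n+1-k$ to $2n-k$ comes from; what the paper's coordinate route buys is that the Legendrian case drops out of the same formula \eqref{za} already used for the coisotropic characterization. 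No gaps.
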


\subsection{Submanifolds in Jacobi manifolds}

Legendre (or Legendrian) submanifolds are a particular case of a more general definition 
for an arbitrary Jacobi manifold. Indeed, let $(M,\Lambda,E)$ be a Jacobi manifold with characteristic distribution $\mathcal{C}$.

\begin{definition}
 A submanifold $N$ of a Jacobi manifold $(M,\Lambda,E)$ is said to be a {\it Lagrangian-Legendrian~\cite{Ibanez1997} submanifold} if the following equality holds
$$
 TN^{\perp_\Lambda} = \sharp_\Lambda (TN^{\circ})=TN \cap \mathcal{C},
$$
where $TN^{\circ}$ denotes the annihilator of $TN$.

$(M,\Lambda,E)$ is said to be {\it transitive} if its characteristic distribution $\mathcal{C}$ is the whole tangent bundle, and then
the above condition reads as

$$
  TN^{\perp_\Lambda} = \sharp_\Lambda (TN^{\circ})=TN.
$$
 
\end{definition}

If $(M,\Lambda)$ is a Poisson manifold, the Lagrangian-Legendrian submanifold of $M$ will simply be called Lagrangian.
In addition, if the Jacobi manifold is contact, then the Lagrangian-Legendrian submanifolds coincide with the Legendre (or Legendrien) submanifolds.

\subsection{Characterization of the dynamics in terms of Legendre submanifolds}

    Given a smooth function $H$ on a contact manifold $(M,\eta)$, we have the \emph{Hamiltonian vector field} 
    $$
        X_H = \sharp_\Lambda (dH) - H \Reeb,
    $$
    or, equivalently,
    $$
        \flat (X_H) = dH - (\Reeb(H) + H) \eta. 
   $$

    In Darboux coordinates, we have
    $$
        X_H =  \frac{\partial H}{\partial p_i} \frac{\partial}{\partial q^i}
        - (\frac{\partial H}{\partial q^i} + p_i \frac{\partial H}{\partial z}) \frac{\partial}{\partial p_i} +  ( p_i \frac{\partial H}{\partial {p_i }} - H) \frac{\partial}{\partial z} .
    $$

\bigskip

    Assume a \emph{contact Hamiltonian} system given by a triple $(M,\eta,H)$, where $(M,\eta)$ is a contact manifold and $H$ is a smooth real function on $M$.

One can easily shows that

$$
{\mathcal{L}}_{X_H} \, H = - \Reeb(H) H.
$$
which shows that the system does not preserve the energy.

Let $(M, \eta, H)$ a Hamiltonian contact system with Reeb vector field $\Reeb$ and Hamiltonian dynamics $X_H$. Assume that $M$ has dimension $2n+1$.

A direct computation shows that
\begin{eqnarray*}
&& {\mathcal{L}}_{X_H} \, \eta = - \Reeb (H) \eta \\
&& {\mathcal{L}}_{X_H}  \, d \eta = - d(\Reeb(H)) \eta - \Reeb (H) d\eta\\
&& {\mathcal{L}}_{X_H}  \, (\eta \wedge d \eta) = - 2 \Reeb (H) \eta \wedge d\eta\\
&& {\mathcal{L}}_{X_H}  \, (\eta \wedge (d \eta)^2) = - 3 \Reeb (H) \eta \wedge (d\eta)^2,
\end{eqnarray*}
and by induction one can prove that
$$
{\mathcal{L}}_{X_H}  \, (\eta \wedge (d \eta)^n) = - (n+1) \, \Reeb (H) \eta \wedge (d\eta)^n.
$$
This proves that the contact volume is not preserved.

However,
$$
\Omega =  {H}^{-(n+1)}  \eta \wedge (d\eta)^n
$$
is preserved, assuming that $H$ does not vanish at every point.

\bigskip

Next, we will investigate the relationship between Hamiltonian vector fields and Legendrian submanifolds. 

\begin{theorem}[Contactification of the tangent bundle]
    Let $(M,\eta)$ be a contact manifold. Let $\bar{\eta}$ be a one form on $TM \times \mathbb{R}$ such that
    $$
        \bar\eta = {\eta}^C + t {\eta}^V,
   $$
    where $t$ is the usual coordinate on $\mathbb{R}$ and ${\eta}^C$ and ${\eta}^V$ are the complete and vertical lifts of $\eta$ to $TM$.
    Then, $(TM \times \mathbb{R}, \bar \eta)$ is a contact manifold with Reeb vector field $\bar\Reeb = {\Reeb}^V$. 
\end{theorem}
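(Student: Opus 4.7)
The plan is to verify the two claims — that $\bar\eta$ is a contact form on $TM\times\mathbb R$ and that $\Reeb^V$ is its Reeb vector field — by a direct computation in Darboux coordinates, using the standard identities for complete and vertical lifts. Pick Darboux coordinates $(q^i,p_i,z)$ on $M$ with $\eta = dz - p_i\,dq^i$ and $\Reeb = \partial/\partial z$; these induce coordinates $(q^i,p_i,z,\dot q^i,\dot p_i,\dot z,t)$ on $TM\times\mathbb R$. The coordinate formulas for the lifts give $\eta^V = dz - p_i\,dq^i$ and $\eta^C = d\dot z - p_i\,d\dot q^i - \dot p_i\,dq^i$, so
\[
  \bar\eta \;=\; d\dot z - p_i\,d\dot q^i - (\dot p_i + t p_i)\,dq^i + t\,dz.
\]

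The Reeb identities then come almost for free from the contraction rules $\iota_{X^V}\alpha^C = (\iota_X\alpha)^V$, $\iota_{X^V}\alpha^V = 0$ and $\iota_{X^V}dt = 0$, valid for any 1-form $\alpha$ on $M$ and vector field $X$ on $M$. Applying them to $\bar\eta = \eta^C + t\eta^V$ and to $d\bar\eta = (d\eta)^C + dt\wedge\eta^V + t(d\eta)^V$, together with $\iota_\Reeb\eta = 1$ and $\iota_\Reeb d\eta = 0$, I obtain $\iota_{\Reeb^V}\bar\eta = 1$ and $\iota_{\Reeb^V}d\bar\eta = 0$. Hence once $\bar\eta$ is known to be contact, $\Reeb^V$ must be its Reeb field.

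The key simplification for the nondegeneracy $\bar\eta\wedge(d\bar\eta)^{2n+1}\neq 0$ is the coordinate change $P_i := \dot p_i + t p_i$, in which
\[
  \bar\eta = t\,dz + d\dot z - p_i\,d\dot q^i - P_i\,dq^i, \qquad d\bar\eta = \omega' + dt\wedge dz,
\]
where $\omega' := d\dot q^i\wedge dp_i + dq^i\wedge dP_i$ is manifestly symplectic on the $4n$-dimensional slice with coordinates $(q^i,p_i,\dot q^i,P_i)$. Since $(dt\wedge dz)^{2} = 0$ and $(\omega')^{2n+1} = 0$, only one cross term survives in the binomial expansion: $(d\bar\eta)^{2n+1} = (2n+1)\,dt\wedge dz\wedge(\omega')^{2n}$. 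The split of $\omega'$ into two symplectic pieces on complementary $2n$-dimensional subspaces gives $(\omega')^{2n} = \binom{2n}{n}(n!)^{2}\, d\dot q^{1}\wedge dp_{1}\wedge\cdots\wedge d\dot q^{n}\wedge dp_{n}\wedge dq^{1}\wedge dP_{1}\wedge\cdots\wedge dq^{n}\wedge dP_{n}$.

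Finally, $d\bar\eta$ contains no $d\dot z$, so among the four summands of $\bar\eta$ only the $d\dot z$ term contributes to $\bar\eta\wedge(d\bar\eta)^{2n+1}$; the result is a nonzero constant multiple of the volume form on $TM\times\mathbb R$ in the coordinates $(q^i,p_i,z,\dot q^i,P_i,\dot z,t)$, establishing the contact condition. The main subtlety is spotting the substitution $\dot p_i\mapsto P_i$: without it, $d\bar\eta$ appears as an $(A + tB + C)$-type sum of three distinct-looking 2-forms, and controlling the many cross terms in $(d\bar\eta)^{2n+1}$ directly would require a rather opaque combinatorial argument.
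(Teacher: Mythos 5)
Your proof is correct and complete: the coordinate formulas for $\eta^V$ and $\eta^C$ are right, the substitution $P_i=\dot p_i+tp_i$ cleanly reduces $(d\bar\eta)^{2n+1}$ to the single surviving term $(2n+1)\,dt\wedge dz\wedge(\omega')^{2n}$, and the lift identities $\iota_{\Reeb^V}\eta^C=(\iota_\Reeb\eta)^V=1$, $\iota_{\Reeb^V}(d\eta)^C=0$, $\iota_{\Reeb^V}\eta^V=0$ give the Reeb property immediately. The paper states this theorem without proof (deferring to the cited references), so there is nothing to compare against; your direct Darboux-chart verification is a standard and fully adequate way to establish both claims, since the contact condition and the Reeb characterization are pointwise.
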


\begin{theorem}
    Let $(M,\eta)$ be a contact manifold, and let $X\in\mathfrak{X}(M)$, $f\in C^\infty (M)$. We denote
   
        \begin{eqnarray*}
            X \times f: M &\to &TM \times \mathbb{R}\\
            p &\mapsto & (X_p, f(p)),
        \end{eqnarray*}

    Then $(f, X)$ is an infinitesimal conformal contactomorphism if and only if $\im(X \times f) \subseteq (TM \times \mathbb{R}, \bar{\eta})$ is a Legendrian submanifold.
\end{theorem}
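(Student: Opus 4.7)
\emph{Plan.} The map $\phi := X\times f \colon M \to TM\times\mathbb{R}$ has the left inverse induced by the bundle projection $\tau_M\times\mathrm{id}_\mathbb{R}$, so it is an embedding and $\im\phi$ is a $(2n+1)$-dimensional submanifold of the $(4n+3)$-dimensional contact manifold $(TM\times\mathbb{R},\bar\eta)$. Since $\tfrac{1}{2}((4n+3)-1) = 2n+1$, this is exactly the Legendrian dimension. By the preceding characterization of Legendrian submanifolds as maximal integral manifolds of the contact distribution, the theorem reduces to showing that
\begin{equation*}
\phi^*\bar\eta = 0 \quad \Longleftrightarrow \quad \mathcal{L}_X\eta \text{ is a pointwise multiple of }\eta.
\end{equation*}

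\emph{Step 1: compute $\phi^*\bar\eta$.} The central input is the following pair of pullback identities along the vector field $X$ viewed as a map $M\to TM$, valid for every $\alpha\in\Omega^1(M)$:
\begin{equation*}
X^*(\alpha^V) = \alpha, \qquad X^*(\alpha^C) = \mathcal{L}_X\alpha.
\end{equation*}
I would verify these in bundle coordinates $(q^i,\dot q^i)$ on $TM$, starting from the local expressions $\alpha^V = \alpha_i(q)\,dq^i$ and $\alpha^C = \dot q^j(\partial\alpha_i/\partial q^j)\,dq^i + \alpha_i\,d\dot q^i$ together with $X^*q^i = q^i$ and $X^*\dot q^i = X^i$. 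Since $\phi^*t = f$, these identities combine to give
\begin{equation*}
\phi^*\bar\eta = \phi^*(\eta^C) + (\phi^*t)\,\phi^*(\eta^V) = \mathcal{L}_X\eta + f\,\eta.
\end{equation*}

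\emph{Step 2: conclude.} From the formula above, $\phi^*\bar\eta = 0$ is equivalent to $\mathcal{L}_X\eta = -f\eta$, i.e.\ to $(f,X)$ being an infinitesimal conformal contactomorphism (the sign being absorbed into the definition of the conformal factor). Combined with the dimension count and the Legendrian characterization invoked at the outset, this yields both implications simultaneously.

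\emph{Main obstacle.} The technical heart of the argument is the pair of lift identities $X^*(\alpha^V)=\alpha$ and $X^*(\alpha^C)=\mathcal{L}_X\alpha$. These are classical but must be handled with care; an intrinsic alternative to the coordinate computation is to use the characterizing properties $\alpha^C(Y^C) = (\alpha(Y))^C$ and $\alpha^V(Y^C) = (\alpha(Y))^V$ of the complete and vertical lifts, together with the fact that $T_pX(v)$ decomposes as $v^C_{X_p}$ modulo vertical terms controlled by $\nabla_vX$. Once those identities are available, the remainder of the proof is essentially formal.
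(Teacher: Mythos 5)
Your proof is correct and follows the route the paper implicitly relies on: the pullback identity $\phi^*\bar\eta = \mathcal{L}_X\eta + f\,\eta$ (obtained from the classical formulas $X^*(\alpha^V)=\alpha$ and $X^*(\alpha^C)=\mathcal{L}_X\alpha$), combined with the dimension count $\dim\im\phi = 2n+1 = \tfrac{1}{2}\bigl(\dim(TM\times\mathbb{R})-1\bigr)$ and the characterization of Legendrian submanifolds as maximal integral manifolds of the contact distribution, is precisely what makes the paper's next assertion --- that $\im(X_H\times\Reeb(H))$ is Legendrian --- ``follow since $\mathcal{L}_{X_H}\eta = -\Reeb(H)\eta$''. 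The only point worth flagging is the sign: your conclusion $\mathcal{L}_X\eta=-f\eta$ disagrees with the paper's stated definition of an infinitesimal conformal contactomorphism ($\mathcal{L}_X\eta=g\eta$ for the pair $(g,X)$) but agrees with how the paper actually applies the theorem, so absorbing the sign into the conformal factor, as you do, is the right call.
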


This result states that the image of vector field $X_H$, suitably included in the contactified tangent bundle, is a Legendrian submanifold. In this sense, Hamiltonian vector fields are particular cases of Legendrian submanifolds.
\begin{theorem}
    Let $(M, \eta, H)$ be a contact Hamiltonian system. Then
$$
        \im (X_H \times (\Reeb (H))) \subseteq ( TM \times \mathbb{R}, \bar\eta)
   $$
    is a Legendrian submanifold.
\end{theorem}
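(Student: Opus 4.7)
The plan is to reduce the statement to the immediately preceding theorem, which characterizes Legendrian submanifolds of the contactified tangent bundle $(TM \times \mathbb{R}, \bar\eta)$ as images $\im(X \times f)$ of pairs $(f,X)$ that are infinitesimal conformal contactomorphisms. So the task reduces to showing that $X_H$ is an infinitesimal conformal contactomorphism whose conformal factor is $\Reeb(H)$ (up to the sign convention fixed in that theorem).

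The key step is the computation of $\mathcal{L}_{X_H}\eta$, which is essentially already recorded earlier in the section as $\mathcal{L}_{X_H}\eta = -\Reeb(H)\,\eta$. To re-derive it cleanly from the defining relation
\begin{equation*}
\bar\flat(X_H) = dH - (\Reeb(H) + H)\,\eta,
\end{equation*}
I would first contract with $\Reeb$, using $\iota_\Reeb \eta = 1$ and $\iota_\Reeb d\eta = 0$, to obtain $\eta(X_H) = -H$. Substituting this back yields $\iota_{X_H} d\eta = dH - \Reeb(H)\,\eta$, and Cartan's magic formula then collapses to
\begin{equation*}
\mathcal{L}_{X_H}\eta = \iota_{X_H}d\eta + d(\eta(X_H)) = dH - \Reeb(H)\,\eta - dH = -\Reeb(H)\,\eta.
\end{equation*}

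Consequently $(-\Reeb(H), X_H)$ is an infinitesimal conformal contactomorphism (equivalently, $(\Reeb(H), X_H)$ under the sign convention of the previous theorem), and applying that theorem immediately yields that $\im(X_H \times \Reeb(H))$ is a Legendrian submanifold of $(TM \times \mathbb{R}, \bar\eta)$. No genuinely new geometry enters at this stage: the substantive content — that the graph of a conformal contact vector field embeds as a Legendrian submanifold of the contactified tangent bundle — has been abstracted into the preceding result, and the only real obstacle is bookkeeping the sign conventions so that the conformal factor $\pm\Reeb(H)$ on the Lie derivative side matches the second factor of the map $X \times f$ on the Legendrian side.
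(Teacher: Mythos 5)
Your proposal is correct and follows the paper's own route exactly: the paper also deduces the result from the preceding characterization of Legendrian submanifolds of $(TM\times\mathbb{R},\bar\eta)$ via infinitesimal conformal contactomorphisms, citing the identity $\mathcal{L}_{X_H}\eta=-\Reeb(H)\,\eta$ (which it records earlier as "a direct computation"). Your Cartan-formula derivation of that identity, together with $\eta(X_H)=-H$, is a correct and slightly more explicit version of the same argument, and your remark about matching the sign of the conformal factor to the second component of $X\times f$ is exactly the right bookkeeping.
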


The result follows since
$$
{\mathcal{L}}_{X_H} \, \eta = - \Reeb (H) \eta.
$$

\subsection{Coisotropic reduction}

We will present a result of reduction in the context of contact geometry~\cite{Le2018,deLeon2019a}, which is analogous to the well-known coisotropic reduction in symplectic geometry~\cite{Marsden1974,Marsden1986}.

First we note that the horizontal distribution $(\HorD, d\eta)$ is symplectic. Let be $\Delta\subseteq \HorD$. We denote by $\dhbot$ the symplectic orthogonal component

$$
 {\Delta}^{\perp_{d\eta}} =\{v \in TM \; | \; d\eta(v,\Delta)=0\},
$$

We remark that $\Reeb\in  {\Delta}^{\perp_{d\eta}}$ for any distribution $\Delta$. There is a simple relationship between both notions of orthogonal complement:

    Let $\Delta \subseteq TM$ be a distribution. Then
   $$
        {\Delta}^{\perp_{\Lambda}} =  {\Delta}^{\perp_{d\eta}} \cap \HorD.
    $$

We have the following possibilities regarding the relative position of a distribution $\Delta$ in a contact manifold and the vertical and horizontal distributions

\begin{definition}
    Let $\Delta\subseteq TM$ be a distribution of rank $k$. We say that a point $x\in M$ is
    \begin{enumerate}
        \item \emph{Horizontal} if $\Delta_x = \Delta_x \cap \HorD_x$.
        \item \emph{Vertical} if $\Delta_x = (\Delta_x \cap \HorD_x) \oplus \langle \Reeb_x\rangle $.
        \item \emph{Oblique} if $\Delta_x = (\Delta_x \cap \HorD_x) \oplus \langle \Reeb_x + v_x\rangle $, con $v_x \in H_x \setminus \Delta_x$.
    \end{enumerate}

    If $x$ is horizontal, then $\dim {\Delta}^{\perp_{\Lambda}} = 2n - k$. Otherwise, $\dim {\Delta}^{\perp_{\Lambda}}= 2n + 1 -k$. 
\end{definition}

    Given a coisotropic submanifold $\iota : N \to M$, we define
 $$
        \begin{aligned}
            \eta_0 &= \iota^* \eta = {\eta}\mid_{TN}\\
            d \eta_0 &= \iota^* (d \eta) = d {(\iota^*\eta)}.
  \end{aligned}
$$

    We call characteristic distribution of $N$ to 
$$
        {TN}^{\perp_{\Lambda}} =
        \ker(\eta_0) \cap \ker(d \eta_0).
 $$

\begin{theorem}[Coisotropic reduction in contact manifolds]

    Let $\iota:N\to M$ be a coisotropic submanifold of the contact manifold $(M,\eta)$. Then ${TN}^{\perp_{\Lambda}}$ is involutive. 
    
    If the quotient $\tilde N=TN/{TN}^{\perp_{\Lambda}}$ is a manifold and $N$ does not have horizontal points, let $\pi: N \to \tilde N$ be the projection. Then there exists a unique 1-form $\tilde \eta$ on $\tilde{N}$ such that $ \eta = \pi^*(\tilde{\eta})$  and $(N, \tilde \eta)$ is a contact manifold. 

Furthermore, if $N$ consists only of vertical points, then $\tilde \Reeb = \pi_*{\Reeb}$ is well defined and is the corresponding Reeb vector field.
\end{theorem}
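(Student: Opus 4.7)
The plan is to separate the theorem into four claims---involutivity of $TN^{\perp_\Lambda}$, existence and uniqueness of $\tilde\eta$, the contact condition on $\tilde\eta$, and the identification of the Reeb---and treat them in turn. Throughout I use the identification $TN^{\perp_\Lambda} = \ker\eta_0 \cap \ker d\eta_0$, which holds for coisotropic $N$ because $TN^{\perp_\Lambda} \subseteq TN$.

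\textbf{Involutivity and descent.} For $X,Y \in \Gamma(TN^{\perp_\Lambda})$ the Cartan formula for $d\eta_0$ gives $\eta_0([X,Y]) = X\eta_0(Y) - Y\eta_0(X) - d\eta_0(X,Y) = 0$, and the graded commutator identity $\iota_{[X,Y]} = [\mathcal{L}_X,\iota_Y]$, combined with $\mathcal{L}_X d\eta_0 = d\iota_X d\eta_0 = 0$, gives $\iota_{[X,Y]} d\eta_0 = 0$. Hence $[X,Y] \in \Gamma(TN^{\perp_\Lambda})$, and the distribution is involutive. The same identities $\iota_X\eta_0 = 0$ and $\mathcal{L}_X\eta_0 = 0$ say $\eta_0$ is basic for the integrating foliation, so it descends to a unique $1$-form $\tilde\eta$ on the leaf space with $\pi^*\tilde\eta = \eta_0$ (and therefore $\pi^* d\tilde\eta = d\eta_0$).

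\textbf{Contact condition.} Writing $k = \dim N$, the quotient has dimension $\dim\tilde N = k - (2n+1-k) = 2(k-n-1)+1$, which is odd, so it suffices to check that $\tilde\eta$ is nowhere zero and that $d\tilde\eta|_{\ker\tilde\eta}$ is non-degenerate. The first is immediate: $\tilde\eta_{\tilde x} = 0$ pulls back to $\eta_0 = 0$ at any $x \in \pi^{-1}(\tilde x)$, forcing $T_xN \subseteq \mathcal{H}_x$---contradicting the no-horizontal-points hypothesis. For the second, let $v \in \ker\tilde\eta_{\tilde x}$ satisfy $d\tilde\eta(v,w) = 0$ for all $w \in \ker\tilde\eta_{\tilde x}$, and choose a lift $\tilde v \in (TN\cap\mathcal{H})_x$. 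Since $\pi_*$ maps $\ker\eta_0$ onto $\ker\tilde\eta$, this forces $d\eta(\tilde v,\tilde w) = 0$ for every $\tilde w \in (TN\cap\mathcal{H})_x$, placing $\tilde v$ in the symplectic complement $W$ of $(TN\cap\mathcal{H})_x$ inside the symplectic vector space $(\mathcal{H}_x,d\eta)$. The main technical point---and what I expect to be the crux---is to upgrade this to $\tilde v \in TN^{\perp_\Lambda}$, i.e.\ to orthogonality of $\tilde v$ against all of $T_xN$, not merely its horizontal part. I plan to handle this by a dimension count: at any non-horizontal $x$ one has $\dim(TN\cap\mathcal{H})_x = k-1$, so $\dim W = 2n-(k-1) = 2n-k+1$, which is precisely $\dim TN^{\perp_\Lambda}$ by the formula in the excerpt. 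The obvious inclusion $TN^{\perp_\Lambda} \subseteq W$ then forces equality, whence $\tilde v \in TN^{\perp_\Lambda} = \ker\pi_*$ and $v = 0$.

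\textbf{Reeb vector field.} If $N$ consists only of vertical points, $\Reeb \in \Gamma(TN)$, and I would first show $\Reeb$ is projectable by verifying $[\Reeb,X] \in \Gamma(TN^{\perp_\Lambda})$ for every $X \in \Gamma(TN^{\perp_\Lambda})$; this follows from $\eta_0(\Reeb) = 1$, $\iota_\Reeb d\eta = 0$, and the same Cartan computations as in the involutivity step. Setting $\tilde\Reeb := \pi_*\Reeb$, the pullback relations $\pi^*\tilde\eta = \eta_0$ and $\pi^*d\tilde\eta = d\eta_0$ immediately give $\tilde\eta(\tilde\Reeb) = 1$ and $\iota_{\tilde\Reeb}d\tilde\eta = 0$, identifying $\tilde\Reeb$ as the Reeb vector field of $(\tilde N,\tilde\eta)$.
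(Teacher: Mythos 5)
The paper states this theorem without proof (it is a survey deferring to the cited references), so there is nothing to compare line by line; judged on its own, your argument has one genuine gap, and it sits at the foundation. Everything rests on the identification $TN^{\perp_{\Lambda}}=\ker\eta_0\cap\ker d\eta_0$, which you justify by ``$TN^{\perp_{\Lambda}}\subseteq TN$''. Coisotropy is not enough. From $\bar{\flat}(\sharp\alpha)=\alpha$ one gets $\iota_{\sharp_\Lambda\alpha}\,d\eta=\alpha-\alpha(\Reeb)\,\eta$, hence for $\alpha\in TN^{\circ}$ and $w\in T_xN$,
$$
d\eta\bigl(\sharp_\Lambda\alpha,\,w\bigr)=-\,\alpha(\Reeb)\,\eta(w),
$$
which vanishes for all such $\alpha,w$ precisely when $x$ is vertical ($\Reeb_x\in T_xN$) or horizontal ($\eta|_{T_xN}=0$). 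At an oblique point $TN^{\perp_{\Lambda}}\not\subseteq\ker d\eta_0$, so your Cartan-formula computation proves involutivity of $\ker\eta_0\cap\ker d\eta_0$ rather than of $TN^{\perp_{\Lambda}}$, and, worse, $\eta_0$ is not basic for the foliation tangent to $TN^{\perp_{\Lambda}}$, so the descent step collapses. Concretely, in $(\mathbb{R}^3,\eta=dz-p\,dq)$ take $N=\{z=0,\ p>0\}$: then $TN^{\circ}=\langle dz\rangle$ and $TN^{\perp_{\Lambda}}=\langle p\,\partial/\partial p\rangle\subseteq TN$, so $N$ is coisotropic with no horizontal (only oblique) points and a manifold leaf space, yet $\eta_0=-p\,dq$ satisfies $\mathcal{L}_{\partial/\partial p}\eta_0=-dq\neq 0$ and no $\tilde\eta$ with $\pi^*\tilde\eta=\eta_0$ exists. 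So the second assertion of the theorem is in fact false without assuming every point of $N$ is vertical; the gap in your proof is exactly where that missing hypothesis should have surfaced, and it is inherited from the paper's own (equally unproved, and equally oblique-point-invalid) assertion that $TN^{\perp_{\Lambda}}=\ker\eta_0\cap\ker d\eta_0$.

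Once you assume $\Reeb$ tangent to $N$ (as the final clause does anyway), $\alpha(\Reeb)=0$ for every $\alpha\in TN^{\circ}$, the identification becomes correct, and the rest of your argument is sound: the involutivity and descent computations, the nowhere-vanishing of $\tilde\eta$, the dimension count giving $W=TN^{\perp_{\Lambda}}$ (both the inclusion $TN^{\perp_{\Lambda}}\subseteq W$ and $\dim W=2n-(k-1)=\dim TN^{\perp_{\Lambda}}$ are right at non-horizontal points), and the projectability of $\Reeb$. Two additions would round it off: record explicitly that $\pi_*$ maps $\ker\eta_0$ \emph{onto} $\ker\tilde\eta$ (one more dimension count, used when you lift $w$), and, if you want the first assertion (involutivity of $TN^{\perp_{\Lambda}}$) in full generality including oblique points, argue through the Jacobi bracket instead: $N$ is coisotropic iff its vanishing ideal is closed under $\{\cdot,\cdot\}$, and for $f,g$ in that ideal one checks $[\sharp_\Lambda df,\sharp_\Lambda dg]=\sharp_\Lambda d\{f,g\}$ along $N$, which lands in $\sharp_\Lambda(TN^{\circ})=TN^{\perp_{\Lambda}}$.
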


The following theorem is very related to a similar result in~\cite{Tortorella2018}.

Indeed, this result provides a coisotropic reduction theorem for \emph{regular coisotropic submanifolds}
which coincides with our notion of \emph{coisotropic submanifolds without horizontal points}, but it is used in a slightly different context.

\begin{remark}{\rm There is another, non-equivalent, widespread definition of contact manifold. 
Some authors define contact manifolds $(M,\xi)$ as odd-di\-men\-sional manifolds $M$ with a \emph{contact distribution} $\xi$, that is, a maximally non-integrable codimension~$1$ distribution. By the Frobenius theorem, this means that $\xi$ is given locally as the kernel of a contact form $\eta$. Of course, every contact manifold $(M,\eta)$ is a contact manifold in this sense by taking $\xi = \ker \eta$. Conversely, a contact distribution $\xi$ is globally the kernel of contact form if and only if $\xi$ is co-orientable.}

\end{remark}

\begin{corollary}
 With the notations from previous theorem, assume that $L\subseteq M$ is Legendrian, $N$ does not have horizontal points, and $N$ and $L$ have clean intersection (that is,  $N \cap L$ is a submanifold and $T(N\cap L) = TN \cap TL$). Then $\tilde L = \pi(L)\subseteq \tilde N$ is Legendrian. 
\end{corollary}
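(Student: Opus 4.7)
The plan is to verify that $\tilde L$ is a submanifold of $\tilde N$ satisfying $T\tilde L\subseteq \ker\tilde\eta$ and $\dim\tilde L=(\dim\tilde N-1)/2$; the earlier proposition characterizing Legendrian submanifolds as maximal integral manifolds of $\ker\tilde\eta$ then yields the conclusion.

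For the submanifold structure, the clean intersection hypothesis gives $L\cap N$ the structure of a submanifold of $M$ with $T(L\cap N)=TL\cap TN$. Since $L\cap N\subseteq N$, the projection $\pi\colon N\to \tilde N$ restricts to a smooth map whose image is $\tilde L=\pi(L\cap N)$. The kernel of $(\pi|_{L\cap N})_{*}$ at $x$ is $T_x(L\cap N)\cap T_xN^{\perp_\Lambda}=T_xL\cap T_xN^{\perp_\Lambda}$ (using $TN^{\perp_\Lambda}\subseteq TN$ for coisotropic $N$), and once this has constant rank along $L\cap N$ — which will follow from the dimension count below — $\tilde L$ is an immersed submanifold of $\tilde N$.

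For the isotropy step, the proposition tells us that $L$ Legendrian implies $\eta|_{TL}=0$, so in particular $\eta$ vanishes on $T(L\cap N)\subseteq TL$. The coisotropic reduction theorem yields $\pi^{*}\tilde\eta=\iota^{*}\eta$, hence $\pi^{*}\tilde\eta$ vanishes on $T(L\cap N)$; pushing forward gives $\tilde\eta|_{T\tilde L}=0$, i.e.\ $\tilde L$ is an integral submanifold of $\ker\tilde\eta$ in $\tilde N$.

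For the dimension, with $\dim M=2n+1$, $\dim L=n$, $\dim N=k$, the no-horizontal-points hypothesis gives $\dim TN^{\perp_\Lambda}=2n+1-k$, so $\dim\tilde N=2k-2n-1=2m+1$ with $m=k-n-1$. I then aim to show
$$\dim\tilde L=\dim(TL\cap TN)-\dim(TL\cap TN^{\perp_\Lambda})=m.$$
Using $TL\subseteq\HorD$ and the nondegeneracy of $d\eta|_\HorD$, together with the Legendrian identity $TL^{\perp_\Lambda}=TL$ and the coisotropic identity $TN^{\perp_\Lambda}\subseteq TN$, this reduces to the standard Lagrangian-in-coisotropic dimension count inside the symplectic vector space $(\HorD,d\eta|_\HorD)$, where the identity $(A\cap B)^{\perp}=A^{\perp}+B^{\perp}$ applies. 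I expect this linear-algebra bookkeeping — in particular, isolating the vertical or oblique component of $TN$ from its horizontal part so that the symplectic argument can be applied cleanly on $\HorD$ — to be the main obstacle. Once the dimension is verified, $\tilde L$ is an integral submanifold of $\ker\tilde\eta$ of the Legendrian dimension, and hence Legendrian.
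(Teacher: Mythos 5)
Your outline is correct, and the step you flag as ``the main obstacle'' does go through, so the proposal amounts to a complete proof once that computation is written out. (The paper states this corollary without proof, so there is no argument to compare against; your route via the characterization of Legendrian submanifolds as maximal integral manifolds of the contact distribution is the natural one.) Here is the missing bookkeeping. Fix a non-horizontal point $x\in L\cap N$ and work in the symplectic vector space $(\HorD_x,d\eta)$ of dimension $2n$. Whether $x$ is vertical or oblique, the horizontal part $C:=T_xN\cap\HorD_x$ has dimension $k-1$, and since $d\eta(\sharp_\Lambda\alpha,u)=\alpha(u)$ for every $u\in\HorD_x$, one checks directly from ${T_xN}^{\perp_{\Lambda}}=\sharp_\Lambda\bigl((T_xN)^{\circ}\bigr)$ that
$$
{T_xN}^{\perp_{\Lambda}}=C^{\perp_{d\eta}}\cap\HorD_x .
$$
(A word of caution: the identity ${\Delta}^{\perp_{\Lambda}}={\Delta}^{\perp_{d\eta}}\cap\HorD$ quoted in the text can fail when $\Delta$ itself is oblique; the version you need is the displayed one, with the horizontal part $C$ on the right, and it holds at every non-horizontal point. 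This is exactly the ``isolating the vertical or oblique component'' issue you anticipated, and it resolves in your favor.) Coisotropy of $N$ then forces $C^{\perp_{d\eta}}\cap\HorD_x\subseteq T_xN\cap\HorD_x=C$, so $C$ is coisotropic in $\HorD_x$. Writing $A=T_xL$, which is Lagrangian in $\HorD_x$, and noting $A\cap T_xN=A\cap C$ because $A\subseteq\HorD_x$, the identity $(A\cap C)^{\perp_{d\eta}}=A+C^{\perp_{d\eta}}$ inside $\HorD_x$ gives
$$
\dim(A\cap C)-\dim\bigl(A\cap C^{\perp_{d\eta}}\bigr)=\dim C-n=k-n-1=m,
$$
which is precisely the rank of $\pi|_{L\cap N}$ at $x$. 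Note this formula is independent of the relative position of $A$ and $C$, so the constant-rank claim you deferred comes for free. The isotropy step via $\pi^{*}\tilde\eta=\iota^{*}\eta$ and the final appeal to the maximal-integral-manifold characterization are correct as written.
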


\section{Momentum map and contact reduction}

The moment map is well-known in symplectic geometry~\cite{Marsden1974,Abraham1978}. There is a contact~\cite{Albert1989,Loose2001,deLeon2019a} analog that we will describe below. The moment map has been used to introduce some notions of integrability~\cite{Jovanovic2015,Boyer2011}.
\begin{definition}
    Let $(M,\eta)$ be a contact manifold and let $G$ be a Lie group acting on $M$ by contactomorphisms. In analogy to the exact symplectic case, we define the moment map $J: M  \to \mathfrak{g}^*$ such that
    $$
        J(x)(\xi) = - \eta(\xi_M(x)),
  $$
    where $x \in M$, $\xi\in\mathfrak{g}$ and
  $\xi_M$ is the the infinitesimal generator of the action corresponding to $\xi$.
\end{definition}

We have
    $$
        X_{\hat{J}_{\xi}} =  \xi_M,
   $$
where $X_{\hat{J}_{\xi}}$ is the Hamiltonian vector field corresponding to the function $\hat{J}_{\xi}(x) = \langle J(x), \xi\rangle $.

The moment map defined is equivariant under the coadjoint action. That is, we have
 
   $$
        Ad_{g^{-1}}^* \circ J =
        J \circ g,
    $$
  for $g\in G$, $\alpha\in\mathfrak{g}^*$ and $\xi\in\mathfrak{g}$,
where $Ad^*: G \to Aut(\mathfrak{g}^*)$ is the coadjoint representation.

    Let $(M,\eta)$ be a contact manifold on which a Lie group $G$ acts by contactomorphisms. Let $\mu \in \mathfrak{g}^*$ be a regular value of the moment map $J$. Then, for all $x \in J^{-1}(\mu)$ we have
    $$
        T_x(G_\mu x) = T_x (G x) \cap T_x (J^{-1}(\mu)),
 $$
    where $G_\mu = \{g\in G \; | \;  Ad^*_{g^{-1}} \mu = \mu\}$ is the isotropy group of $\mu$ with respect to the coadjoint action.

   We also have 
$$
        T_x(J^{-1}(\mu)) = {T_x(G x)}^{\perp{d\eta}}.
  $$

 In particular, if $G=G_\mu$, then $T_x (G x) \subseteq T_x (J^{-1}(\mu))$ and $T_x (J^{-1}(\mu))$ is coisotropic and consists of vertical points. Furthermore
    $$
        {T_x(J^{-1}(\mu))}^{\perp_{\Lambda}} = T_x(G x).
    $$

    Let $(M,\eta)$ be a contact manifold on which a Lie group $G$ acts freely and properly by contactomorphisms and let $J$ be the momentum map. Let $\mu\in \mathfrak{g}$ be a regular value of $J$ which is a fixed point of $G$ under the coadjoint action. Then, $M_\mu = J^{-1}(\mu)/G$ has a unique contact form $\eta_\mu$ such that
    $$
        \pi_\mu^* \eta_\mu = \iota_\mu^* \eta,
    $$
    where $\pi_\mu : J^{-1}(\mu) \to M_\mu$ is the canonical projection and $\iota_\mu: J^{-1}(\mu) \to M$ is the inclusion.
    
    Also the Reeb vector field $\Reeb$ restricts to $J^{-1}(\mu)$
    and projects onto $M_\mu$ . Its projection, $\Reeb_\mu$ is the the Reeb vector field of $(M_\mu,\eta_\mu)$.

Let $G$ be a group acting by contactomorphisms on $(M, \eta, H)$ such that $H$ is $G$-invariant. Then,  $(M_\mu,\eta_\mu, H_\mu)$ is a Hamiltonian system, where $H_\mu$ is the induced function by $H$ on $M_\mu$ and
$$
{\pi_\mu}_* {X_H}_{|_{J^{-1}(\mu)}}= X_{H_\mu}.
$$

\section{Infinitesimal symmetries and Noether theorem}
\subsection{Motivation}
Noether's theorem is one of the most relevant results relating symmetry groups of a Lagrangian system and conserved quantities of the corresponding Euler-Lagrange equations~\cite{Carinena1992,Carinena1991,Carinena1989,Carinena1988,Sarlet1987,Cantrijn1980,Sarlet1981,Sarlet1983,deLeon1994,deLeon1994a,Cicogna1992,Prince1983,Prince1985,Crampin1983,Carinena1989a,Aldaya1978,Aldaya1980}. In the simplest view, the existence of a \emph{cyclic} coordinate implies the conservation of the corresponding momentum. Indeed, if $L=L(q^i, \dot{q})$ does not depend on the coordinate $q^j$, then, using the Euler-Lagrange equation
\begin{equation}
    \frac{d}{dt}\left(\frac{\partial L}{\partial \dot{q}^j}\right) - \frac{\partial L}{\partial q^j} = 0,
\end{equation}
we deduce that
\begin{equation}
    \dot{p}_j = \frac{\partial L}{\partial \dot{q}^j} = 0.
\end{equation}

Noether theorem can be described on a geometric framework as follows~\cite{deLeon2011}. 

\begin{theorem}[Noether's Theorem]
Let $L$ be a function on the tangent bundle $TQ$ of the configuration manifold $Q$ and $X$ be a vector field on $Q$. 
Denote by $X^V$ and $X^C$ the vertical and complete lifts of $X$ to $TQ$. Then:

$X^C (L) = 0$ if and only if $X^V(L)$ is a conserved quantity.
\end{theorem}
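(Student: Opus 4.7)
The plan is to reduce the theorem to the single identity
\[
\xi_L\bigl(X^V(L)\bigr) = X^C(L),
\]
understood as an equality of functions on $TQ$ for the Euler-Lagrange vector field $\xi_L$. Both directions of the theorem are then immediate: $X^V(L)$ is conserved along the Lagrangian flow precisely when its $\xi_L$-derivative vanishes, and regularity of $L$ (so that integral curves of $\xi_L$ cover $TQ$) lets one read off $X^C(L)\equiv 0$ pointwise from $\xi_L(X^V L)\equiv 0$.

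To establish the identity I would work in natural bundle coordinates. Writing $X = X^i(q)\,\partial/\partial q^i$, the standard local formulas for the lifts read
\[
X^V = X^i\,\frac{\partial}{\partial \dot q^i}, \qquad X^C = X^i\,\frac{\partial}{\partial q^i} + \dot q^j\,\frac{\partial X^i}{\partial q^j}\,\frac{\partial}{\partial \dot q^i},
\]
so $X^V(L) = X^i\,\partial L/\partial \dot q^i$. Along an integral curve $(q^i(t),\dot q^i(t))$ of $\xi_L$ we have $\dot q^i = dq^i/dt$ (the SODE condition) together with $\frac{d}{dt}(\partial L/\partial \dot q^i) = \partial L/\partial q^i$ (Euler-Lagrange). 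A single application of the chain rule to $X^V(L)$, combined with these two relations, produces exactly $X^C(L)$, which is the claimed identity.

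The main obstacle, such as it is, is purely conceptual: one has to recognize that the correct intermediate object is $\xi_L(X^V L)$ rather than trying to work directly with currents or jet-bundle formalism. Once the identity above is isolated, the proof is a one-line computation. A coordinate-free derivation is also available using the Poincaré-Cartan one-form $\lambda_L = S^*(dL)$: the relations $\iota_{X^C}\lambda_L = X^V(L)$, $\mathcal{L}_{X^C}\lambda_L = S^*(d(X^C L))$ (a consequence of $\mathcal{L}_{X^C}S = 0$), $[X^C,\Delta]=0$, and the SODE condition $S(\xi_L)=\Delta$, combined via Cartan's magic formula, reproduce the same identity intrinsically; but the coordinate argument is strictly shorter.
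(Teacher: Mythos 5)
Your proposal is correct. Note that the paper states this classical Noether theorem without proof, only citing a reference, so there is no in-text argument to compare against; your derivation is the standard one and it works. The key identity $\xi_L\bigl(X^V(L)\bigr) = X^C(L)$ is verified exactly as you describe: expanding $\xi_L\bigl(X^i\,\partial L/\partial\dot q^i\bigr)$ using the SODE property of $\xi_L$ turns the first term into $\dot q^j(\partial X^i/\partial q^j)(\partial L/\partial\dot q^i)$, and the Euler--Lagrange relation replaces $\xi_L(\partial L/\partial\dot q^i)$ by $\partial L/\partial q^i$, giving $X^C(L)$ on the nose; both implications then follow since integral curves of $\xi_L$ pass through every point of $TQ$. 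The only caveat worth making explicit is that the theorem's hypotheses do not state regularity of $L$, yet your argument (and indeed the very notion of ``conserved quantity'') presupposes a well-defined Euler--Lagrange vector field $\xi_L$, hence a regular Lagrangian; this is consistent with the paper's standing assumptions, but you should say so rather than leave it implicit. The concluding coordinate-free sketch via $\lambda_L$ is plausible but is only a sketch; the coordinate computation carries the proof on its own.
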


\medskip

In contact Lagrangian dynamics, the generalized Euler-Lagrange equations look as
\begin{equation}
    \frac{d}{dt}\left(\frac{\partial L}{\partial \dot{q}^j}\right) - \frac{\partial L}{\partial q^j} = \frac{\partial L}{\partial \dot{q}^j} \frac{\partial L}{\partial z},
\end{equation}
and if we insist to proceed as in the symplectic case, we would have
$$
\dot{p}_j = \frac{\partial L}{\partial z} 
$$
but, as we can directly compute
$$
\dot{E}_L = \frac{\partial L}{\partial z} p_j
$$
Therefore, if $E_L$ has no zeros, then $\displaystyle{\frac{p_j}{E_L}}$ is a conserved quantity.

\subsection{Symmetries and contact Hamiltonian systems}

Let $(M, \eta, H)$ a contact Hamiltonian system with Reeb vector field ${\Reeb}$.

The Jacobi bracket of two functions $f, g \in C^\infty(M)$ is given by
$$
\{f, g\} = \Lambda (df, dg) - f {\Reeb} (g) + g {\Reeb}(f),
$$
where ($\Lambda, E = - {\Reeb})$ is the associated Jacobi structure to $(M, \eta)$. Let $X_f$ the Hamiltonian vector field defined by a function $f$.

These two lemmas are essential for our purposes:

\begin{lemma}
We have
$$
\{f, g\}  = X_f (g) + g {\Reeb}(f),
$$

\end{lemma}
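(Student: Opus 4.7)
The plan is to prove this identity by direct expansion of both sides using the definitions introduced earlier in the paper. Since $E = -\mathcal{R}$ in the Jacobi structure associated to a contact manifold, the formula $X_f = \sharp_\Lambda(df) + fE$ from the characteristic-distribution discussion specializes to
\[
X_f = \lsharp(df) - f\,\mathcal{R}.
\]

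First I would compute $X_f(g)$ by applying $dg$ to this expression. The term $dg\bigl(\lsharp(df)\bigr)$ equals $\Lambda(df,dg)$ by the very definition of $\lsharp$ as the bundle map $\alpha \mapsto \Lambda(\alpha,\cdot)$, while $dg(f\mathcal{R}) = f\,\mathcal{R}(g)$. Hence
\[
X_f(g) = \Lambda(df,dg) - f\,\mathcal{R}(g).
\]
Adding $g\,\mathcal{R}(f)$ to both sides and comparing with the stated Jacobi bracket
\[
\{f,g\} = \Lambda(df,dg) - f\,\mathcal{R}(g) + g\,\mathcal{R}(f)
\]
yields the identity immediately.

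There is really no serious obstacle here; the whole content is a one-line verification, and the only thing to watch is the sign convention $E=-\mathcal{R}$, which flips the sign relative to the general Jacobi-manifold formula $X_f = \sharp_\Lambda(df)+fE$. If desired, I would also remark that this alternative expression for $\{f,g\}$ is the contact analogue of the Poisson identity $\{f,g\} = X_f(g)$, the extra term $g\,\mathcal{R}(f)$ accounting precisely for the non-Leibniz correction built into Kirillov's local Lie algebra structure on $C^\infty(M)$.
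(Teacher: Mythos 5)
Your proof is correct and is exactly the intended one-line verification: the paper states this lemma without proof, and expanding $X_f = \sharp_\Lambda(df) - f\Reeb$ against $dg$ and comparing with the definition of the Jacobi bracket (with $E=-\Reeb$) is the evident argument. Your attention to the sign convention $E=-\Reeb$ is the only nontrivial point, and you handle it correctly.
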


This implies that 
$$
X_H (f) = \{H, f\} - {\Reeb}(H) \, f,
$$
so that an observable $f$ dissipates at the same rate that the Hamiltonian if and only if $f$ and $H$ commute
(and in that case, $\frac{f}{H}$ is a conserved quantity.

\bigskip

\begin{lemma}
We have
$$
\{f, g\}  = - \eta ([X_f, X_g ] ).
$$
\end{lemma}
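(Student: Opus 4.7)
The plan is to express $\eta([X_f,X_g])$ in two pieces using the Lie-derivative identity for a 1-form evaluated on a vector field, and then identify each piece with data already computed for contact Hamiltonian vector fields. Specifically, I will use
\begin{equation*}
X_f(\eta(X_g)) = (\mathcal{L}_{X_f}\eta)(X_g) + \eta([X_f,X_g]),
\end{equation*}
so the statement will reduce to evaluating $\eta(X_g)$ and $\mathcal{L}_{X_f}\eta$ explicitly.

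First I would establish the two key auxiliary identities, which hold for any smooth function $h$. The first is $\eta(X_h) = -h$: contracting the defining relation $\bar\flat(X_h) = dh - (\mathcal{R}(h)+h)\eta$ with $\mathcal{R}$, and using $\iota_\mathcal{R} d\eta = 0$ together with $\eta(\mathcal{R})=1$, gives $\eta(X_h) = \mathcal{R}(h) - (\mathcal{R}(h)+h) = -h$. The second is $\mathcal{L}_{X_h}\eta = -\mathcal{R}(h)\eta$: by Cartan's formula, $\mathcal{L}_{X_h}\eta = \iota_{X_h}d\eta + d(\eta(X_h))$, where $\iota_{X_h}d\eta = \bar\flat(X_h) - \eta(X_h)\eta = dh - \mathcal{R}(h)\eta$ and $d(\eta(X_h)) = -dh$, so the two $dh$ terms cancel. (This is the statement, already hinted at for $X_H$ in the excerpt, that every Hamiltonian vector field is an infinitesimal conformal contactomorphism with conformal factor $-\mathcal{R}(h)$.)

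Applying these identities to the Lie-derivative formula above, I compute
\begin{equation*}
X_f(\eta(X_g)) = X_f(-g) = -X_f(g),\qquad (\mathcal{L}_{X_f}\eta)(X_g) = -\mathcal{R}(f)\,\eta(X_g) = g\,\mathcal{R}(f),
\end{equation*}
and therefore
\begin{equation*}
\eta([X_f,X_g]) = -X_f(g) - g\,\mathcal{R}(f).
\end{equation*}
By the preceding lemma $\{f,g\} = X_f(g) + g\,\mathcal{R}(f)$, this is precisely $-\{f,g\}$, which is what we want.

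This is essentially a bookkeeping argument once the two auxiliary identities are in hand; the only mildly delicate point is resisting the urge to skip the $\eta(X_h)=-h$ computation, since it is what makes the term $X_f(\eta(X_g))$ recover $-X_f(g)$ rather than something more complicated. Everything else is an immediate consequence of contact geometry plus the Jacobi-bracket expression from the previous lemma.
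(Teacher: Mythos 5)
Your proof is correct: both auxiliary identities ($\eta(X_h)=-h$ and $\mathcal{L}_{X_h}\eta=-\mathcal{R}(h)\eta$) are verified properly from the defining relation $\bar\flat(X_h)=dh-(\mathcal{R}(h)+h)\eta$, and the Lie-derivative bookkeeping combined with the preceding lemma gives exactly $-\{f,g\}$. The paper states this lemma without proof, but the identity $-\eta([X_H,X_f])=(\mathcal{L}_{X_H}\eta)(X_f)-X_H(\eta(X_f))$ that it deploys in the Proposition immediately afterwards is precisely your argument, so your route is the same one the paper implicitly relies on.
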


\begin{proposition} Let $X$ be a vector field on $M$ such that $\eta(X) = - f$. Then
$$
\{H, f\} = - \eta ([X_H, X ] ) = ({\mathcal{L}}_X \, \eta) (X_H) + X(H).
$$
\end{proposition}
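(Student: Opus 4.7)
The plan is to verify both equalities by direct computation, reducing each side to the expression $X_H(f) + f\,\mathcal{R}(H)$ which, by the first lemma of this subsection, equals $\{H,f\}$. The essential tools are Cartan's magic formula, the intrinsic identity $d\eta(Y,Z) = Y(\eta(Z)) - Z(\eta(Y)) - \eta([Y,Z])$, and two preparatory identities extracted from the defining relation $\bar\flat(X_H) = dH - (\mathcal{R}(H)+H)\eta$.

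First I would extract those preparatory identities. Contracting $\iota_{X_H} d\eta + \eta(X_H)\eta = dH - (\mathcal{R}(H)+H)\eta$ with $\mathcal{R}$, and using $\iota_{\mathcal{R}} d\eta = 0$ and $\eta(\mathcal{R}) = 1$, yields $\eta(X_H) = -H$; substituting back gives $\iota_{X_H} d\eta = dH - \mathcal{R}(H)\eta$. Evaluating this one-form on $X$, together with the hypothesis $\eta(X) = -f$, produces the key identity
\begin{equation*}
d\eta(X_H, X) = X(H) + f\,\mathcal{R}(H).
\end{equation*}
For the first equality, apply the intrinsic formula for $d\eta$ to the pair $(X_H, X)$; with $\eta(X) = -f$ and $\eta(X_H) = -H$ this becomes $d\eta(X_H, X) = -X_H(f) + X(H) - \eta([X_H, X])$. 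Combining with the displayed identity isolates $\eta([X_H, X]) = -X_H(f) - f\,\mathcal{R}(H)$, and the first lemma then identifies this quantity with $-\{H,f\}$.

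For the second equality I would invoke Cartan's magic formula. Since $\iota_X\eta = -f$, it gives $\mathcal{L}_X\eta = \iota_X d\eta + d(\iota_X\eta) = \iota_X d\eta - df$; pairing with $X_H$ yields $(\mathcal{L}_X\eta)(X_H) = d\eta(X, X_H) - X_H(f)$. Adding $X(H)$ and substituting $d\eta(X, X_H) = -d\eta(X_H, X)$ from the displayed identity collapses the right-hand side to the same expression that appeared in the first step, establishing the claim. The main obstacle is purely bookkeeping: the shift $\eta(X_H) = -H$ forced by the $+H\eta$ correction in the definition of $X_H$, together with the antisymmetries of $d\eta$ and of the Lie bracket, must be tracked consistently. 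Once that is done, both quantities reduce to the same Jacobi bracket formula supplied by the lemma, and the double equality follows without any further input.
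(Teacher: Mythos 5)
Your handling of the first equality is sound and follows a genuinely different route from the paper's: you work directly from the defining relation $\bar{\flat}(X_H)=dH-(\Reeb(H)+H)\eta$ to extract $\eta(X_H)=-H$ and $\iota_{X_H}d\eta=dH-\Reeb(H)\eta$, and then use the intrinsic formula for $d\eta$; the paper instead starts from the lemma $\{H,f\}=-\eta([X_H,X_f])$ and transfers the identity from $X_f$ to the given $X$ by observing that $X-X_f\in\ker\eta$ and that $\mathcal{L}_{X_H}\eta=-\Reeb(H)\eta$ is proportional to $\eta$. Both arguments correctly reach $-\eta([X_H,X])=X_H(f)+f\Reeb(H)=\{H,f\}$, and yours has the advantage of not invoking the second lemma at all.

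The last step of your second part, however, does not establish what you claim. Your own computation gives
\begin{equation*}
(\mathcal{L}_X\eta)(X_H)+X(H)=-d\eta(X_H,X)-X_H(f)+X(H)=-X_H(f)-f\,\Reeb(H),
\end{equation*}
and by your first step this quantity is exactly $\eta([X_H,X])=-\{H,f\}$, not $+\{H,f\}$: ``the same expression that appeared in the first step'' was there identified with $-\{H,f\}$, so asserting that the claim follows silently flips a sign. A concrete check: on $\mathbb{R}^3$ with $\eta=dz-p\,dq$ take $H=p$, $f=q$, $X=X_f=-\partial_p-q\,\partial_z$; then $\{H,f\}=1=-\eta([X_H,X])$, while $\mathcal{L}_X\eta=0$ and $X(H)=-1$, so $(\mathcal{L}_X\eta)(X_H)+X(H)=-1$. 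The correct third member is obtained by expanding the bracket with the roles of $X$ and $X_H$ interchanged: $-\eta([X_H,X])=(\mathcal{L}_{X_H}\eta)(X)-X_H(\eta(X))=(\mathcal{L}_{X_H}\eta)(X)+X_H(f)$. To be fair, the proposition as printed, and the paper's own final display, contain the same slip (swapping arguments in $-\eta([X_H,X_f])=(\mathcal{L}_{X_H}\eta)(X_f)-X_H(\eta(X_f))$ yields $-\eta([X,X_H])=+\eta([X_H,X])$, not $-\eta([X_H,X])$); but your writeup should either prove the corrected identity or explicitly flag the discrepancy rather than declare the stated one established.
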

{\bf Proof}:
If $\eta(X) = - f$, then $\eta(X - X_f) = 0$, so that $X-X_f$ is in the kernel of $\eta$.

Since
$$
{\mathcal{L}}_X \, \eta = - {\Reeb}(H) \eta,
$$
we deduce that
$$
({\mathcal{L}}_X \, \eta)(X_f) = ({\mathcal{L}}_X \, \eta)(X).
$$

Therefore
\begin{eqnarray*}
\{H, f\} &=&  - \eta ([X_H, X_f ] ) \\
& = & ({\mathcal{L}}_{X_H} \, \eta) (X_f) - X_H (\eta(X_f)) \\
&=& ({\mathcal{L}}_{X_H} \eta)(X) - X_H(\eta(X)) \\
&=& - \eta ([X_H, X]) .
\end{eqnarray*}

From the second equality, we have
\begin{eqnarray*}
 - \eta ([X_H, X] )  & = & ({\mathcal{L}}_{X} \, \eta) (X_H) - X (\eta(X_H)) \\
&=& ({\mathcal{L}}_{X} \eta)(X_H) + X(H) .
\end{eqnarray*}

The above Proposition suggests us to introduce the following definition.

\begin{definition}A vector field $X$ on $M$ such that
$$
\eta([X_H, X]) = 0.
$$
will be called a {\sl dynamical symmetry} for $(M, \eta, H)$.
\end{definition}

Using the above Lemmas and the previous Proposition, the following result is immediate.

\begin{theorem}
Let $X$ be a vector field on $M$. Then $X$ is a dynamical symmetry for $(M, \eta, H)$ if and only if $\eta(X)$ commutes with $H$.
\end{theorem}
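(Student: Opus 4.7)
The theorem is essentially a direct consequence of the preceding Proposition, so the plan is to unwind that Proposition with the appropriate choice of function. Specifically, given the vector field $X$, set $f = -\eta(X)$, so that the hypothesis $\eta(X) = -f$ of the Proposition holds tautologically. The Proposition then yields the key identity
\[
\{H, f\} = -\eta([X_H, X]).
\]
From this identity, the two conditions $\eta([X_H, X]) = 0$ and $\{H, f\} = 0$ are literally equivalent, with no further computation needed.

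The second step is to translate $\{H, f\} = 0$ back into a statement about $\eta(X)$. Since the Jacobi bracket is bilinear and antisymmetric, $\{H, -\eta(X)\} = -\{H, \eta(X)\}$, so $\{H, f\} = 0$ if and only if $\{H, \eta(X)\} = 0$, i.e., $\eta(X)$ commutes with $H$. Chaining the two equivalences gives exactly the statement of the theorem.

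Because the substantive geometric content — relating the $\eta$-pairing of a commutator with $X_H$ to a Jacobi bracket — is already encapsulated in the Proposition (which in turn rests on the two Lemmas expressing $\{f,g\}$ in terms of $X_f$ and of $\eta([X_f, X_g])$), there is no real obstacle here: the proof is a one-line application of the Proposition followed by an antisymmetry remark. The only point to be careful about is the sign convention in $f = -\eta(X)$ versus $\eta(X)$, but since we only need the vanishing of the bracket, the sign is irrelevant to the final equivalence.
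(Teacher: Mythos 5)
Your proof is correct and follows exactly the route the paper intends: the paper itself states only that the theorem is ``immediate'' from the preceding Lemmas and Proposition, and your argument --- applying the Proposition with $f=-\eta(X)$ and then using bilinearity of the Jacobi bracket to dispose of the sign --- is precisely that one-line deduction, spelled out. No gaps.
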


\subsection{Symmetries and contact Lagrangian systems}

Next, we will consider infinitesimal symmetries on the Lagrangian description.
In this case, we will take benefit fom the bundle structure of $TQ \times \RR$.

For a vector field $X = X^i \frac{\partial}{\partial q^i}$ on $Q$, we will denote its vertical and complete lifts to $TQ$ (with the natural extension to $TQ \times \RR$) by
\begin{eqnarray*}
&&X^V = X^i \frac{\partial}{\partial \dot{q}^i}, \\
&&X^C = \frac{\partial}{\partial q^i} + \dot{q}^j \frac{\partial X^i}{\partial q^j} \frac{\partial}{\partial \dot{q}^i}.
\end{eqnarray*}

Next, let $Y$ be a vector field on $Q \times \RR$. If 
$$
Y = Y^i \frac{\partial}{\partial q^i} + {\mathcal{Z}} \frac{\partial}{\partial z},
$$
then its complete lift to $T(Q \times \RR )$ is
\begin{eqnarray*}
Y^C &=& Y^i \frac{\partial}{\partial q^i} + {\mathcal{Z}} \frac{\partial}{\partial z}
+ \dot{q}^j \frac{\partial Y^i}{\partial q^j} \frac{\partial}{\partial \dot{q}^i} \\
&& + \dot{q}^j \frac{\partial {\mathcal{Z}}}{\partial q^j} \frac{\partial}{\partial \dot{z}}
+ \dot{z} \frac{\partial Y^i}{\partial z} \frac{\partial}{\partial \dot{q}^i} + \dot{z} \frac{\partial {\mathcal{Z}}}{\partial z} \frac{\partial}{\partial \dot{z}}.
\end{eqnarray*}
Here $(z, \dot{z})$ are the bundle coordinates in $T\mathbb{R} \cong \mathbb{R}\times \RR$.

Since we are restricted to the submanifold $TQ \times \RR$ of $T(Q \times \RR )$ we consider only such vector fields $Y$ on $Q \times \RR$ such that its complete lift to
$T(Q \times \mathbb{R})$ 
be tangent to $TQ \times \RR$. This just happens when
$$
\frac{\partial {\mathcal{Z}}}{\partial q^i} = 0,
$$
that is, $\mathcal{Z}$ does not depend on the positions $q$. The restriction of such $Y^C$ to $TQ \times \RR$ will be denoted by
$$
\bar{Y}^C = Y^i \frac{\partial}{\partial q^i} + {\mathcal{Z}} \frac{\partial}{\partial z}
+ \dot{q}^j \frac{\partial Y^i}{\partial q^j} \frac{\partial}{\partial \dot{q}^i}.
$$

In such a case, we will denote by $\bar{Y}^V$ the vertical lift of the projection of $Y$ to $Q$, say
$$
\bar{Y}^V =  Y^i \frac{\partial}{\partial \dot{q}^i} ,
$$
which is obviously tangent to $TQ \times \RR$.

Next, we shall consider a contact Lagrangian system given by a Lagrangian $L : TQ \times \mathbb{R}\to \RR$.
The corresponding contact Hamiltonian system is $(TQ \times \RR , \eta_L, E_L)$ with the obvious notations. ${\Reeb}_L$ is the Reeb vector field and
$\xi_L$ the Euler-Lagrange vector field.

\medskip

\begin{definition} 
A vector field $X$ on $Q$ is called an {\sl infinitesimal symmetry} of $L$ if $X^C(L) = 0$.
\end{definition}

\begin{theorem}
A vector field $X$ on $Q$ is an infinitesimal symmetry of $L$ if and only if
the function
$$
f = X^V(L)
$$ 
commutes with the energy, that is,
$$
\xi_L (f) = - {\Reeb}_L (E_L) f = \frac{\partial L}{\partial z} f.
$$
\end{theorem}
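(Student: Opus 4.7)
The plan is to reduce the theorem to the single coordinate identity
\[
\bar{\xi}_L\bigl(X^V(L)\bigr) \;=\; X^C(L) \,+\, X^V(L)\,\frac{\partial L}{\partial z},
\]
from which the equivalence follows instantly once the companion identity $-\mathcal{R}_L(E_L) = \partial L/\partial z$ is in hand. Indeed, granting both, the condition $\bar{\xi}_L(f) = (\partial L/\partial z)\, f$ with $f = X^V(L)$ translates exactly to $X^C(L) = 0$, which is the definition of infinitesimal symmetry.

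First I would unpack both sides in the natural bundle coordinates. Since $X = X^i(q)\,\partial/\partial q^i$, the lifts read $X^V = X^i\,\partial/\partial \dot{q}^i$ and $X^C = X^i\,\partial/\partial q^i + \dot{q}^j (\partial X^i/\partial q^j)\,\partial/\partial \dot{q}^i$, and in particular $X^V(L) = X^i(\partial L/\partial \dot{q}^i)$. Applying the Leibniz rule,
\[
\bar{\xi}_L\!\left(X^i\,\frac{\partial L}{\partial \dot{q}^i}\right)
\;=\; \bar{\xi}_L(X^i)\,\frac{\partial L}{\partial \dot{q}^i} \;+\; X^i\,\bar{\xi}_L\!\left(\frac{\partial L}{\partial \dot{q}^i}\right).
\]
Because $\bar{\xi}_L$ is a SODE whose base projection is $\dot{q}^i\,\partial/\partial q^i$ (see \eqref{clagrangian2}) and $X^i$ depends only on the $q$'s, the first term equals $\dot{q}^j(\partial X^i/\partial q^j)(\partial L/\partial \dot{q}^i)$.

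Second, I would invoke the generalized Euler-Lagrange relation, which, as a pointwise identity on $TQ \times \mathbb{R}$ guaranteed by the very definition of the coefficients $\mathcal{B}^i$ in \eqref{clagrangian3}, takes the form
\[
\bar{\xi}_L\!\left(\frac{\partial L}{\partial \dot{q}^i}\right) \;=\; \frac{\partial L}{\partial q^i} + \frac{\partial L}{\partial \dot{q}^i}\,\frac{\partial L}{\partial z}.
\]
Substituting and collecting, the two pieces assemble precisely into $X^C(L) + X^V(L)\,(\partial L/\partial z)$, yielding the claimed identity. I would then verify $-\mathcal{R}_L(E_L) = \partial L/\partial z$ by a short direct calculation: from $E_L = \dot{q}^i(\partial L/\partial \dot{q}^i) - L$ one finds $\partial E_L/\partial z = \dot{q}^i(\partial^2 L/\partial \dot{q}^i \partial z) - \partial L/\partial z$ and $\partial E_L/\partial \dot{q}^i = \dot{q}^j W_{ij}$, so the correction term $-W^{ij}(\partial^2 L/\partial \dot{q}^j \partial z)\,\partial E_L/\partial \dot{q}^i$ in $\mathcal{R}_L(E_L)$ exactly cancels the first summand, leaving $-\partial L/\partial z$.

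I expect no serious obstacle; the only subtlety is bookkeeping. Conceptually, one is simply rerunning the classical Noether argument, with the difference that the symplectic Euler-Lagrange equation is replaced by its contact Herglotz deformation, and the extra $(\partial L/\partial \dot{q}^i)(\partial L/\partial z)$ term in that deformation is exactly what produces the multiplicative dissipation rate $\partial L/\partial z$ on the right-hand side, rather than outright conservation. Alternatively, one could bypass the coordinate computation by observing that $\{E_L, f\} = \bar{\xi}_L(f) - \mathcal{R}_L(E_L)\,f$ (a consequence of the first Lemma of the previous subsection applied to $H = E_L$), so that the desired equivalence is simply the identification of $X^C(L)$ with $\{E_L, X^V(L)\}$; the latter identification is itself the content of the displayed identity.
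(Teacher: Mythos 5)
Your proposal is correct and complete. The paper, being a survey, states this theorem without proof, but your argument is exactly the expected one: the pointwise identity $\bar{\xi}_L\bigl(\partial L/\partial \dot{q}^j\bigr) = \partial L/\partial q^j + (\partial L/\partial \dot{q}^j)(\partial L/\partial z)$ is indeed guaranteed on all of $TQ\times\mathbb{R}$ by the defining equation for the coefficients $\mathcal{B}^i$ (regularity of $L$ being used there), the Leibniz computation correctly assembles $\bar{\xi}_L(X^V(L)) = X^C(L) + X^V(L)\,\partial L/\partial z$, and the verification $-\mathcal{R}_L(E_L)=\partial L/\partial z$ via $\partial E_L/\partial\dot{q}^i = \dot{q}^jW_{ij}$ is right. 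Your closing observation that the equivalence can also be phrased through the lemma $\{f,g\}=X_f(g)+g\mathcal{R}(f)$ applied to $H=E_L$ is worth keeping, since it is what ties the displayed dissipation equation to the word ``commutes'' in the statement and to the subsequent remark that $X^C=X_{X^V(L)}$ for an infinitesimal symmetry.
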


Notice that if $X$ is  an infinitesimal symmetry of $L$, then $X^C$ is the Hamiltonian vector field of $X^V(L)$, say
$$
X^C = X_{X^V(L)}.
$$

The above definition can be slightly extended as follows

\begin{definition} Let $Y$ a vector field on $Q \times \RR$ such that $Y^C$ is tangent to 
$TQ \times \RR$. Then $Y$ is called a {\sl generalized infinitesimal symmetry} of $L$ if 
$$
\bar{Y}^C (L) = - {\Reeb}_L (f) L
$$
where 
$$
f = \bar{Y} (L) - {\mathcal{Z}}
$$
and ${\mathcal{Z}}$ is the $z$-component of $Y$.
\end{definition}

\begin{theorem} Let $Y$ be a {\sl generalized infinitesimal symmetry} of $L$. Then
$$
f = \bar{Y}^V (L) - {\mathcal{Z}}
$$
commutes with $E_L$, and, conversely, in that case, $Y$ is a generalized infinitesimal symmetry of $L$.
\end{theorem}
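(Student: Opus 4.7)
The plan is to reduce the equivalence to the Proposition from Section~6.2, applied to the contact Hamiltonian system $(TQ \times \RR, \eta_L, E_L)$ with the candidate vector field $X = \bar Y^C$. The starting point is the direct contraction
\[
\eta_L(\bar Y^C) = \mathcal{Z} - Y^i \frac{\partial L}{\partial \dot q^i} = -(\bar Y^V(L) - \mathcal{Z}) = -f,
\]
which follows at once from the coordinate form of $\eta_L$ and $\bar Y^C$. The Proposition then gives
\[
\{E_L, f\} = (\mathcal{L}_{\bar Y^C}\eta_L)(\xi_L) + \bar Y^C(E_L),
\]
and by antisymmetry of the Jacobi bracket, $f$ commutes with $E_L$ iff the right-hand side vanishes. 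So the whole theorem reduces to establishing the identity
\[
(\mathcal{L}_{\bar Y^C}\eta_L)(\xi_L) + \bar Y^C(E_L) = -\bigl(\bar Y^C(L) + \Reeb_L(f)\,L\bigr).
\]

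To prove this identity I would decompose $\bar Y^C = X^C + \mathcal{Z}\,\partial_z$, where $X = Y^i\,\partial_{q^i}$, and handle the two summands on the left-hand side separately. For $\bar Y^C(E_L)$, the easy commutator identity $[\bar Y^C,\Delta] = 0$ (immediate from $\mathcal{Z}$ being independent of the $q^i$ and $Y^i$ of the $\dot q^j$) together with $E_L = \Delta(L) - L$ yields an explicit expression. For $\mathcal{L}_{\bar Y^C}\eta_L$ I would use the classical identity $\mathcal{L}_{X^C}\lambda_L = \lambda_{X^C(L)}$ (a consequence of $\mathcal{L}_{X^C}S=0$) together with a direct Cartan computation of $\mathcal{L}_{\mathcal{Z}\partial_z}\eta_L$; contracting with $\xi_L$ uses the values $\xi_L^{q^i} = \dot q^i$ and $\xi_L^z = L$, the latter forced by the universal identity $\eta_L(\xi_L) = -E_L$ for contact Hamiltonian vector fields. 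Finally, I would establish the clean formula $\Reeb_L(f) = -\partial\mathcal{Z}/\partial z$, in which the $\Reeb_L(\bar Y^V(L))$ contribution vanishes after the cancellation $W^{ij}W_{jk}=\delta^i_k$ between the inverse Hessian and the Hessian.

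The main obstacle is the bookkeeping needed to combine these three pieces: the $\Delta(X^C(L))$ terms arising from $(\mathcal{L}_{X^C}\lambda_L)(\xi_L)$ and from $\bar Y^C(E_L)$ must cancel, and the several $\mathcal{Z}\,\dot q^i\,\partial^2 L/(\partial z\,\partial\dot q^i)$-type terms coming from the $\partial_z$-coefficient of $\xi_L$, from $\mathcal{L}_{\mathcal{Z}\partial_z}\eta_L$, and from $\mathcal{Z}\,\partial_z(E_L)$ must also cancel pairwise. What then survives is precisely $-\bar Y^C(L) + (\partial\mathcal{Z}/\partial z)L = -\bigl(\bar Y^C(L) + \Reeb_L(f)L\bigr)$, and both directions of the theorem follow immediately by comparing with the defining relation $\bar Y^C(L) = -\Reeb_L(f)L$ of a generalized infinitesimal symmetry. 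As a sanity check, specializing to $\mathcal{Z}=0$ recovers the previous theorem, since $\Reeb_L(\bar Y^V(L)) = 0$ automatically when $Y^i$ is $z$-independent, so the generalized condition collapses to $X^C(L)=0$.
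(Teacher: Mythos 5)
The paper states this theorem without proof, so there is no argument of the authors' to compare against; judged on its own terms, your reduction through the Proposition of Section~6.2 (via $\eta_L(\bar{Y}^C)=-f$, which is correct) is the natural strategy, and the target identity
\[
(\mathcal{L}_{\bar{Y}^C}\eta_L)(\xi_L) + \bar{Y}^C(E_L) \;=\; -\bigl(\bar{Y}^C(L) + \Reeb_L(f)\,L\bigr)
\]
is in fact true, so the equivalence does follow from your scheme. (Two harmless remarks: the second equality of the quoted Proposition should carry an overall minus sign, $\{H,f\}=-(\mathcal{L}_X\eta)(X_H)-X(H)$, but since you only need the right-hand side to vanish this changes nothing; and you are right that the $z$-component of $\xi_L$ is $L$, as forced by $\eta_L(\xi_L)=-E_L$.)

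The genuine flaw is the ``clean formula'' $\Reeb_L(f)=-\partial\mathcal{Z}/\partial z$. The tangency condition only forces $\partial\mathcal{Z}/\partial q^i=0$; the components $Y^i$ are allowed to depend on $z$, and the cancellation $W^{ij}W_{jk}=\delta^i_k$ removes the $Y^i\,\partial^2 L/\partial z\,\partial\dot{q}^i$ terms but not the rest, leaving
\[
\Reeb_L(f)=\frac{\partial Y^i}{\partial z}\frac{\partial L}{\partial\dot{q}^i}-\frac{\partial\mathcal{Z}}{\partial z}.
\]
Correspondingly, your tally of surviving terms omits $-L\,\frac{\partial Y^i}{\partial z}\frac{\partial L}{\partial\dot{q}^i}$, which arises when the $L\,\partial/\partial z$ component of $\xi_L$ hits the $z$-dependence of $Y^i$ inside $\xi_L(f)$. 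The two omissions compensate, which is why the displayed identity above is nevertheless correct when written in terms of $\Reeb_L(f)$; but as sketched your computation only establishes the case $\partial Y^i/\partial z=0$, where the theorem degenerates to the preceding one. Concretely, for $L=\tfrac12\dot{q}^2$ and $Y=z\,\partial/\partial q$ one has $f=z\dot{q}$, $\Reeb_L(f)=\dot{q}\neq 0$ and $\{E_L,f\}=\tfrac12\dot{q}^3\neq0$, whereas your formula would certify $f$ as commuting with $E_L$. Redo the bookkeeping keeping the $\partial Y^i/\partial z$ terms and the argument closes.
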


We can consider more types of infinitesimal symmetries.

\smallskip

\begin{definition} 
A vector field $\tilde{Y}$ on $TQ \times \RR$ is called a {\sl Cartan symmetry} 
if 
$$
{\mathcal{L}}_{\tilde{Y}} \, \eta_L = a \, \eta_L + dg; \tilde{Y}(E_L) = a E_L + g {\Reeb}_L (E_L)
$$
for some functions $a, g \in C^\infty(TQ \times \RR )$.

A vector field $Y$ on $Q \times \RR$ such that $Y^C$ is tangent to $TQ \times \RR$
is called a {\sl Noether symmetry} if $\bar{Y}^C$ is a Cartan symmetry.

\end{definition}

\begin{theorem}

(1) If $Y$ is a Noether symmetry such that
$$
\bar{Y} ^C(L) = g^C
$$
then
$$
f = \bar{Y}^V(L) - g^V
$$
commutes with $E_L$.

(2) If $\tilde{Y}$ is a Cartan symmetry such that
$$
{\mathcal{L}}_{\tilde{Y}} \, \eta_L = dg 
$$
then 
$$
\eta_Y(\tilde{Y}) - g
$$ 
commutes with $E_L$.

\end{theorem}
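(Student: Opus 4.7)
The plan is to prove part (2) first by a direct application of the Proposition that expresses $\{H,f\}$ in terms of any test vector field $X$ with $\eta(X) = -f$, and then to derive part (1) by recognising the Noether hypothesis as a Cartan symmetry and reducing to part (2).

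For part (2), set $f := \eta_L(\tilde Y) - g$ and choose the test field
\[
X := -\tilde Y + g\,\mathcal{R}_L,
\]
which satisfies $\eta_L(X) = -\eta_L(\tilde Y) + g\,\eta_L(\mathcal{R}_L) = -f$. The Proposition then gives
\[
\{E_L, f\} = (\mathcal{L}_X \eta_L)(\xi_L) + X(E_L).
\]
Cartan's magic formula together with $\iota_{\mathcal{R}_L} d\eta_L = 0$ and $\eta_L(\mathcal{R}_L) = 1$ yields $\mathcal{L}_{g\mathcal{R}_L}\eta_L = d(g\,\eta_L(\mathcal{R}_L)) + g\,\iota_{\mathcal{R}_L} d\eta_L = dg$, and combined with the hypothesis $\mathcal{L}_{\tilde Y}\eta_L = dg$ this gives $\mathcal{L}_X \eta_L = 0$. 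The remaining Cartan condition, which with $a = 0$ reads $\tilde Y(E_L) = g\,\mathcal{R}_L(E_L)$, then forces $X(E_L) = -\tilde Y(E_L) + g\,\mathcal{R}_L(E_L) = 0$. Thus $\{E_L, f\} = 0$, proving that $f = \eta_L(\tilde Y) - g$ commutes with $E_L$.

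For part (1), the definition of Noether symmetry says that $\bar Y^C$ is a Cartan symmetry, so there exist functions $a, h$ with $\mathcal{L}_{\bar Y^C}\eta_L = a\eta_L + dh$ and $\bar Y^C(E_L) = aE_L + h\,\mathcal{R}_L(E_L)$. The idea is to use the additional assumption $\bar Y^C(L) = g^C$ to pin down $(a, h)$ explicitly: a direct coordinate computation via Cartan's magic formula, using $\iota_{\bar Y^C}\eta_L = \mathcal{Z} - \bar Y^V(L)$ and $d\eta_L = dq^i \wedge d(\partial L/\partial \dot q^i)$, together with the identity that rewrites $\bar Y^C(\partial L/\partial \dot q^i)$ in terms of derivatives of $\bar Y^C(L) = g^C$, allows one to recast $\mathcal{L}_{\bar Y^C}\eta_L$ as the differential of a function built out of $g^V$ and $\bar Y^V(L)$. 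This should force $a = 0$ and identify $h$ so that part (2), applied to the pair $(\bar Y^C, h)$, outputs $\bar Y^V(L) - g^V$ (up to an overall sign, irrelevant for the Jacobi bracket with $E_L$) as the observable commuting with $E_L$.

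The main obstacle will be the coordinate bookkeeping in part (1): verifying rigorously that the Noether hypothesis really does kill the conformal factor $a$, and extracting the correct expression for $h$ so that the quantity produced by part (2) matches $\bar Y^V(L) - g^V$ rather than a nearby observable differing by a Reeb-type correction. Once that identification is in place, everything else reduces to clean applications of Cartan's magic formula and of the Proposition on the Jacobi bracket established earlier in the section.
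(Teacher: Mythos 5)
The paper states this theorem without proof, so your argument has to stand on its own. Part (2) does: choosing $X=-\tilde Y+g\,\mathcal{R}_L$, checking $\eta_L(X)=-f$, computing $\mathcal{L}_{g\mathcal{R}_L}\eta_L=dg$ by Cartan's formula, and feeding the two Cartan conditions with $a=0$ into the Proposition $\{E_L,f\}=(\mathcal{L}_X\eta_L)(\xi_L)+X(E_L)$ is exactly the right use of the machinery of this section, and every step is correct. (Your reading of the hypothesis as fixing the Cartan pair to be $(0,g)$ is the intended one; in fact $a=0$ is forced, since $a\eta_L$ exact implies $a=0$ for a contact form.)

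Part (1), however, is a plan rather than a proof, and the step you yourself flag as ``the main obstacle'' is where all the content lies. Concretely: using $[\partial_{\dot q^i},\bar Y^C]=(\partial Y^j/\partial q^i)\,\partial_{\dot q^j}$ and $\bar Y^C(L)=g^C$ one finds $\bar Y^C(\partial L/\partial\dot q^i)=\partial g/\partial q^i-(\partial Y^j/\partial q^i)(\partial L/\partial\dot q^j)$, whence
$\mathcal{L}_{\bar Y^C}\eta_L=d(\mathcal{Z}-g^V)+\left(\mathcal{Z}'+\partial g/\partial z-(\partial L/\partial\dot q^i)(\partial Y^i/\partial z)\right)dz$.
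The leftover $dz$-term is not of the form $a\eta_L+dh$ in general, so ``this should force $a=0$ and identify $h$'' is not automatic: it needs either the constraints implicit in the Noether-symmetry definition or an explicit argument, neither of which you supply. Moreover, even when the leftover vanishes, applying part (2) to $\tilde Y=\bar Y^C$ with $h=\mathcal{Z}-g^V$ yields the observable $\eta_L(\bar Y^C)-h=(\mathcal{Z}-\bar Y^V(L))-(\mathcal{Z}-g^V)=-(\bar Y^V(L)-g^V)$; the unwanted $\mathcal{Z}$ cancels only because it is absorbed into $h$. Your worry about ``a Reeb-type correction'' is precisely this point, and it remains unresolved in the write-up. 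A cleaner route that avoids the matching problem: apply the Proposition directly with $X=\bar Y^C+(g^V-\mathcal{Z})\mathcal{R}_L$, observe that $\bar Y^C(L)=g^C$ alone already gives $\bar Y^C(E_L)=0$ by a two-line coordinate computation, and then invoke the second Cartan condition from the Noether hypothesis to kill the residual $(g^V-\mathcal{Z})\mathcal{R}_L(E_L)$ term. As written, part (1) is a genuine gap.
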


\begin{definition} 
A vector field $Y$ on $Q \times \RR$ such that
$Y^C$ is tangent to $TQ \times \RR$ and $\bar{Y}^C$ is a dynamical symmetry will be called a {\sl Lie symmetry}.
\end{definition}

\begin{theorem}
If $Y$ is a Lie symmetry, then 
$$
- \eta_L(\bar{Y}^C) = \bar{Y}^V(L) - {\mathcal{Z}}
$$
commutes with $E_L$.
\end{theorem}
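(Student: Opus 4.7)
The plan is to unwind the definitions and then identify the function $-\eta_L(\bar{Y}^C)$ as an observable that commutes with $E_L$ by appealing to the characterization of dynamical symmetries proved earlier in the paper.

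First I would recall that a Lie symmetry is by definition a vector field $Y$ on $Q \times \mathbb{R}$ (with $Y^C$ tangent to $TQ \times \mathbb{R}$) such that $\bar{Y}^C$ is a dynamical symmetry of the contact Hamiltonian system $(TQ \times \mathbb{R}, \eta_L, E_L)$. By the earlier theorem characterizing dynamical symmetries, this is equivalent to the statement that the function $\eta_L(\bar{Y}^C)$ commutes with $E_L$ under the Jacobi bracket associated to $\eta_L$. Since the bracket is bilinear, $-\eta_L(\bar{Y}^C)$ then also commutes with $E_L$.

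The second step is a short coordinate calculation identifying $-\eta_L(\bar{Y}^C)$ with the claimed expression. Using the local formula $\eta_L = dz - \frac{\partial L}{\partial \dot{q}^i}\, dq^i$ and
\[
\bar{Y}^C = Y^i \frac{\partial}{\partial q^i} + \mathcal{Z}\, \frac{\partial}{\partial z} + \dot{q}^j \frac{\partial Y^i}{\partial q^j}\, \frac{\partial}{\partial \dot{q}^i},
\]
the contraction picks out precisely the $dz$- and $dq^i$-components:
\[
\eta_L(\bar{Y}^C) = \mathcal{Z} - \frac{\partial L}{\partial \dot{q}^i}\, Y^i = \mathcal{Z} - \bar{Y}^V(L),
\]
where in the last equality I use that $\bar{Y}^V = Y^i \frac{\partial}{\partial \dot{q}^i}$. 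Negating both sides yields the stated formula $-\eta_L(\bar{Y}^C) = \bar{Y}^V(L) - \mathcal{Z}$.

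Combining the two steps, the function $\bar{Y}^V(L) - \mathcal{Z}$ commutes with $E_L$. There is no real obstacle here: the content of the theorem is essentially a dictionary between Lie symmetries and dissipated quantities, and once one observes that $\eta_L(\bar{Y}^C)$ is (up to sign) exactly $\bar{Y}^V(L) - \mathcal{Z}$, the result is an immediate corollary of the dynamical-symmetry theorem. The only point that warrants care is verifying the tangency condition on $Y^C$ (which forces $\mathcal{Z}$ to be independent of the $q^i$ as noted in the paper) so that $\bar{Y}^C$ is genuinely a vector field on $TQ \times \mathbb{R}$ to which the earlier Hamiltonian-system results apply.
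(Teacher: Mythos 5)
Your proof is correct and follows exactly the route the paper intends: since a Lie symmetry is by definition one for which $\bar{Y}^C$ is a dynamical symmetry of $(TQ\times\mathbb{R},\eta_L,E_L)$, the earlier theorem gives that $\eta_L(\bar{Y}^C)$ (hence, by bilinearity and antisymmetry of the Jacobi bracket, $-\eta_L(\bar{Y}^C)$) commutes with $E_L$, and your coordinate computation correctly identifies $-\eta_L(\bar{Y}^C)=\bar{Y}^V(L)-\mathcal{Z}$. Your closing remark about the tangency condition ($\mathcal{Z}$ independent of the $q^i$) is exactly the right point of care, and nothing further is needed.
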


\section{Hamilton-Jacobi equation}

We consider the extended phase space $T^{*}Q\times \mathbb{R}$, and a Hamiltonian function
$H:T^{*}Q\times \mathbb{R} \rightarrow \mathbb{R}$. 
\[
\xymatrix{ T^{*}Q\times \mathbb{R}
\ar[dd]^{\rho} \ar[ddrr]^{z}\ar@/^2pc/[ddrr]^{H}\\
  &  & &\\
T^{*}Q &  & \mathbb{R}}
\]
Recall that we have local canonical coordinates $\{q^i,p_i,z\}, i=1,\dots,n$ such that
the one-form is $\eta=dz-\rho^{*}\theta_Q$, $\theta_Q$ being the canonical 1-form on $T^*Q$, can be locally expressed as follows
\begin{equation}\label{contactoneform}
 \eta=dz-\sum_{i=1}^n p_idq^i.
\end{equation}
$(T^{*}Q\times \mathbb{R},\eta)$ is a contact manifold with 
Reeb vector field 
$\mathcal{R}=\frac{\partial}{\partial z}.$

To have dynamics, we consider the vector field
\begin{equation}
 X_H=\sharp_{\Lambda}(dH)+H \Reeb.
\end{equation}
In coordinates, it reads
{\begin{footnotesize}
\begin{equation}\label{1hvf}
 X_H= \sum_{i=1}^n\frac{\partial H}{\partial p_i}\frac{\partial}{\partial q^i} -\sum_{i=1}^n\left(p_i\frac{\partial H}{\partial z}+\frac{\partial H}{\partial q^i}\right)\frac{\partial}{\partial p_i} + \sum_{i=1}^n \; \left(p_i\frac{\partial H}{\partial p_i} -H\right)\frac{\partial}{\partial z} .
\end{equation}
\end{footnotesize}}
We also have
\begin{equation*}
 \flat{(X_H)}= dH -(\mathcal{R}(H)+H)\eta,
\end{equation*}
where $\flat$ is the isomorphism previously defined~\eqref{eq:contact_iso} and 
\begin{equation}\label{1exph}
 \eta(X_H)=-H.
\end{equation}
Recall that $(T^{*}Q\times \mathbb{R},\Lambda,\mathbb{R})$ is a Jacobi manifold with $\Lambda$ given in the usual way.
The proposed contact structure provides us with the {\it dissipative Hamilton equations}.

\begin{equation}\label{hamileq}
\left\{\begin{aligned}
 {\dot q}^i&=\frac{\partial H}{\partial p_i},\\
 {\dot p}_i&=-\frac{\partial H}{\partial q^i}-p_i\frac{\partial H}{\partial z},\\
{\dot z}&=p_i\frac{\partial H}{\partial p_i}-H.
 \end{aligned}\right.
 \end{equation}
for all $i=1,\dots,n$.

Consider $\gamma$ a section of $\pi:T^{*}Q\times \mathbb{R} \rightarrow Q\times \mathbb{R}$, i.e., $\pi\circ \gamma=\text{id}_{Q\times \mathbb{R}}$. We can use $\gamma$ to project $X_H$ on $Q\times \mathbb{R}$
just defining a vector field $X_{H}^{\gamma}$ on $Q\times \mathbb{R}$ by
\begin{equation}\label{hjpar}
 X_H^{\gamma}=T{\pi}\circ X_{H}\circ \gamma,
\end{equation}
where $T\pi$ is the tangent map of $\pi$.
The following diagram summarizes the above construction
\[
\xymatrix{ T^{*}Q\times \mathbb{R}
\ar[dd]^{\pi} \ar[rrr]^{X_H}&   & &T(T^{*}Q\times \mathbb{R})\ar[dd]^{T{\pi}}\\
  &  & &\\
Q\times \mathbb{R} \ar@/^2pc/[uu]^{\gamma}\ar[rrr]^{X^{\gamma}_H}&  & & T(Q\times \mathbb{R})}
\]

We can compute $T\gamma (X_H^\gamma)$ and obtain

\begin{equation}\label{congamma}
T\gamma (X_H^\gamma) = \frac{\partial H}{\partial p_i} \frac{\partial}{\partial q^i} +
(\gamma_i \frac{\partial H}{\partial p_i} - H) \frac{\partial}{\partial z}.
\end{equation}

Therefore, from (\ref{1hvf}) and (\ref{congamma}), we have that
$$
X_H \circ \gamma = T\gamma (X_H ^\gamma)
$$
if and only if

\begin{equation}\label{hjlocal}
\frac{\partial H}{\partial q^j} + 
\frac{\partial H}{\partial p_i} \frac{\partial \gamma_j}{\partial q^i} +
\gamma_j \frac{\partial H}{\partial z} + \gamma_i \frac{\partial \gamma_j}{\partial z} \frac{\partial H}{\partial p_i} - H \frac{\partial \gamma_j}{\partial z} = 0.
\end{equation}

Assume now that

\begin{enumerate}

\item $\gamma(Q\times \mathbb{R})$ is a coisotropic submanifold of 
$(T^{*}Q\times \mathbb{R}, \eta)$;

\item $\gamma_z (Q)$ is a Legendrian submanifold of $(T^{*}Q\times \mathbb{R}, \eta)$, for any $z \in \mathbb{R}$,
where $\gamma_z (q) = \gamma (q, z)$.

\item Notice that the above two conditions imply that $\gamma(Q \times \mathbb{R})$ is foliated by
Legendre leaves $\gamma_z(Q)$, $z \in \mathbb{R}$.

\end{enumerate}

We will discuss the consequences of the above conditions.
The submanifold $\gamma(Q \times \mathbb{R})$ is locally defined by the functions
$$
\phi_i = p_i - \gamma_i = 0.
$$
Therefore, the first condition is equivalent to

\begin{equation}\label{coiso}
\frac{\partial \gamma_i}{\partial q^j} - \gamma_j \frac{\partial \gamma_i}{\partial z} - \frac{\partial \gamma_j}{\partial q^i} + 
\gamma_i \frac{\partial \gamma_j}{\partial z} = 0.
\end{equation}
If, in addition, $\gamma_z(Q)$ is Legendre submanifold for any fixed $z \in \mathbb{R}$, then we obtain
\begin{equation}\label{coiso2}
\frac{\partial \gamma_i}{\partial q^j} - \frac{\partial \gamma_j}{\partial q^i} = 0 
\end{equation}
and, using again (\ref{coiso}), we get
\begin{equation}\label{coiso3}
\gamma_j \frac{\partial \gamma_i}{\partial z} - 
\gamma_i \frac{\partial \gamma_j}{\partial z} = 0.
\end{equation}

Under the above conditions (using \ref{coiso2} and \ref{coiso3}), \ref{hjlocal} becomes

\begin{equation}\label{hjlocal2}
\frac{\partial H}{\partial q^j} + 
\frac{\partial H}{\partial p_i} \frac{\partial \gamma_i}{\partial q^j} +
\gamma_j \left( \frac{\partial H}{\partial z} + \frac{\partial H}{\partial p_i} \frac{\partial \gamma_i}{\partial z} \right) - H \frac{\partial \gamma_j}{\partial z} = 0.
\end{equation}

We can write down eq (\ref{hjlocal2}) in a more friendly way. First of all, consider the following functions and 1-forms defined on
$Q \times \mathbb{R}$:

\begin{enumerate}

\item 
$$
\gamma_o =  \frac{\partial H}{\partial z} + \frac{\partial H}{\partial p_i} \frac{\partial \gamma_i}{\partial z} 
$$
\item 
$$
d(H \circ \gamma_z) = (\frac{\partial H}{\partial q^j} + 
\frac{\partial H}{\partial p_i} \frac{\partial \gamma_i}{\partial q^j} ) dq^j
$$
\item
$$
i_{\frac{\partial}{\partial z}} (d(\gamma^* \theta_Q)) = \frac{\partial \gamma_j}{\partial z} dq^j
$$

\end{enumerate}

Therefore, eq (\ref{hjlocal2}) is equivalent to

\begin{equation}\label{hjglobal}
d (H \circ \gamma_z) + \gamma_o (\gamma^* \theta_Q) - (H\circ \gamma)  (i_{\frac{\partial}{\partial z}} (d(\gamma^* \theta_Q))) = 0.
\end{equation}

\begin{theorem}
 Assume that a section $\gamma$ of the projection $T^*Q \times \mathbb{R} \to Q \times \mathbb{R}$
is such that $\gamma(Q\times \mathbb{R})$ is a coisotropic submanifold of 
$(T^{*}Q\times \mathbb{R}, \eta)$, and $\gamma_z (Q)$ is a Legendrian submanifold of $(T^{*}Q\times \mathbb{R}, \eta)$, for any $z \in \mathbb{R}$. 
Then, the vector fields $X_H$ and $X_H^{\gamma}$ are $\gamma$-related if and only if (\ref{hjlocal2}) holds (equivalently, (\ref{hjglobal}) holds). 
\end{theorem}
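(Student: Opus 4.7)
The plan is as follows. The key starting point is that, by the definition of $\gamma$-relatedness, $X_H \circ \gamma = T\gamma(X_H^\gamma)$ is equivalent to equation (\ref{hjlocal}), which has already been obtained from the explicit coordinate expressions (\ref{1hvf}) and (\ref{congamma}). So the theorem reduces to showing that, under the coisotropic and fiberwise Legendrian hypotheses on $\gamma$, equation (\ref{hjlocal}) simplifies to (\ref{hjlocal2}), which in turn is equivalent to (\ref{hjglobal}).

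First I would translate the geometric hypotheses into coordinate conditions. The submanifold $\gamma(Q\times\mathbb{R})$ is locally cut out by the $2n$ constraints $\phi_i = p_i - \gamma_i(q,z) = 0$, so the coisotropy criterion (\ref{za2}), applied with $\phi_a = \phi_i$ and $\phi_b = \phi_j$, produces exactly equation (\ref{coiso}). Next, the assumption that each $\gamma_z(Q)$ is Legendrian means that $\gamma_z^*\eta = 0$ on $Q$, which by Proposition stated after (\ref{za2}) (or directly because a Legendrian submanifold is a maximal integral manifold of $\ker\eta$) forces $d(\gamma_z^*\theta_Q) = 0$, i.e.\ $\partial \gamma_i/\partial q^j - \partial \gamma_j/\partial q^i = 0$; this is (\ref{coiso2}). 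Substituting (\ref{coiso2}) into (\ref{coiso}) then yields the complementary identity (\ref{coiso3}).

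Next, I would plug both (\ref{coiso2}) and (\ref{coiso3}) into (\ref{hjlocal}). The term $\frac{\partial H}{\partial p_i}\frac{\partial \gamma_j}{\partial q^i}$ becomes $\frac{\partial H}{\partial p_i}\frac{\partial \gamma_i}{\partial q^j}$ using (\ref{coiso2}), while $\gamma_i\frac{\partial \gamma_j}{\partial z}\frac{\partial H}{\partial p_i}$ becomes $\gamma_j\frac{\partial \gamma_i}{\partial z}\frac{\partial H}{\partial p_i}$ using (\ref{coiso3}). Collecting terms with $\gamma_j$ then gives the more compact form (\ref{hjlocal2}). This is a short and mechanical rewriting; the converse implication (assuming still the geometric hypotheses) is obtained by reversing the same substitutions.

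Finally, for the equivalence of (\ref{hjlocal2}) with (\ref{hjglobal}), I would simply plug in the coordinate expressions for the three auxiliary objects $\gamma_o$, $d(H\circ\gamma_z)$ and $\iota_{\partial/\partial z}(d(\gamma^*\theta_Q))$ that have been listed just before the statement, and compare the coefficients of each $dq^j$ on both sides; they agree termwise with the left-hand side of (\ref{hjlocal2}). The main potential obstacle is purely notational: the coisotropy identity (\ref{coiso}) and the Legendre identity (\ref{coiso2}) must be combined in the right order, and the index bookkeeping on the $\gamma_i\,\partial \gamma_j/\partial z$ versus $\gamma_j\,\partial \gamma_i/\partial z$ exchange has to be done carefully. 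No analytic difficulty arises; the proof is entirely a coordinate manipulation interpreting the two hypotheses.
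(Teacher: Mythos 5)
Your proposal follows exactly the paper's own route: the theorem is just a summary of the computations carried out immediately before its statement, namely that $\gamma$-relatedness is the coordinate identity (\ref{hjlocal}), that coisotropy of $\gamma(Q\times\mathbb{R})$ and the Legendrian condition on each $\gamma_z(Q)$ yield (\ref{coiso}), (\ref{coiso2}) and (\ref{coiso3}), and that substituting these into (\ref{hjlocal}) produces (\ref{hjlocal2}), which is then repackaged term by term as the intrinsic equation (\ref{hjglobal}). The one imprecision is your claim that the Legendrian hypothesis amounts to $\gamma_z^*\eta=0$ — taken literally this would force $\gamma_i\equiv 0$, since $\gamma_z^*\eta=-\gamma_i\,dq^i$; what is actually used (and what the paper uses) is the weaker consequence $\gamma_z^*(d\eta)=0$, i.e.\ the closedness condition (\ref{coiso2}).
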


Equations (\ref{hjlocal2}) are (\ref{hjglobal}) are indistinctly referred as a {\it Hamilton--Jacobi equation with respect to a contact structure}. A section $\gamma$ fulfilling the assumptions of the theorem and the Hamilton-Jacobi equation will
be called a {\it solution} of the Hamilton--Jacobi problem for $H$.

\begin{remark}{\rm
Notice that if $\gamma$ is a solution of the Hamilton--Jacobi problem for $H$, then
$X_H$ is tangent to the coisotropic submanifold $\gamma(Q \times \mathbb{R})$, but
not necessarily to the Legendre submanifolds $\gamma_z(Q)$. This occurs when
$$
X_H(z -z_0) = 0
$$
for any $z_0$, that is, if and only if
$$
H \circ \gamma_{z_0} = \gamma_i \frac{\partial H}{\partial p_i}.
$$
In such a case, we call $\gamma$ an {\it strong solution} of the Hamilton--Jacobi problem.
}
\end{remark}

Next, we shall discuss the notion of complete solutions of the Hamilton--Jacobi problem for a Hamiltonian $H$.

\begin{definition}
 A {\it complete solution} of the Hamilton--Jacobi equation on a contact manifold $(M,\eta)$ 
 is a diffeomorphism $\Phi:Q\times \RR \times \RR^n\rightarrow T^{*}Q\times \RR$ such that for a set of
 parameters $\lambda\in \RR^n, \lambda=(\lambda_1,\dots,\lambda_n)$, the mapping
 
 \begin{equation}
 \begin{array}{ccc}
  \Phi_{\lambda}:Q\times \mathbb{R}& \rightarrow &  T^{*}Q\times \mathbb{R}  \\
  (q^i, z) &\mapsto &  \Phi(q^i,(\Phi_\lambda)_i(q,z),z)
 \end{array}
 \end{equation}
\noindent
is a solution of the Hamilton--Jacobi equation. If, in addition, any $\Phi_\lambda$ is strong,
then the complete solution is called strong.
\end{definition}

We have the following diagram

\[
\xymatrix{ Q\times \mathbb{R}\times \mathbb{R}^n
\ar[dd]^{\alpha} \ar[rrr]^{\Phi}&   & &T^{*}Q\times \mathbb{R}\ar[dd]^{f_i}\ar[lll]^{\Phi^{-1}}\\
  &  & &\\
 \mathbb{R}^n \ar[rrr]^{\pi_i}&  & & \mathbb{R}}
\]
\noindent
where we define functions $f_i$ such that for a point $p\in T^{*}Q\times \mathbb{R}$, it is satisfied
\begin{equation}\label{functions}
 f_i(p)=\pi_i\circ \alpha\circ \Phi^{-1}(p).
\end{equation}
and $\alpha:Q\times \mathbb{R}\times \mathbb{R}^n\rightarrow \mathbb{R}^n$ is the canonical projection.

The first immediate result is that
$$
\hbox{Im} \; \Phi_\lambda = \cap_{i=1}^n \, f_i^{-1}(\lambda_i)),
$$
where $\lambda = (\lambda_1, \cdots, \lambda_n)$. In other words,
$$
\hbox{Im} \; \Phi_\lambda = \{ x \in T^*Q \times \mathbb{R} \; | \; f_i(x) = \lambda_i, i=1, \cdots, n\}.
$$
Therefore, since $X_H$ is tangent to any of the submanifolds $\hbox{Im} \; \Phi_\lambda$, we deduce that
$$
X_H (f_i) = 0.
$$
So, these functions are conserved quantities.

Moreover, we can compute
$$
\{f_i, f_j\} = \Lambda (df_i, df_j) - f_i \mathcal{R}(f_j) + f_j \mathcal{R}(f_i).
$$
But
$$
\Lambda (df_i, df_j) = \sharp_\Lambda( df_i)(f_j) = 0
$$
since $(T \hbox{Im} \Phi_\Lambda)^\perp = \sharp_\Lambda (T \hbox{Im} \Phi_\Lambda))^o = T \hbox{Im} \Phi_\Lambda)$, so 
\begin{equation}\label{involution}
\{f_i, f_j\} = - f_i \mathcal{R}(f_j) + f_j \mathcal{R}(f_i).
\end{equation}

\begin{theorem}
 There exist no linearly independent commuting set of first-integrals in involution \eqref{functions} for a complete strong solution of the Hamilton--Jacobi
 equation on a contact manifold.
\end{theorem}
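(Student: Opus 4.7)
The starting point is the formula
\begin{equation*}
\{f_i, f_j\} = - f_i\,\mathcal{R}(f_j) + f_j\,\mathcal{R}(f_i)
\end{equation*}
displayed just above the theorem, which already folds in the vanishing $\Lambda(df_i,df_j)=0$ coming from the coisotropic character of $\mathrm{Im}\,\Phi_\lambda$. I would argue by contradiction: suppose the $n$ functions $f_i$ are linearly independent and pairwise in involution. Then $\{f_i,f_j\}=0$ reads $f_i\mathcal{R}(f_j) = f_j\mathcal{R}(f_i)$ for all $i,j$, which on the dense open set where some $f_j$ is nonzero produces a common smooth function $\mu$ with $\mathcal{R}(f_i) = \mu f_i$ for every $i$. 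In particular all ratios $f_i/f_j$ are Reeb-invariant, and each $f_i$ admits locally a factorisation $f_i(q,p,z) = g_i(q,p)\,F(q,p,z)$ with a common $F$ solving $\partial_z F/F = \mu$.

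Next I would bring in the strong hypothesis. Tangency of $X_H$ to each Legendre leaf $\Phi_\lambda(Q\times\{z_0\})$ gives both $X_H(f_i)=0$ (from tangency to the coisotropic $\mathrm{Im}\,\Phi_\lambda$) and $X_H(z)=0$ (the strong condition itself); since $\Phi$ is a diffeomorphism these hold on all of $T^*Q\times\mathbb{R}$. In Darboux coordinates $X_H(z)=0$ amounts to $p_i\,\partial H/\partial p_i = H$, so $H$ is positively $1$-homogeneous in the momenta and the $\partial/\partial z$-component of $X_H$ vanishes identically. Substituting the factorisation $f_i = g_i F$ into $X_H(f_i)=0$ then yields $X_H(g_i) = \nu g_i$ for a common function $\nu := -X_H(F)/F$, so the $g_i$ are simultaneous $X_H$-eigenfunctions with a shared eigenvalue.

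The core obstacle is the last step: I have to translate the pair of conditions ``common Reeb-scaling $\mathcal{R}(f_i)=\mu f_i$'' and ``common $X_H$-eigenvalue $\nu$'' into a failure of linear independence of $df_1,\ldots,df_n$. My plan is to expand $df_i = F\,dg_i + g_i\,dF$ and use the requirement that the fibre Jacobian $\partial f_i/\partial p_j = F J_{ij} + g_i a_j$, with $J_{ij} = \partial g_i/\partial p_j$ and $a_j = \partial F/\partial p_j$, be invertible (as forced by $\Phi$ being a diffeomorphism). By the matrix-determinant lemma this in particular forces $\det J \neq 0$, i.e.\ the $g_i$ are independent in the fibre directions of $T^*Q$. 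I would then argue that this pointwise fibre-independence is incompatible with the $g_i$ being $n$ common $\nu$-eigenfunctions of the single Hamiltonian flow under the $1$-homogeneity constraint on $H$, producing the sought contradiction with $df_1\wedge\cdots\wedge df_n\neq 0$. This last clash between ``invertibility/rank from the diffeomorphism'' and ``shared eigenvalue from involution plus strongness'' is the technical heart of the argument and the step I would treat most carefully.
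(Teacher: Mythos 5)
Your proposal does not close. You correctly reproduce the first half of the paper's argument: from the displayed identity $\{f_i,f_j\}=-f_i\mathcal{R}(f_j)+f_j\mathcal{R}(f_i)$, involution forces $\mathcal{R}(f_i)=\mu f_i$ for a single function $\mu$ (off the common zero set of the $f_i$), and strongness gives $X_H(z)=0$, i.e.\ $p_i\,\partial H/\partial p_i=H$, together with $X_H(f_i)=0$. But the second half — the factorisation $f_i=g_iF$, the common $X_H$-eigenvalue $\nu$ for the $g_i$, and the matrix-determinant lemma on the fibre Jacobian — is a plan, not a proof, and you say so yourself: the ``clash'' between fibre-independence of the $g_i$ and the shared eigenvalue is precisely the step you leave open. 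As stated, that clash need not exist: $n$ fibrewise-independent functions that are common eigenfunctions of one vector field with a shared eigenvalue is not contradictory in itself (take $\nu\equiv 0$ and the $g_i$ independent first integrals, which is exactly the Liouville--Arnold picture in the symplectic setting). So the contradiction cannot come from that data alone.

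What your route discards is the contact-geometric input that the paper's (admittedly terse) proof leans on when it says that for a strong solution ``the Reeb vector field is transverse to the Legendre foliation.'' The leaves $\Phi_\lambda(Q\times\{z_0\})$ are the joint level sets of $(f_1,\dots,f_n,z)$ and are Legendrian, hence integral manifolds of $\ker\eta$; consequently $\eta$ lies in the span of $df_1,\dots,df_n,dz$, while $\eta(\mathcal{R})=1$ says $\mathcal{R}$ is nowhere tangent to the leaves. It is the confrontation of this transversality with the involution-forced relation $\mathcal{R}(f_i)=\mu f_i$ (which says $\mathcal{R}$ acts on the whole coisotropic level set $\mathrm{Im}\,\Phi_\lambda$ by a common scalar) that is meant to destroy the linear independence of $df_1,\dots,df_n$. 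Your argument never brings $\eta$, the Legendrian condition, or the Reeb transversality into the final step, so even if you pushed harder on the eigenvalue bookkeeping you would be trying to derive a contradiction from hypotheses that are mutually consistent. To repair the proof you should replace the factorisation argument by an analysis of the relation $\eta=a^i\,df_i+b\,dz$ on the leaves, contracted against $\mathcal{R}$ and against $X_H$, using $\mathcal{R}(f_i)=\mu f_i$.
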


{\bf Proof:} If all the particular solutions are strong, then the Reeb vector field $\mathcal{R}$ will be
transverse to the Legendre foliation. So, if the brackets $\{f_i, f_j\} $ vanish, then we would obtain
that the functions $f_i$ cannot be linearly independent.

We remark that some notions of non-commutative integrability have been studied~\cite{Grillo2020}.

\section{Singular Lagrangians and Dirac-Jacobi bracket}

\subsection{Precontact systems}

As we know, if the Lagrangian is regular, then Herglotz's equations, (and, therefore, the variational problem) is equivalent to a contact Hamiltonian system. 

However this is not true for general Lagrangians~\cite{deLeon2019a}. In the following, we will provide some tools to deal with singular Lagrangians. For that, we will need to introduce a geometric model that generalizes contact geometry: precontact geometry. 

This geometry plays a similar role than presymplectic geometry~\cite{Gotay1979} for singular symplectic Lagrangian systems.

Let $\eta$ be a $1$-form in an $m$-dimensional manifold $M$. We define the \emph{characteristic distribution} of $\eta$ as
\begin{equation}
    {\mathcal{C}} = \ker \eta \cap \ker d \eta \subseteq TM.
\end{equation}
We say that $\eta$ is \emph{of class} $c$~\cite{Godbillon1969} if ${\mathcal{C}}$ is a distribution of rank $m-c$.

\begin{proposition}
    Let $\eta$ be a one-form on an $m$-dimensional manifold $M$. The following statements are equivalent:
    \begin{enumerate}
        \item The form $\eta$ is of class $2r+1$.
        \item At every point of $M$,
        \begin{equation}
            \eta \wedge {(d \eta)}^r \neq 0, \quad 
            \eta \wedge {(d \eta)}^{r+1} = 0.
        \end{equation}
        \item Around any point of $M$, there exist local \emph{Darboux} coordinates $x^1,\ldots x^r$, $y_1, \ldots y_r$, $z$, $u_1, \ldots u_s$, where $2r+s+1 = m$, such that
        \begin{equation}
            \eta = d z - \sum_{i=1}^r y_i d x^i.
        \end{equation}
    \end{enumerate}

 In that situation we say that $\eta$ is \emph{a precontact form of class $2r+1$}. 
\end{proposition}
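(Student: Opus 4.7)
The plan is to prove the cycle (3)$\Rightarrow$(2)$\Rightarrow$(1)$\Rightarrow$(3). The first implication is a direct computation: in the stated Darboux coordinates we have $d\eta=\sum_{i=1}^r dx^i\wedge dy_i$, so $(d\eta)^r=r!\,dx^1\wedge dy_1\wedge\cdots\wedge dx^r\wedge dy_r$ and $\eta\wedge(d\eta)^r=r!\,dz\wedge dx^1\wedge dy_1\wedge\cdots\wedge dx^r\wedge dy_r\neq 0$; on the other hand $(d\eta)^{r+1}$ vanishes for dimensional reasons since only $r$ independent symplectic pairs appear, so $\eta\wedge(d\eta)^{r+1}=0$.

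For (2)$\Rightarrow$(1), I would work pointwise and use the linear algebra of a 1-form $\eta_x$ together with a 2-form $\omega_x=d\eta_x$ on $T_xM$. Restricting $\omega_x$ to the hyperplane $\ker\eta_x$ gives an antisymmetric bilinear form whose rank $2k$ is characterized by $(\omega_x\restriction_{\ker\eta_x})^k\neq 0$ and $(\omega_x\restriction_{\ker\eta_x})^{k+1}=0$, conditions equivalent at the level of exterior algebra on $T_xM$ to $\eta_x\wedge\omega_x^k\neq 0$ and $\eta_x\wedge\omega_x^{k+1}=0$. Thus the hypothesis forces this rank to equal $2r$, and the radical of $\omega_x\restriction_{\ker\eta_x}$—which is exactly $\mathcal{C}_x=\ker\eta_x\cap\ker d\eta_x$—has dimension $(m-1)-2r$. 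Hence $\mathcal{C}$ is a distribution of rank $m-(2r+1)$, i.e.\ $\eta$ is of class $2r+1$.

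The heart of the argument is (1)$\Rightarrow$(3), which is the classical Pfaff-Darboux theorem. I would first show that $\mathcal{C}$ is involutive: if $X,Y$ are sections of $\mathcal{C}$, then from $\iota_X\eta=0$ and $\iota_X d\eta=0$ we get $\mathcal{L}_X\eta=\iota_Xd\eta+d\iota_X\eta=0$, hence $\eta([X,Y])=-(\mathcal{L}_X\eta)(Y)=0$, and similarly $\iota_{[X,Y]}d\eta=\mathcal{L}_X\iota_Y d\eta-\iota_Y\mathcal{L}_X d\eta=-\iota_Y d(\mathcal{L}_X\eta)=0$, so $[X,Y]\in\mathcal{C}$. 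By Frobenius, $\mathcal{C}$ integrates to a foliation of rank $s=m-2r-1$, and near any point there are adapted coordinates $(x^1,\dots,x^{2r+1},u_1,\dots,u_s)$ with $\mathcal{C}=\langle\partial/\partial u_j\rangle$. Because $\iota_{\partial_{u_j}}\eta=0$, $\iota_{\partial_{u_j}}d\eta=0$ and $\mathcal{L}_{\partial_{u_j}}\eta=0$, the form $\eta$ involves no $du_j$ and has coefficients independent of the $u_j$, so it descends to a 1-form $\bar\eta$ on the local leaf space $\bar M$, a manifold of dimension $2r+1$. The already-established equivalence (1)$\Leftrightarrow$(2) applied to $\bar\eta$ (together with the fact that $\mathcal{C}$ is the kernel of the quotient projection) shows $\bar\eta\wedge(d\bar\eta)^r\neq 0$, i.e.\ $\bar\eta$ is a genuine contact form.

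At that point I would invoke the standard contact Darboux theorem on $\bar M$ to obtain local coordinates $(z,x^1,\dots,x^r,y_1,\dots,y_r)$ with $\bar\eta=dz-\sum_{i=1}^r y_i\,dx^i$; pulling these back via the projection and completing by the transverse coordinates $u_j$ gives the required Darboux chart for $\eta$ on $M$. The main obstacle is the integrability/reduction step: one must carefully check that $\mathcal{C}$ is not merely involutive but has constant rank (which is precisely the hypothesis of being of \emph{class} $2r+1$, not just of class at most $2r+1$), and that the descent of $\eta$ to the local leaf space produces a form satisfying the nondegeneracy needed for the contact Darboux theorem; once that is in hand, everything else reduces to well-known facts.
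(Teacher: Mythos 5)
Your overall strategy --- the cycle (3)$\Rightarrow$(2)$\Rightarrow$(1)$\Rightarrow$(3), with Frobenius applied to $\mathcal{C}$, descent of $\eta$ to the local leaf space, and the contact Darboux theorem --- is the standard one (the paper gives no proof of this proposition and simply cites Godbillon), and your steps (3)$\Rightarrow$(2) and (1)$\Rightarrow$(3) are essentially sound. The genuine gap is in (2)$\Rightarrow$(1), at the sentence identifying the radical of $d\eta_x\restriction_{\ker\eta_x}$ with $\mathcal{C}_x=\ker\eta_x\cap\ker d\eta_x$. In general one only has $\mathcal{C}_x\subseteq\mathrm{rad}\bigl(d\eta_x\restriction_{\ker\eta_x}\bigr)$: a vector $w$ in that radical satisfies $d\eta_x(w,w')=0$ for all $w'\in\ker\eta_x$, but membership in $\ker d\eta_x$ additionally requires $d\eta_x(w,v_0)=0$ for a vector $v_0$ complementary to the hyperplane, and this extra linear condition can cut the dimension down by one. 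Concretely, take $M=\mathbb{R}^3$ with coordinates $(x,y,u)$ and $\eta=x\,dy$ on $\{x\neq 0\}$, with $r=0$: then $\eta\neq 0$ and $\eta\wedge d\eta=x\,dy\wedge dx\wedge dy=0$, so condition (2) holds, yet $\ker\eta=\langle\partial_x,\partial_u\rangle$, $\ker d\eta=\langle\partial_u\rangle$, so $\mathcal{C}$ has rank $1$ and $\eta$ has class $2$, not $1$. Hence (2) as printed does not imply (1), and your argument fails exactly where it silently excludes the even-class case.

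The root of the problem lies in the statement itself: the classical criterion for class $2r+1$ is $\eta\wedge(d\eta)^r\neq 0$ together with $(d\eta)^{r+1}=0$, not $\eta\wedge(d\eta)^{r+1}=0$; the latter, weaker condition is also compatible with class $2r+2$. With the corrected hypothesis your proof repairs easily: $(d\eta)^{r+1}=0$ gives $\mathrm{rank}\,d\eta_x\le 2r$ on all of $T_xM$, so $\dim\ker d\eta_x\ge m-2r$; your (correct) computation that $\eta\wedge(d\eta)^r\neq 0$ forces $\mathrm{rank}\bigl(d\eta_x\restriction_{\ker\eta_x}\bigr)=2r$ then pins down $\dim\ker d\eta_x=m-2r$, and $\ker d\eta_x\not\subseteq\ker\eta_x$ (otherwise it would sit inside the $(m-1-2r)$-dimensional radical), whence $\dim\mathcal{C}_x=m-2r-1$ as required. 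Relatedly, in (1)$\Rightarrow$(3) you should obtain the nondegeneracy of the quotient form $\bar\eta$ directly from $\ker\bar\eta\cap\ker d\bar\eta=0$ on a $(2r+1)$-manifold by the same pointwise count, rather than by invoking the (broken) equivalence (1)$\Leftrightarrow$(2); the rest of that step --- involutivity of $\mathcal{C}$, basicity of $\eta$, and the contact Darboux theorem upstairs --- is fine.
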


In coordinates, the characteristic distribution is given by
$$
    {\mathcal{C}} = span \{ \frac{\partial}{\partial u_a} _{a=1,\ldots,s} \}.
$$

A pair $(M,\eta)$ of a manifold equipped with a precontact form will be called a \emph{precontact manifold}. A triple $(M,\eta,H)$, where $(M,\eta)$ is a precontact manifold and $H\in {\mathcal{C}}^\infty (M)$ is the \emph{Hamiltonian function} will be called a \emph{precontact Hamiltonian system}.

The distribution ${\mathcal{C}}$ is involutive and it gives rise to a foliation of $M$. If the quotient $\pi:M \to M/{\mathcal{C}}$ has a manifold structure, then there is a unique $1$-form $\tilde\eta$ such that 
    $\pi^* \tilde{\eta}=\eta$. From a direct computation, $\tilde{\eta}$ is a contact form on $M/{\mathcal{C}}$. This justifies the name of \emph{precontact form}.

    We define the following morphism of vector bundles over $M$:
    \begin{equation}
        \begin{aligned}
            \bar{\flat}: TM &\to TM^*\\
            v &\mapsto i_{v} d \eta + \eta(v) \eta.
        \end{aligned}
    \end{equation}
    
    The following $2$-tensors are associated to $\bar{\flat}$ and its transpose
    \begin{equation}
        \omega = d \eta + \eta \otimes \eta, \quad \bar\omega = - d \eta + \eta \otimes \eta.
    \end{equation}
    In other words, $\bar{\flat}(v) = \omega(v,\cdot) = \bar\omega(\cdot,v)$. Therefore $\omega(v,w) = \bar\omega(w,v)$.

    A \emph{Reeb vector field} for $(M,\eta)$ is a vector field ${\Reeb}$ on $M$ such that
    \begin{equation}
        i_{\Reeb} d \eta = 0, \quad \eta({\Reeb}) = 1. 
    \end{equation}
    
    We note that there exists Reeb vector fields in every precontact manifold. Indeed we can define local vector fields ${\Reeb} = \frac{\partial}{\partial z}$ in Darboux coordinates and can extend it using partitions of unity.

 \begin{proposition}
        Let $(M,\eta)$ be a precontact manifold. We have  
        \begin{equation}
            {\mathcal{C}} =  \ker \eta \cap \ker d \eta = \ker \bar{\flat} = {(Im \bar{\flat})}^o.
        \end{equation}
  \end{proposition}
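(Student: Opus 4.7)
The first equality is by definition. My plan is to prove the other two equalities by using the Reeb vector field $\Reeb$ as a test element to cleanly separate the two terms $\contr{v} d\eta$ and $\eta(v)\eta$ that compose $\bar\flat(v)$.

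For the inclusion $\HorD \cap \ker d\eta \subseteq \ker\bar\flat$, if $v \in \ker\eta \cap \ker d\eta$ then $\eta(v)=0$ and $\contr{v} d\eta = 0$, hence $\bar\flat(v)=0$ immediately. Conversely, suppose $\bar\flat(v) = \contr{v} d\eta + \eta(v)\eta = 0$. Evaluating both sides on the Reeb vector field $\Reeb$ and using $\contr{\Reeb} d\eta = 0$ together with $\eta(\Reeb)=1$ gives
\begin{equation*}
0 = \bar\flat(v)(\Reeb) = d\eta(v,\Reeb) + \eta(v)\eta(\Reeb) = -\contr{\Reeb}d\eta(v) + \eta(v) = \eta(v),
\end{equation*}
so $\eta(v) = 0$, and then the defining equation forces $\contr{v} d\eta = 0$ as well. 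Thus $v \in \ker\eta \cap \ker d\eta$, giving $\ker\bar\flat = \mathcal{C}$.

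For the third equality $\ker \bar\flat = (\mathrm{Im}\,\bar\flat)^{\circ}$, I would unpack both sides in terms of the $2$-tensor $\omega = d\eta + \eta\otimes\eta$, for which $\bar\flat(v) = \omega(v,\cdot)$. The left-hand side is the left radical of $\omega$, while a vector $v$ lies in $(\mathrm{Im}\,\bar\flat)^{\circ}$ exactly when $\omega(w,v) = 0$ for every $w$, i.e.\ $v$ lies in the right radical of $\omega$. The argument of the previous paragraph applies symmetrically: if $\omega(w,v)=0$ for all $w$, then setting $w=\Reeb$ yields $\eta(v)=0$, after which $d\eta(w,v)=0$ for all $w$ shows $v \in \ker\eta\cap\ker d\eta = \mathcal{C}$. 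Conversely any $v \in \mathcal{C}$ obviously annihilates $\omega(w,\cdot)$ on the right.

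There is no real obstacle here; the only subtle point is being careful that $\bar\flat$ is not symmetric (indeed $\omega(v,w) = \bar\omega(w,v)$ with $\bar\omega = -d\eta + \eta\otimes\eta$), so the left and right radicals must be handled separately. The Reeb vector field, which exists by the remark preceding the proposition, is exactly the tool that makes both computations go through in the same way, because $\Reeb$ splits $TM$ into $\langle\Reeb\rangle \oplus \ker\eta$ in a way that diagonalizes $\omega$.
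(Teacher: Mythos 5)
Your proof is correct and complete: the paper states this proposition without an explicit proof, and your argument — testing $\bar\flat(v)$ against the Reeb vector field to isolate $\eta(v)$, then handling the left and right radicals of $\omega=d\eta+\eta\otimes\eta$ separately — is exactly the standard (coordinate-free) way to establish all three equalities, with the one genuinely delicate point (the asymmetry of $\omega$) correctly identified and dealt with. The only alternative route would be a direct check in the Darboux coordinates $(x^i,y_i,z,u_a)$ of the preceding proposition, which yields the same conclusion with no additional insight.
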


\begin{proposition}
        A vector field $X$ is a Reeb vector field for $(M,\eta)$ if and only if $\bar{\flat}(X) = \eta$. That is, the set of Reeb vector fields is ${\Reeb} + \Gamma({\mathcal{C}})$, where ${\Reeb}$ is an arbitrary Reeb vector field and $\Gamma ({\mathcal{C}})$ is the set of vector fields belonging to ${\mathcal{C}}$.
\end{proposition}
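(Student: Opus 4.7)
The plan is to prove both claims by a direct unpacking of the definition of $\bar{\flat}$, together with the characterization $\mathcal{C} = \ker \bar{\flat}$ furnished by the immediately preceding proposition. The existence of at least one Reeb vector field $\mathcal{R}$ (established just before via Darboux coordinates and a partition of unity) will be used as an auxiliary probe in the converse direction.

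For the equivalence ``$X$ is a Reeb vector field $\iff \bar{\flat}(X) = \eta$'': one direction is a one-line substitution, since if $i_X d\eta = 0$ and $\eta(X) = 1$ then $\bar{\flat}(X) = i_X d\eta + \eta(X)\eta = \eta$. For the converse, assume $\bar{\flat}(X) = \eta$. This rewrites as
\[
i_X d\eta = (1 - \eta(X))\, \eta .
\]
To conclude, I would contract both sides with the known Reeb vector field $\mathcal{R}$: the left side gives $d\eta(X,\mathcal{R}) = -(i_\mathcal{R} d\eta)(X) = 0$, while the right side gives $(1-\eta(X))\,\eta(\mathcal{R}) = 1 - \eta(X)$. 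Hence $\eta(X) = 1$, which then forces $i_X d\eta = 0$, so $X$ satisfies both defining conditions of a Reeb vector field.

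For the affine description ``the set of Reeb vector fields equals $\mathcal{R} + \Gamma(\mathcal{C})$'': fix one Reeb vector field $\mathcal{R}$. By the equivalence just proved, $X$ is Reeb iff $\bar{\flat}(X) = \eta = \bar{\flat}(\mathcal{R})$, i.e., iff $\bar{\flat}(X - \mathcal{R}) = 0$. By the preceding proposition, this kernel is exactly $\mathcal{C}$, so $X - \mathcal{R} \in \Gamma(\mathcal{C})$. The reverse inclusion is immediate by $\bar{\flat}$-linearity.

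There is no real obstacle here: the whole content of the statement is contained in the definition of $\bar{\flat}$ and the identification $\ker \bar{\flat} = \mathcal{C}$. The only mildly subtle point is the converse direction of the first equivalence, where one needs an external witness to extract $\eta(X) = 1$ from the equation $i_X d\eta = (1-\eta(X))\eta$; using an already-existing Reeb field is the cleanest way, and this is legitimate precisely because the paragraph preceding the proposition guarantees such a field exists globally.
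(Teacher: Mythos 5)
Your proof is correct and complete; the paper itself states this proposition without proof, and your argument is the natural one. Both directions check out: the forward implication is immediate from the definition of $\bar{\flat}$, the converse is correctly closed by contracting $i_X d\eta = (1-\eta(X))\eta$ with an existing Reeb field (which is indeed needed, since contracting with $X$ alone only gives $\eta(X)\in\{0,1\}$ pointwise), and the affine description follows from $\ker\bar{\flat}=\mathcal{C}$ exactly as you say.
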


For a distribution $\Delta$ on $M$, we define the following notion of complement with respect to $\omega$. Since $\omega$ is neither symmetric nor antisymmetric, we need to distinguish between right and left complements:

 \begin{eqnarray*}
            {\Delta}^\perp  &=&  
              \{X \in TM \mid \omega(Z,X) = \bar{\flat}(Z)(X) = 0,\, \forall Z\in \Delta \}
              = (\bar{\flat}(\Delta))^o,\\
            ^\perp{\Delta} &=& 
              \{X \in TM\mid \omega(X,Z) = 0, \forall Z \in \Delta \}.
      \end{eqnarray*}

    These complements have the following relationship
    \begin{equation}
        ^\perp({\Delta}^\perp) = 
        (^\perp{\Delta})^\perp = \Delta + {\mathcal{C}}.
    \end{equation}

    We remark that these complements interchange sums and intersections, since the annihilator interchanges them and the linear map $\bar{\flat}$ preserves them. Consequently, if $\Delta,\Gamma$ are distributions, we have
    \begin{eqnarray*}
            (\Delta\cap \Gamma)^\perp &=& {\Delta}^\perp + \Gamma^\perp,\\
            (\Delta +  \Gamma)^\perp &=& {\Delta}^\perp \cap {\Gamma}^\perp.
    \end{eqnarray*}

\subsection{The constraint algorithm}

    We aim to solve Hamilton equations on a precontact Hamiltonian system $(M, \eta, H)$. In order to do that, we will introduce an algorithm similar to the one introduced by M.J. Gotay in 1978~\cite{Gotay1979} for presymplectic systems and that was extended by D. Chinea, M. de Leon, and J.C. Marrero to the cosymplectic case~\cite{Chinea1994}.
    
    Let 
$$
\gamma_H = d H - (H + {\Reeb}(H))\eta
$$ 
where ${\Reeb}$ is a Reeb vector field (we will later see that the algorithm is independent on the choice of the Reeb vector field) and consider the equation
    \begin{equation}
        \bar{\flat}(X) = \gamma_H.
    \end{equation}
This equation might not have solution, so we will consider the subset $M_1 \subseteq M_0 = M$ of the points at which a solution exists. That is,
    \begin{equation}
        M_1 = \{p \in M_0 \mid (\gamma_H)_p \in \bar{\flat}(T_p M_0) \}.
    \end{equation}
    We note that this condition is equivalent to the following
    \begin{equation}
        M_1 = \{p \in M_0 \mid \langle (\gamma_H)_p , {TM_0}^\perp \rangle  = 0 \},
    \end{equation}
    since $\bar{\flat}(TM_0) = (\bar{\flat}(TM_0)^o)^o = (TM_0^\perp)^o$.

    If we choose a local basis $\{X_a\}_{j=a}^{k_1}$ of ${TM_0}^\perp$,  we can easily compute the so-called \emph{primary constraint} functions 
$$
\phi^a(p) = \langle d H_p - ({\Reeb}(H) + H) \eta_p ,X_a \rangle 
$$
whose zero set is the manifold $M_1$. We note that ${TM_0}^\perp =(\im \bar{\flat})^o = \ker \bar{\flat} = {\mathcal{C}}$. Hence,
    \begin{equation}
        \langle d H_p - ({\Reeb}(H) + H) \eta_p ,TM_0^\perp \rangle  = 
        \{Z_p(H) = 0 \mid Z_p \in {\mathcal{C}}_p \}.
    \end{equation}
    Therefore, in Darboux coordinates,
    \begin{equation}
        \phi^a = \frac{\partial H}{\partial s^a}.
    \end{equation}
    We note that this implies that ${\Reeb}(H) = \tilde{\Reeb}(H)$ 
along $M_1$ for every Reeb vector field $\tilde{\Reeb}$, since ${\Reeb}_p-\tilde{\Reeb}_p \in {\mathcal{C}}_p$. Consequently, $(\gamma_H)_{|M_1}$ is independent on the choice of the Reeb vector field. Therefore, the election of ${\Reeb}$ doesn't affect the constraints produced by the algorithm.

   Now we can solve Hamilton equations, but, in order to have meaningful dynamics, the solution $X$ should be tangent to the constraint submanifold. Otherwise, a solution of the equations of motion might escape from $M_1$. This tangency condition is equivalent to demand that $\bar{\flat}(X_p) \in \bar{\flat}(TM_p)$ since $\bar{\flat}$ is an isomorphism modulo ${\mathcal{C}}_p$: 
    \begin{equation}
        M_2 = \{p \in M_1 \mid \langle (\gamma_H)_p ,{TM_1}^\perp \rangle  = 0 \},
    \end{equation}
    providing a second constraint submanifold, with its corresponding constraint functions. However, it is not enough. We must again require that the vector field is tangent to the new submanifold. We then get a sequence of submanifolds
    \begin{equation}
    \begin{aligned}
            M_{i+1} &= 
            \{p \in M_i \mid (\gamma_H)_p \in\bar{\flat}(T_p M_i) \} \\ &=
            \{p \in M_i \mid \langle (\gamma_H)_p , (T_p M_i)^\perp  \rangle = 0 \}    
        \end{aligned}
    \end{equation}
    which eventually stabilizes, that is, there exist some $i_f$ such that $M_{i_f} = M_{i_f+1}$. We call this manifold the \emph{final constraint submanifold} and denote it by $M_f$. This submanifold is locally described by the zero set of some constraint functions $\{\phi^j\}_{j=1}^{k_f}$.

\bigskip

Let $L: TQ \times \mathbb R \to \mathbb R$ be a singular Lagrangian function. The objective is twofold: to develop a constraint algorithm in the Lagrangian side, but also the corresponding Hamiltonian counterpart. Of course, we will use the notations introduced in Section 3.
    
    We make the following observation, which is useful for working with precontact systems that come from a Lagrangian. The proof is trivial from the coordinate expression of $d \eta_L$.

\begin{proposition}
        Let $L:TQ\times\mathbb R \to \mathbb R$ be a Lagrangian function. Then, the form $\eta_L$ is precontact of class $2r+1$ if and only if the rank of the Hessian matrix of $L$ with respect to the velocities is $r$ at every point.

\end{proposition}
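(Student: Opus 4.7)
My plan is to apply the criterion from the previous proposition: the class of $\eta_L$ is $2r+1$ at a point precisely when $\eta_L\wedge(d\eta_L)^r\neq 0$ and $\eta_L\wedge(d\eta_L)^{r+1}=0$ at that point. The whole statement thus reduces to a direct coordinate computation of these two wedge products and their vanishing in terms of $\operatorname{rank} W$, where $W_{ij}=\partial^2 L/\partial\dot q^i\partial\dot q^j$.

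Concretely, I would first expand
\begin{equation*}
d\eta_L \;=\; W_{ij}\,dq^i\wedge d\dot q^j \;+\; (\text{terms involving only } dq^i \text{ and } dz),
\end{equation*}
the remaining summands coming from the mixed derivatives $\partial^2 L/\partial q^j\partial\dot q^i$ and $\partial^2 L/\partial z\partial\dot q^i$. The crucial structural feature is that every $d\dot q$ appearing in $d\eta_L$ sits in the Hessian term. Filtering $(d\eta_L)^k$ by its degree in $d\dot q$, the top ($d\dot q$-degree $k$) component is $(W_{ij}\,dq^i\wedge d\dot q^j)^k$; by the standard multilinear expansion this equals $k!\sum_{|I|=|J|=k}\pm\det(W_{IJ})\,dq^I\wedge d\dot q^J$, which is nonzero iff $\operatorname{rank} W\geq k$. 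Since $\eta_L$ has no $d\dot q$ factor, the top-$d\dot q$-degree component of $\eta_L\wedge(d\eta_L)^k$ is $\pm\,dz\wedge(W_{ij}\,dq^i\wedge d\dot q^j)^k$, and components of different $d\dot q$-degree are linearly independent. This already yields $\eta_L\wedge(d\eta_L)^r\neq 0$ iff $\operatorname{rank} W\geq r$.

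The main obstacle is proving the companion vanishing $\eta_L\wedge(d\eta_L)^{r+1}=0$ when $\operatorname{rank} W = r$: the top-$d\dot q$-degree piece vanishes for free, but the lower-degree pieces, arising from cross-wedges of the Hessian term with the $dq\wedge dq$ and $dq\wedge dz$ contributions of $d\eta_L$, need not vanish a priori. To handle them I would pass to local coordinates adapted to the constant-rank kernel of $W$: the fibrewise constant-rank theorem applied to $\dot q\mapsto\partial L/\partial\dot q$ yields coordinates $(q^i,\tilde y^\alpha,\tilde v^a,z)$ with $\alpha=1,\ldots,r$ and $a=1,\ldots,n-r$, in which $\partial L/\partial\dot q^i$ depends only on $(\tilde y,q,z)$. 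Then $\eta_L$ is independent of the $\tilde v^a$, so the $n-r$ fields $\partial/\partial\tilde v^a$ already sit in the characteristic distribution $\mathcal{C}$; locating the further $n-r$ directions needed to make $\dim\mathcal{C}=2(n-r)$---equivalently, rectifying the reduced $1$-form on the $(n+r+1)$-dimensional quotient into the Darboux normal form $dz-y_\alpha\,dx^\alpha$ of the earlier proposition---is the crux, and should follow from the fact that $\partial L/\partial\dot q^i$ is a fibrewise exact gradient, which severely constrains the remainder terms.
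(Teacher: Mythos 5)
Your instinct in the second paragraph---that the lower $d\dot q$-degree pieces of $\eta_L\wedge(d\eta_L)^{r+1}$ ``need not vanish a priori''---is exactly right, and in fact they do not vanish in general, so the step you flag as the main obstacle cannot be completed: the implication ``$\operatorname{rank}W=r$ implies $\eta_L\wedge(d\eta_L)^{r+1}=0$'' is false as it stands. Take $Q=\mathbb{R}^2$ and $L=q^2\dot q^1$. Then $W\equiv 0$, so $r=0$, but $\eta_L=dz-q^2\,dq^1$ is the pullback to the five-dimensional manifold $TQ\times\mathbb{R}$ of the standard contact form on $\mathbb{R}^3$, and
\begin{equation*}
\eta_L\wedge d\eta_L=(dz-q^2\,dq^1)\wedge(dq^1\wedge dq^2)=dz\wedge dq^1\wedge dq^2\neq 0,
\end{equation*}
so $\eta_L$ has class $3$, not $1$. (Similarly, $L=\tfrac12(\dot q^1)^2+q^2\dot q^3$ on $Q=\mathbb{R}^3$ has Hessian rank $1$ but $\eta_L\wedge(d\eta_L)^2\neq 0$.) The obstruction is the purely horizontal part of $d\lambda_L$, namely the $2$-form $\bigl(\partial^2 L/\partial q^j\partial\dot q^i\bigr)\,dq^j\wedge dq^i$ produced by velocity-affine ``magnetic'' terms in $L$; it is completely unconstrained by the Hessian, and the fibrewise-gradient structure of $\partial L/\partial\dot q^i$ that you invoke at the end does not control it. The same examples show that the ``iff'' you claim at the end of your first paragraph is only an ``if'': the top-degree argument gives $\operatorname{rank}W\ge k\Rightarrow\eta_L\wedge(d\eta_L)^k\neq 0$, but the converse fails.

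For what it is worth, the paper offers no proof beyond the assertion that the statement ``is trivial from the coordinate expression of $d\eta_L$,'' so your more careful analysis has uncovered a genuine problem with the proposition rather than merely a gap in your own argument: as stated it requires an additional hypothesis (the class of $\eta_L$ is governed by the full differential of the Legendre map, or equivalently by the rank of $d\eta_L$ on $\ker\eta_L$, not merely by the fibre derivative), and the Hessian rank alone only yields the one-sided conclusions you actually proved.
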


 Let $E_L = \Delta(L)-L$ be the energy and $\gamma_{E_L}= d E_L - ({\Reeb}(E_L) + E_L)\eta_L$, where $\eta_L$ is a precontact form of class $2r+1$. We remark that $(TQ \times \mathbb R,\eta_L,E_L)$ is a precontact Hamiltonian system. Hence, we can apply the constraint algorithm developed above to the equation 
$$
\bar{\flat}_L (X) = \gamma_{E_L}.
$$

    If we denote $P_1 = TQ \times \RR$, we will obtain a sequence of constraint submanifolds
    \begin{equation}
        \cdots \to P_{i}  \to \cdots \to P_2 \to P_1,
    \end{equation}
    where
    \begin{equation}
        P_{i+1} =
        \{p \in P_i \mid \langle (\gamma_H)_p , {T_p P_i}^\perp \rangle  = 0\}, 
    \end{equation}
    and $P_f$ is the final constraint submanifold. If it has positive dimension, then there would exist a vector field $X$ tangent to $P_f$ that solves the equations of motion along $P_f$.
    
   Of course, this solution will not be unique in general. We would get a new solution by adding a section of ${\mathcal{C}} \cap TP_f$, where ${\mathcal{C}} = \ker \bar{\flat}_L$ is the characteristic distribution.

Now we will develop a Hamiltonian counterpart of this theory.    
We will require the following additional regularity conditions on $L$ to make sure we get a precontact Hamiltonian system which is amenable to the constraint algorithm:

    \begin{definition}
        We say that a contact Lagrangian
        $L$ is \emph{almost regular} if
        \begin{itemize}
            \item $\eta_L$ is precontact.
            \item the Legendre transformation $FL$ is a submersion onto its image.
            \item For every $p\in T^*Q \times \mathbb R$, the fibers ${(FL)}^{-1}(p)$ are connected submanifolds. 
        \end{itemize}
    \end{definition}

 We denote by $M_1$ be the image of $FL$, which will be called the \emph{primary constraint submanifold}. Let ${FL}_1$ denote the restriction of $FL$ to $M_1$, and $g_1: M_1 \to T^*Q \times \mathbb R$ the canonical inclusion.

    The submanifold $M_1$ is equipped with the form $\eta_1 = {g_1}^*(\eta_Q)$, where $\eta_Q$ is the canonical contact form in $T^*Q \times \mathbb R$. By the commutativity of the diagram, we deduce
    \begin{equation}
        {FL}^*(\eta_1)={FL}^*(\eta_Q) = \eta_L.
    \end{equation}

 \begin{proposition}
        Let $L:P_1 = TQ \times \mathbb R  \to \mathbb R$ be an almost regular Lagrangian. Then $\eta_1 = {g_1}^*(\eta_Q)$ is a precontact form of the same class as $\eta_L$.
\end{proposition}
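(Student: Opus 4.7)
The key observation is the intertwining relation provided by the diagram: since $FL = g_1 \circ FL_1$, pulling back the canonical contact form $\eta_Q$ on $T^*Q \times \mathbb{R}$ yields
\[
\eta_L = FL^*\eta_Q = FL_1^*(g_1^*\eta_Q) = FL_1^*\eta_1.
\]
The plan is to transfer the characterization of the class of a precontact form from $\eta_L$ to $\eta_1$ via this pullback, exploiting the fact that $FL_1: P_1 \to M_1$ is a surjective submersion (this is built into the almost regularity hypothesis, since $FL$ is a submersion onto $M_1 = FL(P_1)$).

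\textbf{Main steps.} First, I would record that the pullback operator commutes with the exterior derivative and with wedge products, so for every integer $k \geq 0$,
\[
FL_1^*\bigl(\eta_1 \wedge (d\eta_1)^k\bigr) = \eta_L \wedge (d\eta_L)^k.
\]
Second, I would establish the auxiliary lemma that if $f: X \to Y$ is a surjective submersion, then $f^*\omega = 0$ implies $\omega = 0$ for any differential form $\omega$ on $Y$. The argument is pointwise: given $y \in Y$, choose $x$ with $f(x) = y$; since $df_x: T_xX \to T_yY$ is surjective, every tuple of tangent vectors at $y$ lifts to tangent vectors at $x$, and evaluating $f^*\omega$ on the lifts gives $\omega_y$. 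Third, combining these two facts, I would conclude that for each $k$,
\[
\eta_L \wedge (d\eta_L)^k = 0 \quad \Longleftrightarrow \quad \eta_1 \wedge (d\eta_1)^k = 0
\]
and similarly for non-vanishing at a given point of $M_1$ (using a preimage in $P_1$). By the characterization of class $2r+1$ recalled in the excerpt, if $\eta_L$ has class $2r+1$ (equivalently, the Hessian of $L$ has rank $r$) then so does $\eta_1$.

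\textbf{Where the obstacle lies.} The calculus of pullbacks and wedges is entirely routine; the only genuine step is the auxiliary lemma about injectivity of pullback under a surjective submersion, which itself is elementary once one uses surjectivity of the fiberwise tangent map. One should also check that the class of $\eta_1$ is constant on all of $M_1$, not merely generic: this follows because, by almost regularity, $FL_1$ is a surjective submersion with connected fibers at every point, so the pointwise equivalence above holds uniformly over $M_1$. No deeper obstruction appears, and the proof amounts to reading the class of $\eta_L$ off its pullback relation with $\eta_1$.
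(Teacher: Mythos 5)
Your proof is correct and takes essentially the approach the paper sets up: the survey records $FL^{*}(\eta_{1})=\eta_{L}$ with $FL_{1}$ a surjective submersion onto $M_{1}$ immediately before the proposition, and transferring the pointwise conditions $\eta\wedge(d\eta)^{r}\neq 0$, $\eta\wedge(d\eta)^{r+1}=0$ through this pullback (using that a surjective submersion has injective, indeed pointwise faithful, pullback on forms) is exactly the intended argument. The only cosmetic remark is that connectedness of the fibres of $FL_{1}$ plays no role in this step --- surjectivity of the map and of its differential already give the equivalence at every point of $M_{1}$; that hypothesis is only needed later to define $H_{1}$.
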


Furthermore, under the almost regularity hypothesis, we can define a Hamiltonian function
$$
H_ 1 : M_ 1 \to \mathbb R,
$$
such that
$$
H_ 1 \circ {FL}_1 = E_L.
$$

 We conclude that if the Lagrangian is almost regular, then $(M_1, \eta_1, H_1)$ is a precontact Hamiltonian system. Thus, we apply the constraint algorithm to the equation 
$$
\bar{\flat}_1(Y) = \gamma_{H_1},
$$
where $\bar{\flat}_1$ is the mapping defined by $\eta_1$. Thus we obtain a sequence of constraint submanifolds
    \begin{equation}
        \cdots \to M_{i} \to \cdots \to M_2 \to M_1
    \end{equation}
    where $M_f$ is the final constraint submanifold.

    We will investigate the connection between the algorithm on the precontact systems $(P_1,\eta_L,E_L)$ and $(M_1, \eta_1, H_1)$.

\begin{figure}
  \centering
    \includegraphics{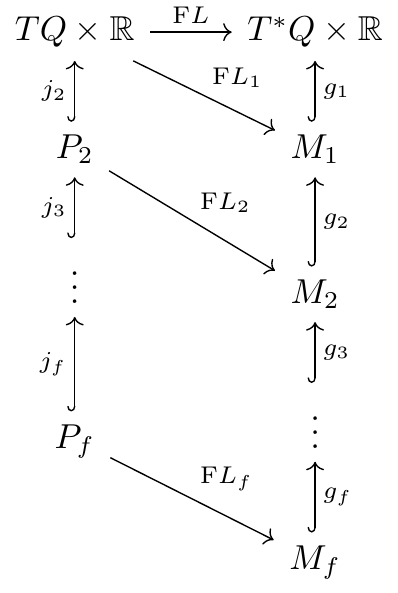}
  \caption{Commutative diagram}
\end{figure}

        Let $L:P\times \mathbb R\to \mathbb R$ be an almost regular Lagrangian, let $(P,\eta_L,E_L)$ be the corresponding precontact system, and let  $(M_1,\eta_1,H_1)$ be its Hamiltonian counterpart. We denote the final constraint submanifolds by $P_f$ and $M_f$, respectively. Then 
        \begin{itemize}
            \item For every $FL$-projectable solution $X$ of the equations of motion along $P_f$, ${FL}_* (X)$ is a solution of Hamilton equations of motion along $M_f$.
            \item  For every solution $Y$ of Hamilton equations of motion along $M_f$, every $X$ such that ${(FL)}_*(X)=Y$ solves the equations of motion along $P_f$.
        \end{itemize}

\subsection{Classification of constraints and the Dirac-Jacobi bracket}

We say that a function $f$ is \emph{first class}~\cite{Ibort1999,Dirac2001} if 
$$
\{f,\phi\}_{M_f} = 0
$$
for any constraint function defining $M_f$.

We say that a function is \emph{second class} if it is not first class.

    We will show that given a family of independent constraints $\phi^\alpha$ defining $M_f$ (by independent, we mean that their differentials are linearly independent) we can extract a maximal subfamily of second class constraints such that the matrix of their Jacobi brackets is non-singular. Modifying the rest of them by taking linear combinations, we get second class constraints that still form an independent family. 

Consider the matrix $(\{\phi^\alpha,\phi^\beta\})_{\alpha,\beta}$. Assume that it has constant rank $k$ in a neighborhood of $M_f$, that is, up to reordering, the first $k$ rows are linearly independent. Denote by $\phi^{a}$ (with latin indices) those functions and $\phi^{\bar{a}}$ (with overlined latin indices) the rest of them. We use greek indices when we want to refer to every constraint. Then the rest of the rows are linear combinations of the first $k$, that is
    $$
        \{\phi^{\bar{a}}, \phi^\beta \} = B^{\bar{a}}_a  \{\phi^a, \phi^\beta \}.
    $$
    Define 
    $$
        {\bar{\phi}}^{\bar{a}} =  {\phi}^{\bar{a}} - B^{\bar{a}}_a \phi^{a}.
    $$
   We can check that these new constraints are first class, so $\phi^a, {\bar{\phi}}^{\bar{a}}$ is a basis of the constraints with the desired properties.

    Now let $C^{ab} = \{\phi^a,\phi^b\}$ and let $C_{ab}$ denote the inverse matrix. We define the \emph{Dirac-Jacobi} bracket such that
$$
        \{f,g\}_{DJ} =  \{f,g\}  - \{f,\phi^a \} C_{ab} \{\phi^b, g\}.
$$

        The Dirac-Jacobi bracket has the following properties:
        \begin{enumerate}
            \item It is a Jacobi bracket which satisfies the generalized Leibniz rule
            $$
                \{fg,h\} = f \{g,h\} + g\{f,h\} + f g {\Reeb}_{DJ}(h),
            $$
            where
            $$
               {\Reeb}_{DJ} = {\Reeb} +  C_{ab} 
               {\Reeb}(\phi^b) (\sharp_{\Lambda}(d \phi^a) + \phi^a {\Reeb}).
            $$
            \item The second class constraints $\phi^a$ are Casimir functions for the Dirac-Jacobi bracket.
            \item For any first class function $F$,
                  \begin{eqnarray*}
                        (\{F,\cdot \} &= &  \{F,\cdot \})_{|{M_f}}, \\
                        ({\Reeb}_{DJ}(F) &= & {\Reeb}(F))_{|{M_f}}.
                  \end{eqnarray*}
            \item The evolution of an observable is given by
            \begin{eqnarray*}      
                    \dot{f} &= &     \{H, f\} - f {\Reeb}_{DJ} (H) +
                    \bar{u}_{\bar{a}} (\{\bar{\phi}^{\bar{a}}, f \}_{DJ} - f {\Reeb}_{DJ} (\bar{\phi}^{\bar{a}})) \\ 
& = &
                    (X_{H}+\bar{u}_{\bar{a}}X_{\bar{\phi}^{\bar{a}}})(f))_{|{M_f}},
            \end{eqnarray*}
            where $H:T^*Q \times \mathbb R \to \mathbb R$ is an arbitrary extension of the Hamiltonian $H_1$.
       \end{enumerate}
        We remark that the motion depends on the multipliers of the first class constraints $\bar{u}_{\bar{a}}$, but it is independent on the multipliers of the second class constraints $u_a$.

\section{Contact nonholonomic dynamics}

Nonholonomic dynamics refers to those mechanical systems that are subject to constraints on the velocities (these constraints could be linear or non-linear).

In the Lagrangian picture, a nonholonomic mechanical system is given by a Lagrangian function $L:TQ \to \RR$ defined on the tangent bundle $TQ$ of the configuration manifold $Q$, with nonholonomic constraints provided by a submanifold $D$ of $TQ$. We assume that $\tau_Q(D)=Q$, where $\tau_Q: TQ \to Q$ is the canonical projection to guarantee that we are in presence of purely kinematic constraints. $D$ could be a linear submanifold (in this case, $D$ can be alternatively described by a regular distribution $\Delta$ on $Q$), or nonlinear.

Even if nonholonomic mechanics is an old subject~\cite{deLeon2019a}, it was in the middle of the nineties that received a decisive boost due to the geometric description by several independent teams: Bloch \emph{et al.}~\cite{Bloch1996}, de León \emph{et al.}~\cite{Ibort1996,deLeon1997a,deLeon1996c,deLeon1996d,deLeon1997b,deLeon1992} and Bates and Śniatycki~\cite{Bates1993}, based on the seminal paper by J.~Koiller in 1992~\cite{Koiller1992}. Another relevant but not so well known work is due to Vershik and Faddeev~\cite{Vershik1972}. A geometrization of nonholonomic mechanics using algebroids is also available~\cite{Grabowski2009}. In~\cite{deLeon2012} the reader can find a historical review on this topic.

Nowadays, nonholonomic mechanics is a very active area of the so-called Geometric Mechanics.

The geometric description of nonholonomic mechanics uses the symplectic machinery. The idea behind is that there exists an unconstrained system as a background and one can recover the nonholonomic dynamics by projecting, for instance, the unconstrained one. Due to their symplectic backstage, the dynamics is conservative (for linear and ideal constraints).

However, there are other kind of nonholonomic systems that do not fit on the above description. On can imagine, for instance, a nonholonomic system subject additionally to Rayleigh dissipation~\cite{Chaplygin1949,MOSHCHUK1987,Neimark1972}. Another source of examples comes from thermodynamics, treated in~\cite{Gay-Balmaz2017,Gay-Balmaz2019} with a variational approach.

Nevertheless, there is a natural geometric description for these systems based on contact geometry. In this section, we will develop a contact version of  Lagrangian systems with nonholonomic constraints.

First we will analyze the Herglotz principle for nonholonomic systems (a sort of d'Alembert principle in comparison with the well-known Hamilton principle).  The reason to develop this subject is to justify the nonholonomic equations proposed in Subsection~\ref{sec:herglotz_constrained}.

Then, in Subsection~\ref{sec:nonholonomic_brackets}, we construct an analog to the symplectic nonholonomic bracket in the contact context. A relevant issue is that this bracket is an almost Jacobi bracket (that is, it does not satisfy the Jacobi identity). This contact nonholonomic bracket transforms the constraints in Casimirs and provides the evolution of observables, as in the unconstrained contact case. In Subsection~\ref{quizalapenultimaseccion} we introduce the notion of almost Jacobi structure proving that the nonholonomic bracket induces, in fact, an almost Jacobi structure. Then, we prove that this structure is a Jacobi structure if, and only if, the constraints are holonomic.

\subsection{Herglotz principle with constraints}\label{sec:herglotz_constrained}
We will consider that the system is restricted to certain (linear) constraints on the velocities modelled by a regular distribution $\Delta$ on the configuration manifold $Q$ of codimension $k$. We will extend the Herglotz principle~\ref{sec:Herglotz_principle} to this case. 

The distribution, $\Delta$ may be locally described in terms of independent linear constraint functions $\{\Phi^a \}_{a=1, \dots , k}$ in the following way
\begin{equation} \label{5}
\Delta=\left\{v\in TQ \mid \Phi^a \left( v \right)=0\right\}.
\end{equation}
Notice that, due to the linearity, the constraint functions $\Phi^{a}$ may be considered as 1-forms $\Phi^{a}: Q \rightarrow T^{*}Q$ on $Q$. Without danger of confusion, we will also denote by $\Phi^{a}$ to the 1-form version of the constraint $\Phi^{a}$ This means that
$$\Phi^a = \Phi^a_i (q) \dot{q}^i.$$
Let $L: TQ \times \mathbb{R} \rightarrow \mathbb{R}$ be the Lagrangian function. One may then define the \emph{Herglotz variational principle with constraints}, that is, we want to find the paths $\xi \in \Omega(q_1,q_1, [a,b])$ satisfying the constraints such that $T_\xi \mathcal{A}(v) = 0$ for all infinitesimal variation $v$ which is tangent to the constraints $\Delta$, where $\mathcal{A}$ is the contact action functional~\eqref{contact_action}. More precisely, we define the set
\begin{equation}
    \Omega(q_1,q_2, [a,b])_{\xi}^{\Delta} = \left\{
        v \in T_\xi \Omega(q_1,q_1, [a,b]) \mid v(t) \in \Delta_{\xi(t)} \text{ for all } t \in [a,b] \right\}.
\end{equation}
Then, $\xi$ satisfies the Herglotz variational principle with constraints if, and only if,
\begin{enumerate}
\item $T_{\xi}\mathcal{A}_{\vert \Omega(q_1,q_1, [a,b])_{\xi}^{\Delta}} = 0.$

\item $\dot{\xi} \left( t \right) \in \Delta_{\xi(t)} \text{ for all } t \in [a,b].$
\end{enumerate}

\begin{definition}
\rm
A \textit{constraint Lagrangian system} is given by a pair $\left( L , \Delta \right)$ where $L : TQ \times \mathbb{R} \rightarrow \RR$ is a regular Lagrangian and $\Delta$ is a regular distribution on $Q$. The constraints are said to be \textit{semiholonomic} if $\Delta$ is involutive and \textit{non-holonomic} otherwise.
\end{definition}

One may easily prove the following characterization of the Herglotz variational principle with constraints.

\begin{theorem}
A path $\xi \in \Omega(q_1,q_2, [a,b])$ satisfies the Herglotz variational principle with constraints if, and only if,
\begin{equation}
    \begin{cases}  
\pdv{L}{q^i} - \dv{}{t} \pdv{L}{\dot{q}^i} + \pdv{L}{\dot{q}^i} \pdv{L}{z}  \in \ann{\Delta}_{\xi(t)}
\\
\dot{\xi} \in \Delta
\end{cases}
\end{equation}
where $\ann{\Delta} = \left\{a \in T^*Q \mid a(u)=0 \text{ for all } u \in \Delta \right\}$ is the annihilator of $\Delta$.
\end{theorem}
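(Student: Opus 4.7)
The plan is to reduce the constrained variational principle to the ordinary Herglotz equation by explicitly computing $T_\xi \mathcal{A}(v)$ for arbitrary $v \in T_\xi \Omega(q_1,q_2,[a,b])$ and then restricting to those variations tangent to $\Delta$. Condition (2) in the statement coincides verbatim with the admissibility requirement $\dot\xi(t)\in\Delta_{\xi(t)}$ from the variational principle, so the entire content lies in establishing the equivalence between condition (1) and the vanishing of $T_\xi\mathcal{A}$ on $\Omega(q_1,q_2,[a,b])_\xi^\Delta$.

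For a smooth one-parameter family $\xi_s$ with $\xi_0=\xi$ and infinitesimal variation $v = \partial_s\xi_s|_{s=0}$, the function $\sigma(t) = \partial_s\mathcal{Z}(\xi_s)(t)|_{s=0}$ is obtained by differentiating the defining ODE~(\ref{contact_var_ode}) in $s$, giving the linear initial value problem
\begin{equation*}
\dot\sigma \;=\; \frac{\partial L}{\partial q^i}v^i + \frac{\partial L}{\partial \dot q^i}\dot v^i + \frac{\partial L}{\partial z}\sigma,\qquad \sigma(a)=0.
\end{equation*}
Introducing the positive integrating factor $\mu(t)=\exp\!\bigl(-\int_a^t \tfrac{\partial L}{\partial z}\,d\tau\bigr)$, one has $\tfrac{d}{dt}(\mu\sigma)=\mu\bigl(\tfrac{\partial L}{\partial q^i}v^i+\tfrac{\partial L}{\partial\dot q^i}\dot v^i\bigr)$. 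Integrating from $a$ to $b$, integrating the $\dot v^i$ term by parts (the boundary contributions vanish since $v(a)=v(b)=0$), and using $\dot\mu=-\mu\tfrac{\partial L}{\partial z}$, I arrive at
\begin{equation*}
T_\xi\mathcal{A}(v)\;=\;\sigma(b)\;=\;\frac{1}{\mu(b)}\int_a^b \mu\,\Bigl[\,\frac{\partial L}{\partial q^i}-\frac{d}{dt}\frac{\partial L}{\partial \dot q^i}+\frac{\partial L}{\partial \dot q^i}\frac{\partial L}{\partial z}\Bigr]\,v^i\,dt.
\end{equation*}

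The final step is a constrained analogue of the fundamental lemma of the calculus of variations: if a continuous covector-valued function $F_i(t)$ along $\xi$ satisfies $\int_a^b F_i(t)\,v^i(t)\,dt=0$ for every smooth $v$ along $\xi$ with $v(a)=v(b)=0$ and $v(t)\in\Delta_{\xi(t)}$, then $F(t)\in\ann{\Delta}_{\xi(t)}$ for every $t$. Since $\Delta$ is a regular distribution of codimension $k$, I pick a smooth local frame $\{e_\alpha\}$ of $\Delta$ along $\xi$, write admissible variations as $v=\phi_\alpha e_\alpha$ with arbitrary test functions $\phi_\alpha\in C^\infty_c((a,b))$, and apply the standard bump-function argument to conclude $F_i(t)e_\alpha^i(t)=0$ for each $\alpha$, i.e.\ $F(t)$ annihilates $\Delta_{\xi(t)}$. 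Applying this lemma to the bracketed expression above, which is legitimate because $\mu>0$, yields condition (1) of the theorem. The main obstacle is purely bookkeeping: keeping track of the integrating factor so that the extra term $\tfrac{\partial L}{\partial \dot q^i}\tfrac{\partial L}{\partial z}$ — the signature of Herglotz dynamics — appears correctly; no serious geometric difficulty arises because regularity of $\Delta$ guarantees the existence of the smooth local frame used in the fundamental lemma.
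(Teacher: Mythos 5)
Your proof is correct and follows essentially the same route the paper relies on (the paper omits the proof, calling it easy, but the standard argument for the unconstrained Herglotz principle in the cited references is exactly this): linearize the defining ODE for $\mathcal{Z}$ along the variation, solve with the integrating factor $\mu(t)=\exp\bigl(-\int_a^t \partial L/\partial z\,d\tau\bigr)$, integrate by parts, and apply the fundamental lemma of the calculus of variations restricted to variations tangent to $\Delta$. The only detail worth making explicit is that both directions of the equivalence follow from your integral formula for $T_\xi\mathcal{A}(v)$ (the converse being immediate since the integrand vanishes pointwise when the bracketed covector annihilates $\Delta$), and that a smooth frame of $\Delta$ along $\xi$ exists because $\xi^*\Delta$ is a vector bundle over an interval.
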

Taking into account Eq.~(\ref{5}), we have that $\ann{\Delta}$ is (locally) generated by the one-forms $\Phi^a $. Then $\xi$ satisfies the Herglotz variational principle with constraints if, and only if, it satisfies the following equations
\begin{equation}
    \begin{cases}\label{eq:nonholonomic_herglotz_eqs_coords}
        \dv{}{t} \pdv{L}{\dot{q}^i} - \pdv{L}{q^i} - \pdv{L}{\dot{q}^i} \pdv{L}{z} = \lambda_a\Phi^a_i \\
        \Phi^a (\dot{\xi}(t)) = 0.
    \end{cases}
\end{equation} 
for some Lagrange multipliers $\lambda_i(q^i)$ and where $\Phi^a= \Phi^a_i d q^i$.

From now on, Eqs. (\ref{eq:nonholonomic_herglotz_eqs_coords}) will be called \emph{constraint Herglotz equations}.

We will now present a geometric characterization of the Herglotz equations. In order to do this, we will consider a distribution $\Delta^{l}$ on $TQ\times \mathbb{R}$ induced by $\Delta$ such that its annihilator is given by

\begin{equation}
    \ann{\Delta^{l}} = \left( \tau_{Q} \circ \pr_{TQ \times \mathbb{R}}\right)^{*} \Delta^{0},
\end{equation}
where $\tau_{Q} : TQ \rightarrow Q$ is the canonical projection and $ \pr_{TQ \times \mathbb{R}}: TQ \times \mathbb{R}\rightarrow TQ$ is the projection on the first component. In fact, we may prove that
\begin{equation}
    \Delta^l = S^{*} \left( \ann{T \left(\Delta \times \mathbb{R}\right)}\right).
\end{equation}
Hence, $\ann{\Delta^{l}}$ is generated by the 1-forms on $TQ \times \mathbb{R}$ given by
\begin{equation}\label{eq:delta_l_constraints}
\tilde{\Phi}^{a}=\Phi^{a}_{i}{dq}^{i}.
\end{equation}
Then, we have the following result.

\begin{theorem}\label{18}
Assume that $L$ is regular. Let $X$ be a vector field on $TQ \times \RR$ satisfying the equation
\begin{equation}
    \begin{cases}\label{6}
        \flat_L\left(X\right) - dE_L + \left(E_L + \Reeb_L\left(E_L\right)\right)\eta_L \in \ann{\Delta^{l}} \\
        X_{\vert \Delta \times \RR } \in  \mathfrak{X}\left(\Delta \times \mathbb{R}\right).
    \end{cases},
\end{equation}
Then, 
\begin{itemize}
\item[(1)] $X$ is a SODE on $TQ \times \RR$.

\item[(2)] The integral curves of $X$ are solutions of the constraint Herglotz equations (\ref{eq:nonholonomic_herglotz_eqs_coords}).
\end{itemize}
\end{theorem}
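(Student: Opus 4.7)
\smallskip

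\emph{Proof plan.} The approach is to work in local bundle coordinates on $TQ \times \RR$, write $X = A^i \tfrac{\partial}{\partial q^i} + B^i \tfrac{\partial}{\partial \dot q^i} + C \tfrac{\partial}{\partial z}$, and match both sides of the first equation in (\ref{6}) coefficient by coefficient against the basis $\{dq^i, d\dot q^i, dz\}$ of $1$-forms. The crucial input is (\ref{eq:delta_l_constraints}): the annihilator $\ann{\Delta^l}$ is locally spanned by the $1$-forms $\tilde\Phi^a = \Phi^a_i\, dq^i$, so any element of $\ann{\Delta^l}$ has zero component along $d\dot q^j$ and $dz$.

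For the SODE claim, I would first inspect the $d\dot q^j$ coefficient of $\flat_L(X) - dE_L + (E_L + \Reeb_L(E_L))\eta_L$. Repeating the computation leading to (\ref{clagrangian2})--(\ref{clagrangian3}) but leaving $A^i$ free, this coefficient equals $W_{ij}(A^i - \dot q^i)$, where $W_{ij} = \partial^2 L / \partial \dot q^i \partial \dot q^j$ is the Hessian. Since the right-hand side has no $d\dot q^j$ component, regularity of $L$ forces $A^i = \dot q^i$, which is the SODE condition; this proves (1).

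With $A^i = \dot q^i$ in force, equating the $dq^j$ coefficients yields
\begin{equation*}
B^i \frac{\partial^2 L}{\partial \dot q^i \partial \dot q^j} + \dot q^i \frac{\partial^2 L}{\partial q^i \partial \dot q^j} + C \frac{\partial^2 L}{\partial z \partial \dot q^j} - \frac{\partial L}{\partial q^j} - \frac{\partial L}{\partial \dot q^j}\frac{\partial L}{\partial z} = \lambda_a \Phi^a_j
\end{equation*}
for suitable functions $\lambda_a$, these being the coefficients witnessing the membership in $\ann{\Delta^l}$. Along an integral curve $(q^i(t), \dot q^i(t), z(t))$ of $X$, the SODE property gives $B^i = d\dot q^i/dt$ and $C = dz/dt$, so the left-hand side collapses to $\tfrac{d}{dt}\bigl(\tfrac{\partial L}{\partial \dot q^j}\bigr) - \tfrac{\partial L}{\partial q^j} - \tfrac{\partial L}{\partial \dot q^j}\tfrac{\partial L}{\partial z}$, which is exactly the first line of (\ref{eq:nonholonomic_herglotz_eqs_coords}). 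The tangency hypothesis $X_{|\Delta \times \RR} \in \mathfrak{X}(\Delta \times \RR)$ then guarantees that integral curves starting on $\Delta \times \RR$ remain there, and combined with $A^i = \dot q^i$ yields $\Phi^a(\dot\xi(t)) = \Phi^a_i(\xi(t))\,\dot\xi^i(t) = 0$, giving the second line of (\ref{eq:nonholonomic_herglotz_eqs_coords}). This establishes (2).

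The main obstacle is the bookkeeping in the coefficient matching: one must expand $\flat_L(X) = i_X d\eta_L + (i_X \eta_L)\eta_L$ together with $dE_L - (E_L + \Reeb_L(E_L))\eta_L$ and verify that the $\ann{\Delta^l}$-term contributes only to the $dq^j$ components, so that the $d\dot q^j$ equation is unaffected by the constraints and produces the SODE conclusion exactly as in the unconstrained case, while the $dq^j$ equation yields the usual Herglotz left-hand side augmented only by the multiplier term $\lambda_a \Phi^a_j$. Once this coefficient analysis is in place, both conclusions follow directly from the derivation of (\ref{clagrangian1})--(\ref{clagrangian4}).
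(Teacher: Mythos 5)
Your proposal is correct and follows essentially the same coordinate-matching argument that the paper relies on (the survey omits the proof, but the surrounding text extracts exactly these facts from it: the $d\dot q^j$-components give the SODE condition and the identities $Z_a(q^i)=Z_a(z)=0$, and the $dq^j$-components give the Herglotz equations with multipliers along $\ann{\Delta^l}$). One small point you fold into ``bookkeeping'' but should make explicit: since $\ann{\Delta^l}$ also has no $dz$-component, that coefficient forces $\eta_L(X)=-E_L$, i.e.\ the $z$-component of $X$ equals $L$; this is tacitly used when you write the $dq^j$-equation without an extra $(C-L)\,\partial L/\partial\dot q^j$ term, and it is what recovers $\dot z = L$ along the integral curves, completing the identification with the Herglotz dynamics.
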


Therefore, Eq. (\ref{6}) provides the correct nonholonomic dynamics in the context of contact geometry. In the case of existence and uniqueness, the particular solution to Eq. (\ref{6}) will be denoted by $\Gamma_{L , \Delta}$. We will now investigate the existence and uniqueness of the solutions.\\

\begin{remark}[The distribution $\Delta^l$]\label{rem:DeltaL}{\rm 
    From the coordinate expression of the constraints $\tilde{\Phi}^a$ defining $\Delta^l$ (Eq.~\eqref{eq:delta_l_constraints}), one can see that $\Reeb_{L}(\tilde{\Phi}^a)=0$, hence $\Delta^l$ is vertical in the sense of Definition 4.}
\end{remark}

\begin{remark}{\rm
Notice that $T \left( \Delta \times \mathbb{R}\right)$ may be considered as a distribution of $TQ \times \RR$ along the submanifold $ \Delta \times \mathbb{R}$. Then, it is easy to show that the annihilator of the distribution $T \left( \Delta \times \mathbb{R}\right)$ is given by $\pr_{\Delta \times \mathbb{R}}^{*} \ann{\left(T \Delta \right)}$ where $\pr_{\Delta \times \mathbb{R}}: \Delta \times \mathbb{R} \rightarrow \Delta$ denotes the projection on the first component. In fact, let $ \left( X , f \right)$ be a vector field on $ TQ \times \mathbb{R}$, that is, for all $\left( v_{q} , z \right) \in T_{q}Q \times \RR$ we have that
\[ X \left( v_{q} , z \right) \in T_{v_{q}}\left( TQ \right) ; \quad f \left( v_{q} , z \right) \in T_{z}\mathbb{R}\cong \RR .\]
Then, for each $\left( v_{q} , z \right) \in \Delta \times \RR$,  $\left( X , f \right) \left( v_{q} , z \right)$ is tangent to $\Delta \times \RR$ at $\left( v_{q} , z \right) $ if, and only if,
\begin{equation}\label{16}
d\Phi^{a}_{\vert v_{q} } \left(  X \left( v_{q} , z \right) \right) = 0 , \ \forall a.
\end{equation}
Denoting $\overline{\Phi}^{a} = \Phi^{a}\circ \pr_{\Delta \times \mathbb{R}}$, we may express Eq. (\ref{16}) as follows
\begin{equation}\label{23}
 Z \left( \overline{\Phi}^{a}\right) = 0, \ \forall a
\end{equation}
where $ Z =\left( X , f \right)$. It is important to notice that, being $\Delta = \left(\Phi^{a}\right)^{-1} \left( 0 \right)$, it satisfies that
\[T_{\left( v_{q} , z \right)} \left( \Delta \times \mathbb{R}\right) = \ker \left( T_{v_{q}} \left(\Phi^{a}\right) \right) \times \RR .\]
}
\end{remark}

\bigskip

Let $\mathcal{S}$ be the distribution on $TQ \times \RR$ defined by $\sharp_{L} \left( \ann{\Delta^{l}}\right)$ where $\sharp_{L} = \flat_{L}^{-1}$.

In order to find a (local) basis of sections of $\mathcal{S}$, we will consider the 1-forms $\tilde{\Phi}^{a} $ generating $\ann{\Delta^{l}}$. For each $a$, $Z_{a}$ will be the local vector field on $TQ \times \RR$ satisfying
\begin{equation}\label{14}
\flat_L \left( Z_{a} \right) = \tilde{\Phi}^{a}.
\end{equation}
Then, $\mathcal{S}$ is  obviously (locally) generated by the vector fields $Z_{a}$ and $\mathcal{S} \subseteq \Delta^{l}$.

By using the proof of the theorem \ref{18} we have that
\begin{equation}
Z_{a}\left( q^{i} \right) = Z_{a} \left( z \right) = 0, \ \ \ \dfrac{\partial^{2}L}{\partial\dot{q}^i \partial \dot{q}^{k}} Z_{a} \left( \dot{q}^{k}\right) = - \Phi^a_i
\end{equation}
Then,
\begin{equation}\label{15}
Z_{a} = -W^{ik} \Phi^{a}_{k}\dfrac{\partial}{\partial \dot{q}^{i}},
\end{equation}
where $\left( W^{ik} \right)$ is the inverse of the Hessian matrix $\left( W_{ik}\right) = \left(\dfrac{\partial^{2}L}{\partial\dot{q}^i \partial \dot{q}^{k}}\right)$.
Notice that, taking into account that $\ann{\Delta^{l}}$ is generated by the 1-forms on $TQ \times \mathbb{R}$ given by $\tilde{\Phi}^{a}=\Phi^{a}_{i}{dq}^{i}$, it follows that
\begin{equation}\label{33}
\mathcal{S} \subseteq \Delta^{l}
\end{equation}

\begin{remark}[The distribution $\mathcal{S}$]\label{rem:S}
    Notice, that, since $\Delta^l$ is vertical (Remark~\ref{rem:DeltaL}), we have $\mathcal{S} = \orthL{(\Delta^l)} = \orth{(\Delta^l)}$ and $\mathcal{S}$ is horizontal. Hence $\eta_{L}(\mathcal{S})=0$.         
\end{remark}

Assume now that there exist two solutions $X$ and $Y$ of Eq. (\ref{6}). Then, by construction we have that $X-Y$ is tangent $T \left( \Delta \times \mathbb{R}\right)$. On the other hand,
\[\flat_{L} \left (X - Y \right) = \flat_{L} \left (X - \Gamma_{L}\right) + \flat_{L} \left (\Gamma_{L} - Y \right) \in \ann{\Delta^{l}}.\]
Then, $X-Y$ is also tangent to $\mathcal{S}$. Thus, we may prove the following result:
\begin{proposition}\label{Yoootraproposiciondelcarajomas}
The uniqueness of solutions of \eqref{6} is equivalent to 
$$\mathcal{S} \cap T \left( \Delta \times \mathbb{R}\right)= \{0\}.$$
\end{proposition}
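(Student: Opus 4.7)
The plan is to exploit the affine structure of the solution set of~(\ref{6}). The computation immediately preceding the proposition already provides the key linear observation: whenever $X$ and $Y$ both solve~(\ref{6}), linearity of $\flat_L$ applied to the first line yields $\flat_L(X-Y)\in \ann{\Delta^l}$ and hence $X-Y\in \sharp_L(\ann{\Delta^l})=\mathcal{S}$, while the tangency condition in the second line forces $(X-Y)|_{\Delta\times \mathbb{R}}\in T(\Delta\times \mathbb{R})$. So, assuming a solution $\Gamma_{L,\Delta}$ exists, the whole set of solutions of~(\ref{6}) is an affine space modelled on the space of sections of $\mathcal{S}\cap T(\Delta\times \mathbb{R})$ along $\Delta\times \mathbb{R}$. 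Uniqueness on the constraint submanifold is therefore equivalent to the triviality of this model space, which is the content of the proposition.

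The forward direction (sufficiency) is then immediate: if $\mathcal{S}_p\cap T_p(\Delta\times \mathbb{R})=\{0\}$ at every $p\in \Delta\times \mathbb{R}$ and $X,Y$ are any two solutions, then $(X-Y)(p)\in \mathcal{S}_p\cap T_p(\Delta\times \mathbb{R})=\{0\}$ pointwise along $\Delta\times \mathbb{R}$, so $X$ and $Y$ agree there. For the converse I would argue by contrapositive: given a nonzero $v\in \mathcal{S}_p\cap T_p(\Delta\times \mathbb{R})$ at some $p$ and a solution $\Gamma_{L,\Delta}$, write $v=\lambda_0^a Z_a(p)$ in terms of the generators~(\ref{15}) of $\mathcal{S}$, extend the $\lambda_0^a$ to smooth functions $\lambda^a$ on a neighbourhood $U$ of $p$ such that $Z=\lambda^a Z_a$ stays in $\mathcal{S}\cap T(\Delta\times \mathbb{R})$ along $U\cap (\Delta\times \mathbb{R})$, and cut off with a bump function so that $Z$ extends by zero to all of $TQ\times \mathbb{R}$. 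Then $\flat_L(\Gamma_{L,\Delta}+Z)$ still satisfies the first line of~(\ref{6}) (because $\flat_L(Z)\in \ann{\Delta^l}$) and $\Gamma_{L,\Delta}+Z$ is still tangent to $\Delta\times \mathbb{R}$, so it is a second, distinct solution of~(\ref{6}), contradicting uniqueness.

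The only genuinely delicate point is the smooth prolongation of $v$ to a local section of the intersection. Tangency of $Z=\lambda^a Z_a$ to $\Delta\times \mathbb{R}$ reduces, via $Z_a(\Phi^b)=-W^{ik}\Phi^a_k\Phi^b_i$, to the purely algebraic requirement that $(\lambda^a)$ lie in the kernel of the smooth symmetric matrix $C^{ab}(q,\dot q)=W^{ik}\Phi^a_k\Phi^b_i$. A smooth extension of an element of $\ker C(p)$ to a local section of $\ker C$ is standard wherever $\ker C$ has locally constant rank, and one can always reduce to this case by picking $p$ in the open set where $\dim(\mathcal{S}\cap T(\Delta\times \mathbb{R}))$ is locally maximal; this is what makes the contrapositive argument go through.
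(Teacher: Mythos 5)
Your treatment of the forward direction is exactly the paper's own argument: the computation immediately preceding the proposition shows that for two solutions $X,Y$ of \eqref{6} one has $\flat_L(X-Y)=\flat_L(X-\Gamma_L)+\flat_L(\Gamma_L-Y)\in\ann{\Delta^l}$, hence $X-Y\in\mathcal{S}$, while the second condition forces $X-Y$ tangent to $\Delta\times\mathbb{R}$; the paper then simply asserts the proposition. Where you go beyond the text is the converse, which the paper leaves implicit, and your bump-function construction is the right idea. One caveat on the step you yourself flag as delicate: your reduction to the locus where $\ker C$ has locally constant rank does not cover the case in which the intersection $\mathcal{S}\cap T(\Delta\times\mathbb{R})$ is nonzero only on a nowhere dense set (e.g.\ a single point where $C$ degenerates). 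There the intersection is nontrivial, yet every smooth section of it vanishes identically by continuity, so uniqueness \emph{among smooth vector fields} still holds and the stated equivalence fails under that reading. The equivalence is correct if ``uniqueness'' is read pointwise --- uniqueness of the value $X(p)$ solving the affine system at each $p\in\Delta\times\mathbb{R}$ --- in which case the converse is immediate linear algebra and no smooth extension is needed; this is effectively the reading the paper adopts, and the issue disappears entirely once the standing assumption that $(W_{ik})$ is definite (hence $C$ everywhere regular) is imposed a few lines later. You should state which notion of uniqueness you are using; with the pointwise one your argument closes cleanly.
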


If the intersection $\mathcal{S} \cap T \left( \Delta \times \mathbb{R}\right)$ were zero, we would be able to ensure the uniqueness of solutions.\\
Let $X =X^{b}Z_{b} $ be a vector field on $\Delta \times \RR$ tangent to $\mathcal{S}$. Hence, by Eq. (\ref{23}), we have that
\[ X^{b} d \overline{\Phi}^{a}\left( Z_{b} \right) = 0.\]
Equivalently,
\[ X^{b}W^{ik}\Phi^{b}_{k}\Phi^{a}_{i} =  0, \ \forall a.\]
Define the (local) matrix $\mathcal{C}$ with coefficient

\begin{equation}\label{lamatrizcdecarajo}
\mathcal{C}_{ab} = - W^{ik}\Phi^{b}_{k}\Phi^{a}_{i} = d\Phi^{b}\left( Z_{a}\right)
\end{equation}
Then, it is easy to prove that (locally) the regularity of $\mathcal{C}$ is equivalent $\mathcal{S} \cap T \left( \Delta \times \mathbb{R}\right)= \{0\}$.\\
One can easily verify that if the Hessian matrix $\left( W_{ik} \right)$ is positive or negative definite this condition is satisfied.\\\\
From now on we will assume that the Hessian matrix $\left( W_{ik} \right)$ is positive (or negative) definite.\\

\begin{remark}
\rm
In general, we may only assume that the matrices $\mathcal{C}$ are regular. However, for applications, in the relevant cases the Hessian matrix $\left( W_{ik} \right)$ is positive definite. In particular, if the Lagrangian $L$ is natural, that is, $L = T+ V\left( q, z \right)$, where $T$ is the \textit{kinetic energy} of a Riemannian metric $g$ on $Q$ and $V$ is a \textit{potential energy}, then the Lagrangian $L$ will be positive definitive.\\

\end{remark}

{Notice that, for each $\left( v_{q} , z \right) \in \Delta \times \RR$ we have that}
\begin{itemize}
\item $\dim \left( S_{ \vert \left( v_{q} , z \right)} \right) = k$

\item $\dim \left(T_{ \left( v_{q} , z \right)} \left( \Delta \times \mathbb{R}\right)\right) = 2n +1 -k $
\end{itemize}
So, the condition of being positive (or negative) definite not only implies that $\mathcal{S} \cap T \left( \Delta \times \mathbb{R}\right)= \{0\}$ but also we have
\begin{equation}\label{41}
\mathcal{S} \oplus T \left( \Delta \times \mathbb{R}\right) = T_{\Delta \times \RR }\left( TQ \times \mathbb{R}\right),
\end{equation}
where $T_{\Delta \times \RR }\left( TQ \times \mathbb{R}\right)$ consists of the tangent vectors of $TQ \times \mathbb{R}$ at points of $\Delta \times \mathbb{R}$.\\
Thus, the uniqueness condition will imply the existence of solutions of \eqref{6}. In fact, we will also be able to obtain the solutions of Eq. (\ref{6}) in a very simple way. In fact, let us consider the two projectors
\begin{subequations}\label{projectors2}
\begin{align}
\mathcal{P}: T_{\Delta \times \RR }\left( TQ \times \mathbb{R}\right) &\rightarrow T \left(  \Delta \times \mathbb{R}\right),\\
\mathcal{Q} :  T_{\Delta \times \RR }\left( TQ \times \mathbb{R}\right) &\rightarrow \mathcal{S}.
\end{align}
\end{subequations}

Consider $X = \mathcal{P} \left( {\Gamma_{L}}_{\vert \Delta \times\RR } \right)$. Then, by definition $X \in  \mathfrak{X} \left(\Delta \times \mathbb{R}\right)$. On the other hand, at the points in $\Delta \times\RR$ we have
\begin{align*}
 &\flat_{L} \left(X \right) - dE_L + \left(E_L + \Reeb_L\left(E_L\right)\right)\eta_L \\ = & \;
\flat_{L} \left(\Gamma_{L} - \mathcal{Q}\left( \Gamma_{L} \right) \right) - dE_L + \left(E_L + \Reeb_L\left(E_L\right)\right)\eta_L\\ =&
 -\flat_{L} \left(\mathcal{Q}\left( \Gamma_{L} \right) \right) \in \ann{\Delta^{l}}
\end{align*}

\noindent{Therefore, by uniqueness, $X_{\vert \Delta \times\RR } = \Gamma_{L , \Delta}$ is a solution of Eq. (\ref{6}).}\\\\
Let us now compute an explicit expression of the solution $\Gamma_{L,\Delta}$. Let $Y$ be a vector field on $TQ \times \RR$. Then, choosing a local basis $\{ \beta_{i}\}$ of $T \left( \Delta \times \mathbb{R}\right)$ we may write the restriction of $Y$ to $\Delta \times \RR$ as follows
\[ Y_{\vert \Delta \times \RR } = Y^{i}\beta_{i} + \lambda^{a}Z_{a}.\]
Then, applying $d\overline{\Phi}^{b}$ we have that
\[ d\overline{\Phi}^{b} \left( Y \right) = \lambda^{a} C_{ba},\]
and we can compute the coefficients $\lambda^{a}$ as follows
\begin{equation}\label{19}
\lambda^{a} = C^{ba}d\overline{\Phi}^{b} \left( Y\right)
\end{equation}
Hence, for all vector field $Y$ on $TQ \times \RR$ restricted to $\Delta \times \RR$
\begin{itemize}
\item $\mathcal{Q} \left( Y_{\vert \Delta \times \RR } \right) = C^{ba}d\overline{\Phi}^{b} \left( Y \right)Z_{a}.$

\item $\mathcal{P} \left( Y_{\vert \Delta \times \RR } \right) = Y_{\vert \Delta \times \RR } - C^{ba}d\overline{\Phi}^{b} \left( Y \right)Z_{a}.$
\end{itemize}

Therefore, we have obtained the explicit expression of the solution $\Gamma_{L, \Delta}$,
\begin{equation}\label{20}
\Gamma_{L,\Delta} = \left(\Gamma_{L}\right)_{\vert \Delta \times \RR } - C^{ba}d\overline{\Phi}^{b} \left( \Gamma_{L} \right)Z_{a}
\end{equation}
\begin{remark}
\rm
From the regularity of the matrices $C$ , we deduce that the projections $\mathcal{P}$ and $\mathcal{Q}$ may be extended to open neighborhoods of $\Delta \times \RR$. Consequently, $\mathcal{P} \left( \Gamma_{L} \right)$ may also be extended to an open neighborhood of $\Delta \times \RR$. However, this extension will not be unique.\\
\end{remark}

Let us recall that the contact Hamiltonian vector fields model the dynamics of dissipative
systems and, contrary to the case of symplectic Hamiltonian systems, the evolution does not preserve the energy, the contact form and the volume, i.e.,
\begin{align*}
    \lieD{\Gamma_L} E_L &= -\Reeb_L (E_L) E_L, \\
    \mathcal{L}_{\Gamma_{L}} \eta_{L} &= -\mathcal{R}_{L} \left( E_{L}\right)\eta_{L}.
\end{align*}
This result may be naturally generalized to the case of non-holonomic constraint by using these projectors.
\begin{proposition}\label{39}
Assume that $L$ is regular. The vector field $\Gamma_{L, \Delta}$ solving the constraint Herglotz equations satisfies that

\begin{subequations}
    \begin{align}
        \mathcal{L}_{\Gamma_{L,\Delta}} \eta_{L} &= 
          - \mathcal{R}_{L} \left( E_{L}\right)\eta_{L}- \lieD{\mathcal{Q}(\Gamma_{L})} \eta_{L},\\
        \mathcal{L}_{\Gamma_{L,\Delta}} \tilde{\eta}_{L} &= 
          - \frac{\lieD{\mathcal{Q}(\Gamma_{L})} \eta_{L}}{H} \\
        \lieD{\Gamma_{L,\Delta}} \Omega_{L} &= -(n+1)\Reeb_L(E_L) \Omega_L  - \eta_L \wedge {d\eta_L}^{(n-1)} \wedge d\mathcal{L}_{\mathcal{Q}(\Gamma_{L})} \eta_{L}\\
        \lieD{\Gamma_{L,\Delta}} \tilde{\Omega}_{L} &= \tilde{\eta}_L \wedge {d\tilde{\eta}_L}^{(n-1)} \wedge d\mathcal{L}_{\mathcal{Q}(\Gamma_{L})} \tilde{\eta}_{L}
    \end{align}
\end{subequations}

where $\tilde{\eta}_L = \eta_L/H$, assuming that $H$ does not vanish,  $\Omega_L = \eta_L \wedge {(d\eta_L)}^{n}$ is the contact volume element and $\tilde{\Omega}_L = \eta_L \wedge {(d\eta_L)}^{n}$.

Furthermore, we have that $\lieD{\mathcal{Q}\left(\Gamma_{L}\right) \eta_{L}} \in  \ann{\Delta^{l}} $.
\end{proposition}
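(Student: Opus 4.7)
The overall strategy is to exploit the explicit splitting
\begin{equation*}
\Gamma_{L,\Delta} = \Gamma_L - \mathcal{Q}(\Gamma_L)
\end{equation*}
along $\Delta \times \RR$ furnished by Eq.~(\ref{20}). Since the Lie derivative is $\RR$-linear in its vector field argument, each identity in the proposition will split as ``unconstrained flow $-$ $\mathcal{Q}$-correction''. The unconstrained pieces are precisely the dissipation identities for $\Gamma_L$ already established in Section~\ref{sec:contact_hamiltonian_systems}, namely $\lieD{\Gamma_L}\eta_L = -\Reeb_L(E_L)\eta_L$ and, by induction, $\lieD{\Gamma_L}\Omega_L = -(n+1)\Reeb_L(E_L)\Omega_L$. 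Plugging these into the decomposition immediately gives identity (a), while (c) follows by expanding $\lieD{\Gamma_{L,\Delta}}(\eta_L \wedge (d\eta_L)^n)$ with the graded Leibniz rule and the identity $\lieD{X}d\eta_L = d\lieD{X}\eta_L$ to collect the $\mathcal{Q}$-correction into the form displayed on the right-hand side.

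For (b), I would apply the Leibniz rule to $\tilde\eta_L = \eta_L/H$, giving
\begin{equation*}
\lieD{\Gamma_{L,\Delta}} \tilde\eta_L = \frac{1}{H}\,\lieD{\Gamma_{L,\Delta}}\eta_L - \frac{\Gamma_{L,\Delta}(H)}{H^2}\,\eta_L .
\end{equation*}
The identity then follows from (a) once we verify $\Gamma_{L,\Delta}(E_L) = -\Reeb_L(E_L)E_L$ along $\Delta \times \RR$. Because $\Gamma_L$ already satisfies this by the general contact dissipation formula, it suffices to check that $\mathcal{Q}(\Gamma_L)(E_L)=0$ on $\Delta\times\RR$. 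Using the explicit expression~(\ref{15}) for $Z_a$ together with $\partial E_L/\partial \dot q^i = W_{ij}\dot q^j$, one finds $Z_a(E_L) = -\Phi^a_k\dot q^k$, which vanishes precisely because $\dot q \in \Delta$. Identity (d) then follows analogously by combining (b) with the Leibniz expansion of $\tilde\Omega_L = \tilde\eta_L\wedge(d\tilde\eta_L)^n$.

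The ``furthermore'' clause $\lieD{\mathcal{Q}(\Gamma_L)}\eta_L \in \ann{\Delta^l}$ is the keystone that keeps every correction term inside the constraint ideal. The plan here is to invoke Cartan's magic formula,
\begin{equation*}
\lieD{\mathcal{Q}(\Gamma_L)}\eta_L = d\bigl(\contr{\mathcal{Q}(\Gamma_L)}\eta_L\bigr) + \contr{\mathcal{Q}(\Gamma_L)}d\eta_L .
\end{equation*}
Since $\mathcal{S}$ is horizontal (Remark~\ref{rem:S}), we have $\eta_L(\mathcal{Q}(\Gamma_L))=0$, killing the first term; and from the defining equation $\flat_L(Z_a) = \tilde\Phi^a$ together with $\eta_L(Z_a)=0$ we obtain $\contr{Z_a}d\eta_L = \tilde\Phi^a$, so
\begin{equation*}
\lieD{\mathcal{Q}(\Gamma_L)}\eta_L = C^{ba}\,d\overline{\Phi}^b(\Gamma_L)\,\tilde\Phi^a \in \ann{\Delta^l} .
\end{equation*}
The main obstacle is not conceptual but bookkeeping: identities (c) and (d) require expanding $n$-fold wedge products and verifying that cross terms involving $\eta_L \wedge \eta_L$ drop out and that the $\Reeb_L$-contractions telescope exactly as in the unconstrained induction of Section~\ref{sec:contact_hamiltonian_systems}. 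Once (a) and the ``furthermore'' clause are in hand, the remaining identities reduce to careful but direct wedge-product calculations.
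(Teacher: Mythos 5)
Your argument is correct, and since the survey states this proposition without any proof, the route you take --- splitting $\Gamma_{L,\Delta} = \Gamma_L - \mathcal{Q}(\Gamma_L)$ along $\Delta\times\RR$ via Eq.~(\ref{20}), feeding in the unconstrained dissipation identities for $\Gamma_L$, and controlling the correction through $\lieD{\mathcal{Q}(\Gamma_L)}\eta_L$ --- is exactly the intended one. Your two auxiliary computations are the right ones and both check out: $Z_a(E_L) = -\Phi^a_i\dot q^i$ vanishes on $\Delta\times\RR$, so $\Gamma_{L,\Delta}(E_L) = -\Reeb_L(E_L)E_L$ there, which is what (b) needs; and Cartan's formula together with $\eta_L(Z_a)=0$ and $\contr{Z_a}d\eta_L = \tilde\Phi^a$ gives $\lieD{\mathcal{Q}(\Gamma_L)}\eta_L = C^{ba}\,d\overline{\Phi}^{b}(\Gamma_L)\,\tilde\Phi^a \in \ann{\Delta^l}$, which is the ``furthermore'' clause. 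Two remarks on (c) and (d). First, the term $(\lieD{\mathcal{Q}(\Gamma_L)}\eta_L)\wedge(d\eta_L)^n$ does not drop out through an ``$\eta_L\wedge\eta_L$'' cancellation: it vanishes because $\lieD{\mathcal{Q}(\Gamma_L)}\eta_L$ is a combination of the $\tilde\Phi^a=\Phi^a_i\,dq^i$, hence annihilates $\Reeb_L$, and any one-form $\alpha$ satisfies $\alpha\wedge(d\eta_L)^n = \alpha(\Reeb_L)\,\eta_L\wedge(d\eta_L)^n$. Second, carrying out the graded Leibniz rule honestly gives
\begin{equation*}
\lieD{\mathcal{Q}(\Gamma_L)}\bigl(\eta_L\wedge(d\eta_L)^n\bigr) \;=\; n\,\eta_L\wedge(d\eta_L)^{n-1}\wedge d\lieD{\mathcal{Q}(\Gamma_L)}\eta_L,
\end{equation*}
so your computation produces a combinatorial factor $n$ in the correction term of (c) (and correspondingly in (d), where a sign also flips when (b) is inserted); these coefficients are misprinted in the displayed statement, and the discrepancy reflects typos in the survey rather than an error in your proof. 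The only point you leave implicit is that $\mathcal{Q}(\Gamma_L)$ must be extended off $\Delta\times\RR$ as a section of $\mathcal{S}$ (e.g.\ as $C^{ba}\,d\overline{\Phi}^{b}(\Gamma_L)Z_a$ on a neighborhood) for the Lie derivatives and the exterior differentials of the resulting identities to make sense; the paper itself only acknowledges this in a remark on extending the projectors.
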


\subsection{Non-holonomic bracket}\label{sec:nonholonomic_brackets}

Consider a regular contact Lagrangian system with Lagrangian $L:TQ \times \mathbb{R}\to \RR$ and constraints
 $\Delta$ satisfying the conditions in Subsection~\ref{sec:herglotz_constrained}. A bracket can be constructed by means of the decomposition~\eqref{41}.

Let us first consider the adjoint operators $\mathcal{P}^*$ and $\mathcal{Q}^*$ of the projections $\mathcal{P}$ and $\mathcal{Q}$, respectively. Obviously, the maps $\mathcal{P}^* :T^{*}_{\Delta \times \RR } \left( TQ \times \mathbb{R}\right) \rightarrow \ann{\mathcal{S}}$ and $\mathcal{Q}^* :T^{*}_{\Delta \times \RR } \left( TQ \times \mathbb{R}\right) \rightarrow \ann{T} \left( \Delta \times \mathbb{R}\right)$ produce a decomposition of $T^{*}_{\Delta \times \RR } \left( TQ \times \mathbb{R}\right)$
\begin{equation}\label{43}
T^{*}_{\Delta \times \RR } \left( TQ \times \mathbb{R}\right) = \ann{\mathcal{S}}\oplus \ann{T} \left( \Delta \times \mathbb{R}\right)
\end{equation}

We may now define, along $\Delta \times \RR$, the following vector and bivector fields:
\begin{align}\label{almostjac24}
    \Reeb_{L,\Delta} &=  \mathcal{P} \left({\Reeb_L}_{\vert \Delta \times \RR }\right),\\
    \Lambda_{L,\Delta} &= \mathcal{P}_* {\Lambda_L}_{\vert \Delta \times \RR },
\end{align}
where $\Lambda_{L}$ is the Jacobi structure associated to the contact form $\eta_{L}$. That is, for $\left( v_{q} , z \right) \in \Delta\times \mathbb{R}\subseteq TQ\times \RR$ and $\alpha,\beta\in T_{\left( v_{q} , z \right)}^* \left(TQ\times \RR \right)$,   
$$\Lambda_{L,\Delta}\left(\alpha,\beta\right) =  \Lambda_{L}\left(\mathcal{P}^*\left(\alpha\right),\mathcal{P}^*\left(\beta\right)\right).$$

{This structure provides the following morphism of vector bundles}
\begin{equation}
  \begin{aligned}
    \sharp_{\Lambda_{L,\Delta}}:  T^{*}_{\Delta \times \RR }\left( TQ \times \mathbb{R}\right) &\to T_{\Delta \times \RR }\left( TQ \times \mathbb{R}\right),\\
    \alpha &\mapsto \Lambda_{L,\Delta}(\alpha, \cdot).
  \end{aligned}
\end{equation}

Hence, we may prove the following result:

\begin{theorem}\label{46}
We have
  \begin{equation}
    \Gamma_{L,\Delta} = \sharp_{\Lambda_{L,\Delta}}(d{E_L}) - {E_L} {\Reeb_{L,\Delta} }.
  \end{equation}
\end{theorem}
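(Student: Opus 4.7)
\emph{Proof plan.} The idea is to write $\Gamma_L$ in its Jacobi form, apply the projector $\mathcal{P}$ from~\eqref{20} termwise, and reduce the matter to a single short coordinate computation. By the formula for the contact Hamiltonian vector field of $E_L$ derived earlier in the paper,
$$\Gamma_L = \sharp_{\Lambda_L}(dE_L) - E_L\,\Reeb_L.$$
Equation~\eqref{20} gives $\Gamma_{L,\Delta} = \mathcal{P}(\Gamma_L)|_{\Delta\times\RR}$, and \eqref{almostjac24} gives $\mathcal{P}(\Reeb_L|_{\Delta\times\RR}) = \Reeb_{L,\Delta}$. Applying $\mathcal{P}$ termwise therefore yields
$$\Gamma_{L,\Delta} = \mathcal{P}\bigl(\sharp_{\Lambda_L}(dE_L)\bigr) - E_L\,\Reeb_{L,\Delta},$$
so the theorem reduces to the identity $\mathcal{P}\,\sharp_{\Lambda_L}(dE_L) = \sharp_{\Lambda_{L,\Delta}}(dE_L)$ along $\Delta\times\RR$.

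To handle this identity I would dualize the definition $\Lambda_{L,\Delta}(\alpha,\beta)=\Lambda_L(\mathcal{P}^*\alpha,\mathcal{P}^*\beta)$, obtaining $\sharp_{\Lambda_{L,\Delta}} = \mathcal{P}\circ\sharp_{\Lambda_L}\circ\mathcal{P}^*$. Since $\mathcal{P}^*=\mathrm{id}-\mathcal{Q}^*$, what remains to show is $\mathcal{P}\,\sharp_{\Lambda_L}(\mathcal{Q}^*(dE_L))=0$, and the cleanest route is the stronger claim that $\mathcal{Q}^*(dE_L)$ vanishes identically on $\Delta\times\RR$.

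This last claim is where the geometry of the nonholonomic system enters and is the only calculation I have to do by hand. Since $\mathrm{Im}\,\mathcal{Q}=\mathcal{S}$ is spanned by the vertical fields $Z_a = -W^{ik}\Phi^a_k\,\partial/\partial\dot q^i$ of~\eqref{15}, it suffices to compute $Z_a(E_L)$ at points of $\Delta\times\RR$. From $E_L=\dot q^j\,\partial L/\partial\dot q^j - L$ one gets $\partial E_L/\partial\dot q^i = W_{ij}\dot q^j$, so
$$Z_a(E_L) = -W^{ik}\Phi^a_k W_{ij}\dot q^j = -\Phi^a_j\dot q^j = -\Phi^a,$$
which vanishes on $\Delta$ by~\eqref{5}; hence $\mathcal{Q}^*(dE_L)|_{\Delta\times\RR}=0$, finishing the proof.

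The main (and essentially only) obstacle is this last cancellation: it uses both the specific form of $E_L$ through the Hessian $W$ and the verticality of $\mathcal{S}$ recorded in Remark~\ref{rem:S}, which itself reflects the linearity of the constraints $\Phi^a$. All remaining steps are formal manipulations of the projectors and of the definitions of $\Lambda_{L,\Delta}$ and $\Reeb_{L,\Delta}$.
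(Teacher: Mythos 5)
Your proof is correct and follows what is essentially the intended argument (the survey states the theorem without proof, but this is the route taken in the source reference): write $\Gamma_L=\sharp_{\Lambda_L}(dE_L)-E_L\Reeb_L$, apply the projector $\mathcal{P}$ of~\eqref{projectors2} using $\Gamma_{L,\Delta}=\mathcal{P}(\Gamma_L|_{\Delta\times\RR})$ and $\Reeb_{L,\Delta}=\mathcal{P}(\Reeb_L|_{\Delta\times\RR})$, and reduce everything to $\mathcal{Q}^*(dE_L)=0$ along $\Delta\times\RR$, which holds because $Z_a(E_L)=-W^{ik}\Phi^a_k W_{ij}\dot q^j=-\Phi^a_j\dot q^j$ vanishes on $\Delta$. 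The key computation $\partial E_L/\partial\dot q^i=W_{ij}\dot q^j$ and the identification $\sharp_{\Lambda_{L,\Delta}}=\mathcal{P}\circ\sharp_{\Lambda_L}\circ\mathcal{P}^*$ are both carried out correctly.
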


Furthermore, we can define the following bracket from functions on $TQ \times \RR$ to functions on $\Delta \times \RR$, which will be called the \emph{nonholonomic bracket}:
\begin{equation}
  \NHBr{f,g} = \Lambda_{L,\Delta}(df, dg) - f \Reeb_{L,\Delta}(g) + g \Reeb_{L,\Delta}(f).
\end{equation}

\begin{theorem}\label{esteteorema23}
  The nonholonomic bracket has the following properties:
  \begin{enumerate}
    \item Any function $g$ on $TQ\times \RR$ that vanishes on $\Delta \times \RR$ is a Casimir, i.e.,
    \[\NHBr{g,f} = 0, \ \forall f \in \mathcal{C}^{\infty} \left( TQ \times \mathbb{R}\right).\]
    \item The bracket provides the evolution of the observables, that is, 
    \begin{equation}
      \Gamma_{L,\Delta}(g) = \NHBr{E_L,g} - g \Reeb_{L,\Delta} (E_L).
    \end{equation}
  \end{enumerate}
\end{theorem}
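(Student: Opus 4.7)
The proof is essentially an unpacking of the definitions of $\Reeb_{L,\Delta}$, $\Lambda_{L,\Delta}$ and $\NHBr{\cdot,\cdot}$ together with Theorem~\ref{46}. Both statements are essentially forced once one observes how the projectors $\mathcal{P}, \mathcal{Q}$ and their duals interact with the decompositions \eqref{41} and \eqref{43}.

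For part (1), let $g$ be a function on $TQ\times\RR$ with $g\vert_{\Delta\times\RR}=0$. The plan is to show that every one of the three terms in
\[
\NHBr{g,f} = \Lambda_{L,\Delta}(dg,df) - g\,\Reeb_{L,\Delta}(f) + f\,\Reeb_{L,\Delta}(g)
\]
vanishes along $\Delta\times\RR$. The second term is trivial since $g=0$ on $\Delta\times\RR$. For the third term, note that by definition $\Reeb_{L,\Delta} = \mathcal{P}({\Reeb_L}\vert_{\Delta\times\RR})$ takes values in $T(\Delta\times\RR)$, so it is a derivation along $\Delta\times\RR$; applied to a function vanishing there it must give $0$. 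For the first term, the vanishing of $g$ on $\Delta\times\RR$ forces $dg$ to annihilate $T(\Delta\times\RR)$, i.e.\ $dg\in \ann{T(\Delta\times\RR)}$. By the formula $\Lambda_{L,\Delta}(dg,df)=\Lambda_L(\mathcal{P}^*(dg),\mathcal{P}^*(df))$ and the computation $\mathcal{P}^*(dg)(v)=dg(\mathcal{P}(v))$ with $\mathcal{P}(v)\in T(\Delta\times\RR)$, we get $\mathcal{P}^*(dg)=0$, and the first term vanishes too.

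For part (2), the argument is a one-line computation using Theorem~\ref{46}. Writing
\[
\Gamma_{L,\Delta}(g) = \sharp_{\Lambda_{L,\Delta}}(dE_L)(g) - E_L\,\Reeb_{L,\Delta}(g) = \Lambda_{L,\Delta}(dE_L,dg) - E_L\,\Reeb_{L,\Delta}(g)
\]
and comparing with
\[
\NHBr{E_L,g} - g\,\Reeb_{L,\Delta}(E_L) = \Lambda_{L,\Delta}(dE_L,dg) - E_L\,\Reeb_{L,\Delta}(g) + g\,\Reeb_{L,\Delta}(E_L) - g\,\Reeb_{L,\Delta}(E_L),
\]
the two expressions agree identically.

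The only genuinely subtle point — which is really a matter of bookkeeping rather than an obstacle — is verifying that $\mathcal{P}^*(dg)$ is truly zero as a section of $T^*_{\Delta\times\RR}(TQ\times\RR)$ when $g$ vanishes on $\Delta\times\RR$; this rests on viewing $\mathcal{P}$ as an endomorphism of $T_{\Delta\times\RR}(TQ\times\RR)$ with image $T(\Delta\times\RR)$, together with the elementary fact that $dg$ kills $T(\Delta\times\RR)$ under the hypothesis. With this remark the rest of the proof is purely definitional.
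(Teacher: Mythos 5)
Your proof is correct and follows the natural, essentially forced route: part (1) reduces to the observation that $\mathcal{P}^*(dg)=0$ and $\Reeb_{L,\Delta}(g)=0$ when $g$ vanishes on $\Delta\times\RR$ (since the image of $\mathcal{P}$ is $T(\Delta\times\RR)$, which $dg$ annihilates), and part (2) is a direct cancellation using the formula $\Gamma_{L,\Delta}=\sharp_{\Lambda_{L,\Delta}}(dE_L)-E_L\Reeb_{L,\Delta}$. The paper, being a survey, does not spell out this proof, but your argument is exactly the definitional verification the cited result rests on.
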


Notice that, in particular, all the constraint functions $\Phi^{a}$ are Casimir.\\
It is also remarkable that, using the statement \textit{1.} in Theorem~\ref{esteteorema23}, the nonholonomic bracket may be restricted to functions on $\Delta \times \RR$. Thus, from now on, we will refer to the nonholonomic bracket as the restriction of $\NHBr{\cdot , \cdot }$ to functions on $\Delta \times \RR$.

\subsection{Hamiltonian vector fields and integrability conditions}\label{quizalapenultimaseccion}

Until now, we have defined a structure given by a vector field $ \Reeb_{L,\Delta}$ and a bivector field $\Lambda_{L,\Delta} $ which induce the nonholonomic bracket \eqref{nonholbracket243}
\begin{equation}
  \NHBr{f,g} = \Lambda_{L,\Delta}(df, dg) - f \Reeb_{L,\Delta}(g) + g \Reeb_{L,\Delta}(f).
\end{equation}
This structure is quite similar to a Jacobi structure. In fact, we may prove the following result.\\
\begin{proposition}
The nonholonomic bracket endows the space of differentiable functions on $\Delta \times \RR$ with an almost Lie algebra structure~\cite{daSilva1999} which satisfies the generalized Leibniz rule
        \begin{equation}\label{eq:mod_leibniz_rulenonhol}
           \NHBr{f,gh} = g\NHBr{f,h} + h\NHBr{f,g} -  g h \Reeb_{L, \Delta}(h),
        \end{equation}
\end{proposition}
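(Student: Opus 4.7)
The plan is to unpack the definition
$$\NHBr{f,g} = \Lambda_{L,\Delta}(df,dg) - f\,\Reeb_{L,\Delta}(g) + g\,\Reeb_{L,\Delta}(f)$$
and verify the three defining properties of an almost Lie algebra structure (bilinearity, antisymmetry, Jacobi-less-ness) together with the modified Leibniz rule. Bilinearity is immediate because $\Lambda_{L,\Delta}$ is a bivector field (hence $C^\infty$-bilinear on $1$-forms), $d$ is $\RR$-linear, and $\Reeb_{L,\Delta}$ acts linearly on functions. Antisymmetry follows by inspection: the bivector term $\Lambda_{L,\Delta}(df,dg)$ flips sign under $(f,g)\mapsto (g,f)$ because $\Lambda_{L,\Delta}$ is skew, and the combination $-f\,\Reeb_{L,\Delta}(g)+g\,\Reeb_{L,\Delta}(f)$ is manifestly antisymmetric as well. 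Note that these two verifications take place pointwise on $\Delta \times \RR$, where both $\Lambda_{L,\Delta}$ and $\Reeb_{L,\Delta}$ were defined via the projectors $\mathcal P, \mathcal P_*$ coming from the splitting~\eqref{41}.

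Next, I would derive the modified Leibniz rule by straight computation. Using $d(gh)=g\,dh+h\,dg$ and $C^\infty$-bilinearity of $\Lambda_{L,\Delta}$ gives $\Lambda_{L,\Delta}(df,d(gh)) = g\,\Lambda_{L,\Delta}(df,dh) + h\,\Lambda_{L,\Delta}(df,dg)$. Since $\Reeb_{L,\Delta}$ is a vector field it is a derivation, so $\Reeb_{L,\Delta}(gh) = g\,\Reeb_{L,\Delta}(h)+h\,\Reeb_{L,\Delta}(g)$. Substituting and then subtracting the analogous expansions of $g\NHBr{f,h}$ and $h\NHBr{f,g}$, the two $\Lambda_{L,\Delta}$-terms and the two cross terms $fg\,\Reeb_{L,\Delta}(h)$, $fh\,\Reeb_{L,\Delta}(g)$ cancel exactly, leaving only a single residual term proportional to $gh\,\Reeb_{L,\Delta}(f)$, which gives the advertised generalized Leibniz rule. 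This argument is essentially the same one used in the proof of the analogous rule for genuine Jacobi brackets stated in the remark after the Jacobi bracket definition in Section~4.

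The only remaining point is to justify the qualifier \emph{almost}, that is, why the Jacobi identity need not hold. The obstacle is that $\Lambda_{L,\Delta}$ and $\Reeb_{L,\Delta}$ are obtained from the genuine Jacobi structure $(\Lambda_L, -\Reeb_L)$ by restricting to $\Delta\times\RR$ and pushing forward along the projector $\mathcal P$ onto $T(\Delta\times\RR)$. While such a projection manifestly preserves the skew-symmetric tensorial character and commutes with $d$ on functions (which is all that is needed for bilinearity, antisymmetry, and the Leibniz rule), it does not in general preserve the Schouten–Nijenhuis identities $[\Lambda_L,\Lambda_L]=2E\wedge\Lambda_L$ and $[E,\Lambda_L]=0$; the projected tensors only satisfy these identities up to curvature-type terms measuring the non-integrability of $\Delta$. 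Thus I expect the main subtlety is not the computation itself, which is mechanical, but observing that the proof goes through using only pointwise tensorial properties of $\Lambda_{L,\Delta}$ and the derivation property of $\Reeb_{L,\Delta}$, so that no appeal to Jacobi is needed and the argument correctly stops short of yielding a full Jacobi structure.
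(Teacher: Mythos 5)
Your proposal is correct, and since the paper (a survey) states this proposition without proof, your direct verification—skew-symmetry and $C^\infty$-bilinearity of the bivector $\Lambda_{L,\Delta}$ for antisymmetry and the $\Lambda$-terms, the derivation property of the vector field $\Reeb_{L,\Delta}$ for the zeroth-order terms, with no appeal to any Schouten--Nijenhuis identity—is exactly the standard argument one would supply. One small but useful observation: your computation produces the residual term $-gh\,\Reeb_{L,\Delta}(f)$, which shows that the argument $(h)$ appearing in the displayed formula \eqref{eq:mod_leibniz_rulenonhol} (and in the analogous remark in Section~4) is a typo for $(f)$; your proof is consistent and the corrected formula is the one your expansion actually yields.
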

So, as an obvious corollary we have that
\begin{corollary}
The vector field $ \Reeb_{L,\Delta}$ and the bivector field $\Lambda_{L,\Delta} $ induce a Jacobi stucture on $\Delta \times \RR$ if, and only if, the nonholonomic bracket satisfies the Jacobi identity.
\end{corollary}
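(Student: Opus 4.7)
The plan is to unfold the corollary by appealing to the Kirillov correspondence between Jacobi structures and local Lie algebras on $C^\infty$, which was recalled in Section~4. The preceding proposition already guarantees that $\NHBr{\cdot,\cdot}$ is bilinear, antisymmetric, and satisfies the generalized Leibniz rule~\eqref{eq:mod_leibniz_rulenonhol}; equivalently, the support of $\NHBr{f,g}$ is contained in $\mathrm{supp}(f)\cap \mathrm{supp}(g)$, so the bracket is local. Hence the only thing that can fail for $(\mathcal{C}^\infty(\Delta\times\RR),\NHBr{\cdot,\cdot})$ to be a local Lie algebra in the sense of Kirillov is the Jacobi identity, which is precisely the hypothesis of the statement.

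For the forward direction, I would assume that $(\Lambda_{L,\Delta},-\Reeb_{L,\Delta})$ is a Jacobi structure; then by the definition of a Jacobi manifold given in Section~4, the associated bracket
$$
\NHBr{f,g}=\Lambda_{L,\Delta}(df,dg) - f\Reeb_{L,\Delta}(g) + g\Reeb_{L,\Delta}(f)
$$
satisfies the Jacobi identity as a direct consequence of the Schouten-Nijenhuis conditions $[\Lambda_{L,\Delta},\Lambda_{L,\Delta}]=-2\,\Reeb_{L,\Delta}\wedge\Lambda_{L,\Delta}$ and $[\Reeb_{L,\Delta},\Lambda_{L,\Delta}]=0$. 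This is a standard computation included in the construction of the Jacobi bracket.

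For the converse direction, assume the Jacobi identity holds. Combined with the proposition, the bracket then defines a local Lie algebra structure on $\mathcal{C}^\infty(\Delta\times\RR)$. By the Kirillov correspondence quoted in Section~4 (``given a local Lie algebra $(C^\infty(M),\{\cdot,\cdot\})$, we can find a Jacobi structure on $M$ such that the Jacobi bracket coincides with the algebra bracket''), there exists a Jacobi structure $(\Lambda',E')$ on $\Delta\times\RR$ whose induced bracket is precisely $\NHBr{\cdot,\cdot}$. The tensors $\Lambda'$ and $E'$ are uniquely determined by the bracket via the formula $\NHBr{f,g}=\Lambda'(df,dg)+fE'(g)-gE'(f)$, because $E'(f)$ can be read off from $\NHBr{\cdot,\cdot}$ using the failure of the Leibniz rule (and using locally constant functions through a partition of unity argument), and then $\Lambda'$ is fixed by subtraction. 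But the defining formulas in Section~9.2 already show that $\Lambda_{L,\Delta}$ and $-\Reeb_{L,\Delta}$ satisfy this same identity. By uniqueness, $\Lambda'=\Lambda_{L,\Delta}$ and $E'=-\Reeb_{L,\Delta}$, so the pair $(\Lambda_{L,\Delta},-\Reeb_{L,\Delta})$ is a genuine Jacobi structure.

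The only delicate point — and in context it is minor rather than a real obstacle — is the uniqueness step in the converse: one needs to verify that the Kirillov reconstruction applied to $\NHBr{\cdot,\cdot}$ returns exactly the tensors $\Lambda_{L,\Delta}$ and $\Reeb_{L,\Delta}$ that were used to define it, rather than some other pair producing the same bracket. This is immediate from reading off $E'$ and $\Lambda'$ from the bracket formula, but it is what ties the abstract equivalence (Jacobi structure $\Leftrightarrow$ local Lie algebra) to the concrete tensors at hand.
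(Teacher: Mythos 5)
Your proof is correct and follows essentially the same route the paper intends: the preceding proposition already supplies bilinearity, antisymmetry and locality (via the generalized Leibniz rule), so the Jacobi identity is the only missing axiom, and the Kirillov correspondence together with the uniqueness of the tensors determined by a local Lie bracket identifies the reconstructed Jacobi structure with $(\Lambda_{L,\Delta}, -\Reeb_{L,\Delta})$. The paper presents this as immediate from the proposition; your write-up simply makes the uniqueness step explicit.
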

This result motivates the following definition.
\begin{definition}
Let $M$ be a manifold with a vector field $E$ and a bivector field $\Lambda$. The triple $(M,\Lambda,E)$ is said to be an \textit{almost Jacobi structure} if the pair $(\mathcal{C}^\infty(M),\{\cdot,\cdot\})$ is an almost Lie algebra satisfying the generalized Leibniz rule (\ref{eq:mod_leibniz_rulenonhol}) where the bracket is given by 
\begin{equation}\label{nonholbracket243}
  \{f,g\} = \Lambda(df, dg) + f E(g) - g E(f).
\end{equation}
\end{definition}
With this, the triple $\left( \Delta \times \RR ,  \Lambda_{L,\Delta}, -\Reeb_{L,\Delta}  \right)$ is an almost Jacobi structure. Of course, the study of the intrinsic properties of almost Jacobi structures on general manifolds has a great interest from the mathematical point of view. However, this could distract the reader from the main goal of this paper. So, here we will only focus on the necessary properties for our develoment.\\

Let $H$ be a Hamiltonian function on the contact manifold $\left(TQ \times \RR ,\eta_{L}\right)$. Then, we define the \emph{constrained Hamiltonian vector field} $X_H^{\Delta}$ by the equation
\begin{equation}\label{eq:nonholhamiltonian_vf_contact}
    X_H^{\Delta} = \sharp_{\Lambda_{L,\Delta}} \left(dH \right) - H \Reeb_{L, \Delta} .
\end{equation}
Then, by using \eqref{46} we have that the solution $\Gamma_{L, \Delta}$ of \eqref{6} is a particular case of constrained Hamiltonian vector field. In fact,
$$\Gamma_{L, \Delta} = X_{E_{L}}^{\Delta}.$$

As in the case without constraints, we have many equivalent ways of defining these vector fields.

\begin{proposition}\label{anotherpropomore23}
Let $H$ be a Hamiltonian function on $TQ \times \RR$. The following statements are equivalent:
\begin{itemize}
\item[(i)] $X_{H}^{\Delta}$ is the constrained Hamiltonian vector field of $H$.

\item[(ii)] It satisfies the following equation,
\begin{equation}\label{eq:nonholhamiltonian_vf_contactsecond}
    X_H^{\Delta} = \mathcal{P}\left( \sharp_{L}\left(\mathcal{P}^{*}dH \right)\right) - \left(\Reeb_{L,\Delta} \left(H\right) + H\right) \Reeb_{L,\Delta}.
\end{equation}
\item[(iii)] The following equation holds,
\begin{equation}\label{eq:nonholhamiltonian_vf_contactthird}
    X_{H}^{\Delta} = \mathcal{P}\left( X_{H}\right) - \mathcal{P}\left(   \sharp_{\Lambda_{L}}\left( \mathcal{Q}^{*}dH \right)  \right).
\end{equation}
\end{itemize}
\begin{proof}
Let $g$ a smooth function of $TQ \times \RR$. Then,
\begin{eqnarray*}
X_{H}^{\Delta} \left( g \right) &=&  \{\sharp_{\Lambda_{L,\Delta}} \left(dH \right) \}\left( g \right)- H \Reeb_{L, \Delta} \left( g \right)\\
&=&  \{\mathcal{P}\left( \sharp_{\Lambda_{L}} \left(\mathcal{P}^{*}dH \right)\right) \}\left( g \right) - H \Reeb_{L, \Delta}  \left( g \right)\\
&=&  \{\mathcal{P}\left( \sharp_{L} \left(\mathcal{P}^{*} dH \right) - \mathcal{P}^{*} dH \left( \Reeb_{L}\right) \Reeb_{L} \right)\}\left( g \right) - H \Reeb_{L, \Delta}  \left( g \right)\\
&=&  \{\mathcal{P}\left( \sharp_{L} \left(\mathcal{P}^{*} dH \right) - \Reeb_{L, \Delta}\left( H\right) \Reeb_{L, \Delta} \right) \}\left( g \right) - H \Reeb_{L, \Delta}  \left( g \right)\\
&=& \mathcal{P}\left( \sharp_{L}\left(dH \right)\right) \left( g \right)- \left(\Reeb_{L,\Delta} \left(H\right) + H\right) \Reeb_{L,\Delta} \left( g \right).
\end{eqnarray*}
This proves that $(i)$ is equivalent to $(ii)$. The equivalence between $(i)$ and $(iii)$ follows using the natural decomposition of $\sharp_{\Lambda_{L}}\left( dH \right)$ into $\sharp_{\Lambda_{L}}\left( \mathcal{P}^{*}dH \right)$ and $\sharp_{\Lambda_{L}}\left( \mathcal{Q}^{*}dH \right)$.
\end{proof}
\end{proposition}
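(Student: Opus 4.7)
The plan is to reduce both equivalences to a single structural identity:
\[
\sharp_{\Lambda_{L,\Delta}}(\alpha) \;=\; \mathcal{P}\bigl(\sharp_{\Lambda_L}(\mathcal{P}^{*}\alpha)\bigr)
\]
for every one-form $\alpha$ defined along $\Delta\times\RR$. This is essentially a reformulation of the defining relation $\Lambda_{L,\Delta}=\mathcal{P}_{*}\Lambda_L|_{\Delta\times\RR}$: for any test covector $\beta$,
\[
\beta\bigl(\sharp_{\Lambda_{L,\Delta}}(\alpha)\bigr)
=\Lambda_{L,\Delta}(\alpha,\beta)
=\Lambda_L(\mathcal{P}^{*}\alpha,\mathcal{P}^{*}\beta)
=\beta\bigl(\mathcal{P}(\sharp_{\Lambda_L}(\mathcal{P}^{*}\alpha))\bigr),
\]
and non-degeneracy of the pairing yields the claim. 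Note that both sides live in $T(\Delta\times\RR)$ since $\mathcal{P}$ projects onto that subbundle.

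For $(i)\Leftrightarrow(ii)$, I would feed $\alpha=dH$ into this key identity and combine it with the contact-to-Jacobi relation from Section~4, namely $\sharp_{\Lambda_L}(\alpha)=\sharp_{L}(\alpha)-\alpha(\Reeb_L)\Reeb_L$, applied to $\mathcal{P}^{*}dH$. Observing that $(\mathcal{P}^{*}dH)(\Reeb_L)=dH(\mathcal{P}\Reeb_L)=\Reeb_{L,\Delta}(H)$ and that $\mathcal{P}(\Reeb_L)=\Reeb_{L,\Delta}$, the definition in $(i)$ transforms into
\[
X_H^{\Delta}=\mathcal{P}(\sharp_L(\mathcal{P}^{*}dH))-\bigl(\Reeb_{L,\Delta}(H)+H\bigr)\Reeb_{L,\Delta},
\]
which is precisely $(ii)$. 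Each step is reversible, giving the converse.

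For $(i)\Leftrightarrow(iii)$, I would exploit the splitting~\eqref{43} to write $dH|_{\Delta\times\RR}=\mathcal{P}^{*}dH+\mathcal{Q}^{*}dH$. Using $\mathcal{P}(\Reeb_L)=\Reeb_{L,\Delta}$ and the key identity, starting from $(i)$ one has
\[
X_H^{\Delta}=\mathcal{P}\bigl(\sharp_{\Lambda_L}(\mathcal{P}^{*}dH)\bigr)-H\,\mathcal{P}(\Reeb_L)=\mathcal{P}\bigl(\sharp_{\Lambda_L}(dH)-H\Reeb_L\bigr)-\mathcal{P}\bigl(\sharp_{\Lambda_L}(\mathcal{Q}^{*}dH)\bigr),
\]
and since $X_H=\sharp_{\Lambda_L}(dH)-H\Reeb_L$ on the ambient contact manifold, the right-hand side is $\mathcal{P}(X_H)-\mathcal{P}(\sharp_{\Lambda_L}(\mathcal{Q}^{*}dH))$, which is $(iii)$; the reverse direction is immediate by running the equalities backwards.

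The main obstacle I anticipate is purely bookkeeping: keeping careful track of where covectors are lifted through $\mathcal{P}^{*}$, where vectors are projected through $\mathcal{P}$, and ensuring that the $\sharp$-maps and Reeb fields are restricted or projected to the correct space so that $\mathcal{P}(\Reeb_L)=\Reeb_{L,\Delta}$ and $\sharp_{\Lambda_{L,\Delta}}=\mathcal{P}\circ\sharp_{\Lambda_L}\circ\mathcal{P}^{*}$ can be applied unambiguously along $\Delta\times\RR$. Once the key identity is nailed down, all three formulations collapse into one another by direct algebraic substitution, with no deeper geometric input required.
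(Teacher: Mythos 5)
Your proof is correct and follows essentially the same route as the paper: both hinge on the identity $\sharp_{\Lambda_{L,\Delta}}=\mathcal{P}\circ\sharp_{\Lambda_L}\circ\mathcal{P}^{*}$ (a restatement of $\Lambda_{L,\Delta}=\mathcal{P}_{*}\Lambda_L$), the contact-to-Jacobi relation $\sharp_{\Lambda_L}\alpha=\sharp_L\alpha-\alpha(\Reeb_L)\Reeb_L$, and the splitting $dH=\mathcal{P}^{*}dH+\mathcal{Q}^{*}dH$ for part (iii). Your write-up is if anything slightly more explicit than the paper's (which leaves the key identity and the (i)$\Leftrightarrow$(iii) step as one-line remarks), but there is no substantive difference in approach.
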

Notice that the constrained Hamiltonian vector field $X_{H}^{\Delta}$ is just a vector field along the submanifold $\Delta \times \RR$.
\begin{corollary}
Let $H$ be a Hamiltonian function on $TQ \times \RR$. Then, it satisfies that
\begin{equation}\label{etaproyect34}
\eta_{L}(X^{\Delta}_H) = -H.
\end{equation}

\end{corollary}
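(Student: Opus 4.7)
The plan is to start from the alternative formula for $X_H^\Delta$ given in Proposition~\ref{anotherpropomore23}(ii),
\[
X_H^\Delta = \mathcal{P}\bigl(\sharp_L(\mathcal{P}^* dH)\bigr) - \bigl(\Reeb_{L,\Delta}(H) + H\bigr)\,\Reeb_{L,\Delta},
\]
and contract it with $\eta_L$. The key structural fact I would exploit first is that $\eta_L$ vanishes identically on the distribution $\mathcal{S}$: this follows from Remark~\ref{rem:S}, where $\mathcal{S}$ is shown to be horizontal. Since $\mathcal{Q}$ takes values in $\mathcal{S}$ and $\mathcal{P} = \mathrm{id} - \mathcal{Q}$ on $T_{\Delta\times \RR}(TQ\times \RR)$, this yields $\eta_L \circ \mathcal{P} = \eta_L$ on this bundle. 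In particular
\[
\eta_L(\Reeb_{L,\Delta}) = \eta_L\bigl(\mathcal{P}(\Reeb_L|_{\Delta\times \RR})\bigr) = \eta_L(\Reeb_L) = 1,
\]
using the definition \eqref{almostjac24} of $\Reeb_{L,\Delta}$.

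Next, I would evaluate $\eta_L\bigl(\sharp_L(\mathcal{P}^* dH)\bigr)$. Write $v = \sharp_L(\mathcal{P}^* dH)$, so $v$ is characterized by $\flat_L(v) = i_v\, d\eta_L + \eta_L(v)\,\eta_L = \mathcal{P}^* dH$. Contracting this identity with the Reeb vector field $\Reeb_L$ and using $i_{\Reeb_L} d\eta_L = 0$ together with $\eta_L(\Reeb_L) = 1$ gives
\[
\eta_L(v) = (\mathcal{P}^* dH)(\Reeb_L) = dH\bigl(\mathcal{P}(\Reeb_L)\bigr) = \Reeb_{L,\Delta}(H).
\]
Applying $\eta_L \circ \mathcal{P} = \eta_L$ once more, I obtain $\eta_L\bigl(\mathcal{P}(v)\bigr) = \eta_L(v) = \Reeb_{L,\Delta}(H)$.

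Finally, assembling the two pieces gives
\[
\eta_L(X_H^\Delta) = \Reeb_{L,\Delta}(H) - \bigl(\Reeb_{L,\Delta}(H) + H\bigr)\cdot 1 = -H,
\]
which is the desired identity. The only non-routine point, and the step I would present most carefully, is the observation that $\eta_L$ is insensitive to the projection $\mathcal{P}$, since this uses the decomposition \eqref{41} together with the horizontality of $\mathcal{S}$ recorded in Remark~\ref{rem:S}; everything else is a direct manipulation of the defining equations for $\flat_L$, $\sharp_L$ and $\Reeb_L$. I do not anticipate any real obstacle beyond keeping track of where each object is defined (along $\Delta\times\RR$ versus on all of $TQ\times\RR$).
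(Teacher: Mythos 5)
Your proof is correct. The paper states this corollary without an explicit proof, presenting it as an immediate consequence of Proposition~\ref{anotherpropomore23}, and your argument fills in exactly the kind of computation intended: the two essential inputs are $\eta_L(\mathcal{S})=0$ (Remark~\ref{rem:S}), which gives $\eta_L\circ\mathcal{P}=\eta_L$ and hence $\eta_L(\Reeb_{L,\Delta})=1$, and the contraction of $\flat_L(v)=\mathcal{P}^*dH$ with $\Reeb_L$ to get $\eta_L(v)=\Reeb_{L,\Delta}(H)$. One small remark: starting instead from formula (i) of the proposition, $X_H^\Delta=\sharp_{\Lambda_{L,\Delta}}(dH)-H\,\Reeb_{L,\Delta}$, gives a slightly shorter route, since $\sharp_{\Lambda_{L,\Delta}}(dH)=\mathcal{P}\bigl(\sharp_{\Lambda_L}(\mathcal{P}^*dH)\bigr)$ and the image of $\sharp_{\Lambda_L}$ already lies in $\ker\eta_L$ (the horizontal distribution), so the first term contributes $0$ outright and no cancellation of $\Reeb_{L,\Delta}(H)$ terms is needed. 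Your version via (ii) is equally valid; the cancellation you observe is exactly the discrepancy between $\sharp_L$ and $\sharp_{\Lambda_L}$.
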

As a consequence of this corollary we have that the correspondence $H \mapsto X_{H}^{\Delta}$ is, in fact, an isomorphism of vector spaces. By means of this isomorphism, we may prove the following result.
\begin{proposition}\label{yotramasss}
The nonholonomic bracket $\NHBr{\cdot , \cdot}$ satisfies the Jacobi identity if, and only if, 
$$ \left[ X^{\Delta}_{F}, X^{\Delta}_{G} \right] = X^{\Delta}_{ \NHBr{F,G}},$$
i.e., the correspondence $H \mapsto X^{\Delta}_{H}$ is an isomorphism of Lie algebras.
\end{proposition}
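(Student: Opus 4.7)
Since $H \mapsto X_H^\Delta$ is linear and the preceding corollary gives $\eta_L(X_H^\Delta) = -H$, the correspondence is automatically injective. The content of the proposition is therefore the equivalence between the Jacobi identity for $\NHBr{\cdot,\cdot}$ and the preservation identity $[X_F^\Delta, X_G^\Delta] = X_{\NHBr{F,G}}^\Delta$. My plan rests on two technical ingredients: the injectivity just noted, and the nonholonomic analog of the contact-geometric lemma of Section~7,
\[
\NHBr{f,g} = X_f^\Delta(g) + g\,\Reeb_{L,\Delta}(f),
\]
which one verifies by directly unpacking the definitions of $\NHBr{\cdot,\cdot}$ and $X_f^\Delta$.

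For the implication $(\Leftarrow)$, assume $[X_F^\Delta, X_G^\Delta] = X_{\NHBr{F,G}}^\Delta$ for all $F,G$. Applying the Jacobi identity for the commutator of vector fields to the triple $(X_F^\Delta, X_G^\Delta, X_H^\Delta)$ and substituting the hypothesis into each of the three resulting double commutators yields
\[
X_{\NHBr{\NHBr{F,G},H}}^\Delta + X_{\NHBr{\NHBr{G,H},F}}^\Delta + X_{\NHBr{\NHBr{H,F},G}}^\Delta = 0.
\]
By the injectivity noted above, the three functions themselves must sum to zero, which is precisely the Jacobi identity for $\NHBr{\cdot,\cdot}$.

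For $(\Rightarrow)$, assume that $\NHBr{\cdot,\cdot}$ satisfies Jacobi. I would apply $X_f^\Delta(u) = \NHBr{f,u} - u\,\Reeb_{L,\Delta}(f)$ twice to expand $[X_F^\Delta, X_G^\Delta](h) = X_F^\Delta(X_G^\Delta(h)) - X_G^\Delta(X_F^\Delta(h))$; after the cross-terms of the form $\NHBr{G,h}\Reeb_{L,\Delta}(F)$ and $h\,\Reeb_{L,\Delta}(F)\Reeb_{L,\Delta}(G)$ cancel pairwise, one is left with
\[
[X_F^\Delta, X_G^\Delta](h) = \NHBr{F,\NHBr{G,h}} - \NHBr{G,\NHBr{F,h}} + h\bigl[X_G^\Delta(\Reeb_{L,\Delta}(F)) - X_F^\Delta(\Reeb_{L,\Delta}(G))\bigr].
\]
The Jacobi identity rewrites the first two terms as $\NHBr{\NHBr{F,G},h}$. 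It then suffices to establish the compatibility
\[
\Reeb_{L,\Delta}(\NHBr{F,G}) = X_F^\Delta(\Reeb_{L,\Delta}(G)) - X_G^\Delta(\Reeb_{L,\Delta}(F)),
\]
since substituting it back collapses the expansion to $\NHBr{\NHBr{F,G},h} - h\,\Reeb_{L,\Delta}(\NHBr{F,G}) = X_{\NHBr{F,G}}^\Delta(h)$, valid for every test function $h$.

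The main obstacle, and the non-routine step, is recognizing that the compatibility just displayed is not an additional hypothesis but is actually encoded in Jacobi itself. The trick is to specialize Jacobi to the triple $(F, G, 1)$, using the easy observation that $\NHBr{f,1} = \Reeb_{L,\Delta}(f)$ (immediate from the definition of the nonholonomic bracket); together with the expansion $\NHBr{f,\Reeb_{L,\Delta}(g)} = X_f^\Delta(\Reeb_{L,\Delta}(g)) + \Reeb_{L,\Delta}(g)\Reeb_{L,\Delta}(f)$, the equation $\NHBr{F,\NHBr{G,1}} + \NHBr{G,\NHBr{1,F}} + \NHBr{1,\NHBr{F,G}} = 0$ reduces to exactly the required identity. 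Without this observation one would be tempted to impose an auxiliary Reeb-compatibility analogous to $\lieD{E}\Lambda = 0$ in the genuine Jacobi case; it is a pleasant feature of the setup that this comes free of charge from Jacobi.
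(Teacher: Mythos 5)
The paper states this proposition without proof (it is a survey; the result is quoted from the authors' research articles), so there is no in-text argument to compare against. Judged on its own terms, your proof is correct and complete. The backbone identity $\NHBr{f,g} = X_f^\Delta(g) + g\,\Reeb_{L,\Delta}(f)$ follows immediately from the definitions $\NHBr{f,g} = \Lambda_{L,\Delta}(df,dg) - f\Reeb_{L,\Delta}(g) + g\Reeb_{L,\Delta}(f)$ and $X_f^\Delta = \sharp_{\Lambda_{L,\Delta}}(df) - f\Reeb_{L,\Delta}$, exactly as in the unconstrained Lemma of the symmetries section; injectivity of $H\mapsto X_H^\Delta$ is the content of the corollary $\eta_L(X_H^\Delta) = -H$, which the paper itself invokes to call the map a vector-space isomorphism. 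The $(\Leftarrow)$ direction via the vector-field Jacobi identity plus injectivity is clean, and in the $(\Rightarrow)$ direction your cancellation bookkeeping checks out: the expansion does reduce to $\NHBr{\NHBr{F,G},h} + h\bigl[X_G^\Delta(\Reeb_{L,\Delta}(F)) - X_F^\Delta(\Reeb_{L,\Delta}(G))\bigr]$ once Jacobi is applied to the first two terms. The genuinely good observation is that the residual Reeb-compatibility $\Reeb_{L,\Delta}(\NHBr{F,G}) = X_F^\Delta(\Reeb_{L,\Delta}(G)) - X_G^\Delta(\Reeb_{L,\Delta}(F))$ is itself the instance of Jacobi on the triple $(F,G,1)$, using $\NHBr{f,1} = \Reeb_{L,\Delta}(f)$; this is the standard Jacobi-manifold device (constants are not central for a Jacobi bracket, so they encode the $E$-part), and it is exactly what makes the equivalence hold with no extra hypothesis. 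The only point worth flagging explicitly in a write-up is that the iterated brackets and the commutator $[X_F^\Delta, X_G^\Delta]$ are meaningful because $\Lambda_{L,\Delta}$ and $\Reeb_{L,\Delta}$ are obtained through the projector $\mathcal{P}$, so $X_H^\Delta$ is genuinely tangent to $\Delta\times\RR$ and the bracket restricts to $C^\infty(\Delta\times\RR)$, as the paper notes after Theorem~\ref{esteteorema23}; with that remark in place your argument stands as written.
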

We will now use this result to characterize an integrability condition on the constraint manifold.

\begin{theorem}
The constraint Lagrangian system $\left( L , \Delta \right)$ is semiholonomic if, and only if, the nonholonomic bracket satisfies the Jacobi identity.
\end{theorem}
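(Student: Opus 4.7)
The plan is to reduce the question to integrability of a distribution on $TQ \times \RR$, using Proposition~\ref{yotramasss}, which translates the Jacobi identity for $\NHBr{\cdot,\cdot}$ into the assertion that the map $H \mapsto X_H^\Delta$ is a Lie algebra homomorphism, i.e., $[X_F^\Delta, X_G^\Delta] = X_{\NHBr{F,G}}^\Delta$ for all $F, G \in C^\infty(TQ \times \RR)$. My strategy is to measure the failure of this identity as an obstruction that is controlled exactly by the involutivity of $\Delta^l$, and then to show that $\Delta^l$ is involutive on $TQ \times \RR$ if and only if $\Delta$ is involutive on $Q$.

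First, I would expand $[X_F^\Delta, X_G^\Delta]$ using the expression from Proposition~\ref{anotherpropomore23}(iii), namely $X_H^\Delta = \mathcal{P}(X_H) - \mathcal{P}(\sharp_{\Lambda_L}(\mathcal{Q}^*dH))$, where $X_H$ is the unconstrained contact Hamiltonian vector field associated to $H$ through $\eta_L$. Since $(TQ \times \RR, \eta_L)$ is a contact manifold, the underlying Jacobi structure $(\Lambda_L, -\Reeb_L)$ already satisfies $[X_F, X_G] = X_{\{F,G\}_L}$, so the obstruction $[X_F^\Delta, X_G^\Delta] - X_{\NHBr{F,G}}^\Delta$ reduces to commutators involving the correction terms $\mathcal{P}(\sharp_{\Lambda_L}(\mathcal{Q}^*dH))$ together with the failure of $\mathcal{P}$ to commute with Lie brackets. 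Both types of terms live in or are obtained via $\mathcal{S} = \sharp_L(\ann{\Delta^l})$ and in the local basis $\{Z_a\}$ given by \eqref{14}, so everything can be written in terms of the structure constants of the $Z_a$ and the failure of $T(\Delta \times \RR)$ to be preserved under brackets with sections of $\mathcal{S}$.

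Second, I would argue that this obstruction vanishes identically (for all pairs $F, G$) if and only if $\Delta^l$ is involutive. The key point is that, via $\flat_L$, the distribution $\mathcal{S}$ is dual to $\ann{\Delta^l}$, so brackets of Hamiltonian-like vector fields in $\mathcal{S}$ dualize to $d\tilde{\Phi}^a$ modulo $\ann{\Delta^l}$. The Frobenius condition $d\tilde{\Phi}^a \wedge \tilde{\Phi}^1 \wedge \cdots \wedge \tilde{\Phi}^k = 0$ for all $a$ is precisely what turns the obstruction off, because whenever it holds the brackets $[Z_a, Z_b]$ project trivially onto $T(\Delta \times \RR)$ under $\mathcal{P}$ and the residual correction terms from the non-Hamiltonian parts of $\mathcal{P}, \mathcal{Q}$ also cancel.

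Finally, I would note that since $\tilde{\Phi}^a = \Phi^a_i(q)\, dq^i$ only involves base coordinates on $Q$, the Frobenius integrability of $\Delta^l$ on $TQ \times \RR$ is equivalent to that of $\Delta$ on $Q$ itself, that is, to $\Delta$ being involutive, which is exactly the semiholonomic condition. Combining this with the equivalence between the Jacobi identity and the homomorphism property in Proposition~\ref{yotramasss} concludes the proof. The main obstacle will be the bookkeeping in the second step: expressing the discrepancy $[X_F^\Delta, X_G^\Delta] - X_{\NHBr{F,G}}^\Delta$ cleanly enough to isolate the terms proportional to $d\tilde{\Phi}^a(Z_b, Z_c)$-type expressions, which requires careful use of the identities $\flat_L(Z_a) = \tilde{\Phi}^a$, the decomposition \eqref{43}, and the fact that $\mathcal{Q}$-valued corrections remain in $\mathcal{S} \subseteq \Delta^l$.
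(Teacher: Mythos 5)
Your overall architecture (Jacobi identity $\Leftrightarrow$ the map $H \mapsto X_H^{\Delta}$ is a Lie algebra homomorphism $\Leftrightarrow$ involutivity of $\Delta^l$ $\Leftrightarrow$ involutivity of $\Delta$) matches the paper's, and your first and last links are sound: Proposition~\ref{yotramasss} gives the first, and since each $\tilde{\Phi}^a = \Phi^a_i(q)\,dq^i$ is a pullback from $Q$, the Frobenius conditions for $\Delta^l$ and for $\Delta$ are indeed equivalent. The problem is the middle link, which is the crux of the theorem and which you leave as an unexecuted ``bookkeeping'' computation pointed at the wrong terms. You propose to expand $[X_F^{\Delta}, X_G^{\Delta}] - X^{\Delta}_{\NHBr{F,G}}$ and to isolate an obstruction ``proportional to $d\tilde{\Phi}^a(Z_b,Z_c)$-type expressions.'' But the $Z_a = -W^{ik}\Phi^a_k\,\partial/\partial \dot{q}^i$ are purely vertical (fiber-direction) fields with no $dq$-pairing, while $d\tilde{\Phi}^a = \frac{\partial \Phi^a_i}{\partial q^j}\,dq^j\wedge dq^i$ only sees $\partial/\partial q$-components; hence $d\tilde{\Phi}^a(Z_b,Z_c) = 0$ identically, whether or not $\Delta$ is involutive. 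Likewise $\mathcal{P}([Z_a,Z_b])$ is insensitive to the involutivity of $\Delta$ (for $L=\tfrac12|\dot q|^2$ one even has $[Z_a,Z_b]=0$ for any $\Delta$). So the quantities you flag as carrying the obstruction cannot detect non-involutivity, and the actual identification of the obstruction is missing.

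The paper's route to the middle link avoids any obstruction computation and is worth internalizing: since $\mathcal{S}$ is vertical, $\mathcal{P}^*dq^i = dq^i$, hence $\mathcal{P}^*\tilde{\Phi}^a = \tilde{\Phi}^a$, and because $\mathcal{P}^*dH \in \ann{\mathcal{S}}$ one gets $\tilde{\Phi}^a(X_H^{\Delta}) = 0$ for every $H$ --- every constrained Hamiltonian vector field lies in $\Delta^l$. Then, using that $H \mapsto X_H^{\Delta}$ is a linear isomorphism (via $\eta_L(X_H^{\Delta}) = -H$) and choosing the functions $\Lambda_b$ induced by a local basis of $\Delta$ together with $z$, the fields $X^{\Delta}_{\Lambda_b}, X^{\Delta}_z$ form a local frame of $\Delta^l$. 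The homomorphism property then makes $\Delta^l$ closed under brackets at once, since $[X_F^{\Delta}, X_G^{\Delta}] = X^{\Delta}_{\NHBr{F,G}} \in \Delta^l$; no correction terms ever need to be tracked. The true carrier of the obstruction is thus $d\tilde{\Phi}^a$ evaluated on pairs of constrained Hamiltonian vector fields (which do have horizontal components), not on pairs of $Z_a$'s. Note also that your plan, like the paper's displayed argument, still owes the converse direction (semiholonomic $\Rightarrow$ Jacobi identity); asserting that ``the obstruction vanishes iff Frobenius holds'' does not discharge it without exhibiting the obstruction explicitly.
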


\begin{proof}
Let $\Phi^{a}$ be the constraint functions. Consider $\tilde{\Phi}^{a}=\Phi^{a}_{i}{dq}^{i}$ the associated $1-$forms generating $\ann{\Delta^{l}}$. Then,
$$\mathcal{P}^{*}\tilde{\Phi}^{a} = \tilde{\Phi}^{a}, \ \forall a.$$
This is a direct consequence of that $\mathcal{P}^{*}dq^{i} = dq^{i}$ for all $i$.
Let us fix $H \in \mathcal{C}^\infty(TQ \times \RR )$. Taking into account that $\mathcal{P}^{*} dH \in \ann{\mathcal{S}}$, we have that
\begin{eqnarray*}
0 & = & \mathcal{P}^{*} dH  \left(  Z_{a} \right)\\
&=&  \mathcal{P}^{*} dH  \left(  \sharp_{L} \left( \tilde{\Phi}^{a}\right) \right)\\
&=& \mathcal{P}^{*} dH  \left(  \sharp_{L} \left( \mathcal{P}^{*}\tilde{\Phi}^{a}\right) \right)
\end{eqnarray*}
Thus, we have that
$$ \tilde{\Phi}^{a} \left(  X_{H}^{\Delta} \right)= 0,$$
i.e., $X_{H}^{\Delta} \in \Delta^{l}$ for all $H \in \mathcal{C}^\infty(TQ \times \RR )$. Let be a (local) basis $\{ X_{b} = X_{b}^{i}\dfrac{\partial}{\partial q^{i}}\}$ of $\Delta$. Then, consider $\Lambda_{b}$ the local functions on $TQ \times \RR$ induced by the $1-$forms $X_{b}^{i}dq^{i}$. Hence, by taking into account that the correspondence $ H \mapsto X_{H}^{\Delta}$ is an isomorphism of vector spaces, we have that the family $\{ X_{\Lambda_{b}}^{\Delta}, X_{z}^{\Delta}\}$ is a (local) basis of $\Delta^{l}$ where $z$ is the natural projection of $TQ \times \RR$ onto $\RR$.\\
So, we have that the distribution $\Delta^{l}$ is involutive.\\
Consider now an arbitrary vector field $X$ on $Q$. Then, there exists a (local) vector field $X^{l}$ on $TQ \times \RR$ which is $\left(\tau_{Q} \circ pr_{TQ \times \RR }\right)-$related with $X$, i.e., the diagram

\begin{large}
\begin{center}
 \begin{tikzcd}[column sep=huge,row sep=huge]
TQ \times \mathbb{R}\arrow[d,"\tau_{Q} \circ pr_{TQ \times \RR }"] \arrow[r, "X^{l}"] &T \left( TQ \times \RR \right) \arrow[d, "T\left(\tau_{Q} \circ pr_{TQ \times \RR } \right)"] \\
 Q  \arrow[r,"X"] &TQ
\end{tikzcd}
\end{center}
\end{large}
is commutative. In fact, let us consider a (local) basis $\{\sigma^{i}\}$ of section of $ \tau_{Q} \circ pr_{TQ \times \RR }$. Then, we may construct $X^{l}$ as follows
$$ X^{l} \left(\lambda_{i} \sigma^{i}\left(q \right)\right) = \lambda_{i} T_{q}\sigma_{i} \left(X \left( q \right) \right),$$
for all $q$ in the domain of the basis. It is finally trivial to check that $X \in \Delta$ if, and only if, any $\left( \tau_{Q} \circ pr_{TQ \times \RR } \right)-$related vector field on $X^{l}$ on $TQ \times \RR$ $X^{l}$ with $X$ is in $\Delta^{l}$. Thus, $\Delta$ is also involutive and, therefore, integrable.

\end{proof}

Therefore, we have proved that the nonholonomic condition of the constraint Lagrangian system $\left(L, \Delta \right)$ may be checked by the Jacobi identity of the nonholonomic brackets.

\section{Other topics}

To avoid an excessive extension of the present survey, we will mention some topics
that we are not including here. We wil, give a brief description of some of them,
and refer to the references where the reader can find more information.

\begin{itemize}

\item {\bf Contact discrete dynamics}

In~\cite{Simoes2020a} the authors  introduce a discrete Herglotz Principle and the corresponding discrete Herglotz Equations for a discrete Lagrangian in the contact setting. This allows us to develop convenient numerical integrators for contact Lagrangian systems that are conformal contactomorphisms by construction. The existence of an exact Lagrangian function is also discussed. Some preliminary results have been discussed in~\cite{Vermeeren2019}, where a construction of variational integrators adapted to contact geometry has been started.

\item{\bf Uniform formalism}

In~\cite{deLeon2020}, the authors develop a unified geometric framework for describing both the Lagrangian and Hamiltonian formalisms of contact autonomous mechanical systems, which is based on the approach of the pioneering work of R. Skinner and R. Rusk~\cite{Skinner1983}. This framework permits to skip the second order differential equation problem, which is obtained as a part of the constraint algorithm (for singular or regular Lagrangians), and is specially useful to describe singular Lagrangian systems. Some examples are also discussed to illustrate the method.

\item {\bf Contact Optimal Control Theory}

In~\cite{deLeon2020a} the authors combine two main topics in mechanics and optimal control theory: contact Hamiltonian systems and Pontryagin Maximum Principle. As an important result, a contact Pontryagin Maximum Principle that permits to deal with optimal control problems with dissipation is developed. Also, the Herglotz optimal control problem is stated, in such a way that generalizes
simultaneously the Herglotz variational principle and an optimal control problem. Some applications to the study of a thermodynamic system are provided.

\item {\bf Existence of invariant measures}

An important topic in dynamical systems is the existence of invariant measures. In~\cite{Bravetti2020}
the authors prove that, under some natural conditions, Hamiltonian systems on a contact manifold $C$
can be split into a Reeb dynamics on an open subset of $C$ and a Liouville dynamics on a submanifold of 
$C$ of codimension 1. Thus, an invariant measure is obtained for the Reeb dynamics, and moreover,a under certain completeness conditions, the existence of an invariant measure for the Liouville dynamics can be characterized using the notion of a symplectic sandwich with contact bread developed in this paper.

\item{\bf Applications to thermodynamics}

In~\cite{Simoes2020b}, the authors, using the Jacobi structure associated with a contact structure, and the so-called evolution vector field, propose a new characterization of isolated thermodynamical systems with friction, a simple but important class of thermodynamical systems which naturally satisfy the first and second laws of thermodynamics, i.e. total energy preservation of isolated systems and non-decreasing total entropy, respectively. In addition, the qualitative dynamics is discussed. Moreover, the discrete gradient methods are applied to numerically integrate the evolution equations for these systems.

\item {\bf Contact higher order mechanics}

In~\cite{deLeon2020c} the authors present a complete theory of higher-order autonomous contact mechanics, which allows us to describe higher-order dynamical systems with dissipation. The essential tools for the theory are the extended higher-order tangent bundles, $T^kQ \times \mathbb{R}$, and its canonical geometric structures. This allow us to state the Lagrangian and Hamiltonian formalisms for these kinds of systems, as well as their variational formulation. In that paper, a unified description that encompasses the Lagrangian and Hamiltonian equations as well as their relationship through the Legendre map; all of them are obtained from the contact dynamical equations and the constraint algorithm that is implemented because, in this formalism, the dynamical systems are always singular. At The theory is applied to some interesting examples.

\item{\bf Classical Field theories with dissipation}

In a series of papers~\cite{Gaset2020,Gaset2020a}, the authors have developed a new geometric framework suitable for dealing with Hamiltonian field theories with dissipation. The geometric is the natural extension of $k$-symplectic structures, so instead to use $k$ copies of the canonical symplectic structure on the cotangent bundle $TM$, the authors consider $k$ copies of the natural contact structure on the extended cotangent bundle $T^*M \times \mathbb{R}$, obtaining the notions of $k$-contact structure and $k$-contact Hamiltonian system. The Lagrangian counterpart is also discussed and related to the Hamiltonian one.

\end{itemize}

\printbibliography

\end{document}